\documentclass[british]{scrartcl}
\usepackage[T1]{fontenc}
\usepackage[latin9]{inputenc}
\usepackage{float}
\usepackage{bm}
\usepackage{amsmath}
\usepackage{accents}
\usepackage{amssymb}
\usepackage{graphicx}
\usepackage{pdfpages}
\usepackage{esint}
\usepackage[authoryear]{natbib}
\usepackage{amsthm}
\usepackage{multirow}
\usepackage{url}
\usepackage{color}
\usepackage{enumitem}
\usepackage{hyperref}
\usepackage{caption}
\usepackage{tikz}
\usepackage{pdfsync}
\usepackage{subcaption}
\usepackage{todonotes}
\usetikzlibrary{positioning}
\usetikzlibrary{patterns}
\usetikzlibrary{shapes}
\usetikzlibrary{plotmarks}



\makeatletter

\floatstyle{ruled}
\newfloat{algorithm}{tbp}{loa}
\providecommand{\algorithmname}{Algorithm}
\floatname{algorithm}{\protect\algorithmname}

\usepackage{dsfont}\usepackage{algpseudocode}
\usepackage{mathtools}

\newcounter{rmq}[section]
\setcounter{rmq}{0}

\newcommand{\R}{\mathbb{R}}


\renewcommand{\P}{\mathbb{P}}
\newcommand{\E}{\mathbb{E}}
\newcommand{\var}{\mathrm{Var}}
\newcommand{\cov}{\mathrm{Cov}}




\newcommand{\ind}{\mathds{1}}
\newcommand{\dd}{\mathrm{d}}

\newcommand{\du}{\dd \mathrm{u}}

\newcommand{\eqdef}{:=} 
\newcommand{\bigO}{\mathcal{O}} 


\newcommand{\comment}[1]{ \ifthenelse{ \equal{\showcomment}{true} }{ {\bf #1} }{} }
\newcommand{\showcomment}{true}


\newtheorem{thm}{Theorem}
\newtheorem{prop}{Proposition}

\newtheorem{lemma}{Lemma}

\newtheorem{remark}{Remark}

\theoremstyle{plain}

\theoremstyle{plain}

\theoremstyle{plain}

\newcommand{\Ih}{\widehat{\mathcal{I}}} 
\newcommand{\Ihrk}[1]{\widehat{\mathcal{I}}_{#1,k}} 
\newcommand{\Itrk}[1]{\widetilde{\mathcal{I}}_{#1,k}}
\newcommand{\Ihork}[1]{\widehat{\mathcal{I}}^0_{#1,k}}

\newcommand{\Uk}{\mathcal{U}\left(\left[-\frac{1}{2k}, \frac{1}{2k}\right]^s\right)}
\newcommand{\Cx}[1]{\mathfrak{C}_{#1}}  
\newcommand{\Ck}{\Cx{k}}
\newcommand{\Cmk}{\Cx{m, k}}
\newcommand{\Cmrk}{\Cx{m_r, k}}
\newcommand{\Cinfk}{\mathfrak{C}_{\infty,k}}
\newcommand{\Cf}[1]{\mathcal{C}^{#1}\left([0, 1]^s\right)} 
\newcommand{\Cfo}[1]{\mathcal{C}^{#1}_0\left([0, 1]^s\right)} 

\let\oldReturn\Return
\renewcommand{\Return}{\State\oldReturn}


\newcommand{\reqstart}{
    \begin{list}{\thereqcount}{\usecounter{reqcount}}
    \setcounter{reqcount}{\value{reqcountbackup}}
}

\newcommand{\reqend}{
    \setcounter{reqcountbackup}{\value{reqcount}}
    \end{list}
}

  \newcommand\iid{\stackrel{\mathclap{\normalfont\mbox{\tiny{iid}}}}{\sim}}
   \newcommand\dist{\stackrel{\mathclap{\normalfont\mbox{\tiny{dist}}}}{=}}

\makeatletter
\newcommand*{\vneq}{%
  \mathrel{%
    \mathpalette\@vneq{=}%
  }%
}
\newcommand*{\@vneq}[2]{%
  \sbox0{\raisebox{\depth}{$#1\neq$}}%
  \sbox2{\raisebox{\depth}{$#1|\m@th$}}%
  \ifdim\ht2>\ht0 %
    \sbox2{\resizebox{\vneqxscale\width}{\vneqyscale\ht0}{\unhbox2}}%
  \fi
  \sbox2{$\m@th#1\vcenter{\copy2}$}%
  \ooalign{%
    \hfil\phantom{\copy2}\hfil\cr
    \hfil$#1#2\m@th$\hfil\cr
    \hfil\copy2\hfil\cr
  }%
}
\newcommand*{\vneqxscale}{1}
\newcommand*{\vneqyscale}{1}

\makeatother

\usepackage{babel}

\author{Nicolas Chopin$^{(1)}$, Mathieu Gerber$^{(2)}$
\\
\small{(1) ENSAE, Institut Polytechnique de Paris, Paris, France}
\\
\small{(2) School of Mathematics, University of Bristol, UK }
}

\begin{document}

\date{}
\title{Higher-order Monte Carlo through cubic stratification}

\maketitle

\date{}
\begin{abstract}
We propose two novel unbiased estimators of the integral
$\int_{[0,1]^{s}}f(u)\du$ for a function $f$, which depend on a smoothness
parameter $r\in\mathbb{N}$. The first estimator integrates exactly the
polynomials of degrees $p<r$ and  achieves the optimal error $n^{-1/2-r/s}$
(where $n$ is the number of evaluations of $f$) when $f$ is $r$ times
continuously differentiable. The second estimator  is also optimal in term of convergence rate and has the advantage to be computationally cheaper, but it is restricted to functions that vanish on the boundary of $[0,1]^s$. The construction of the two estimators relies on a combination of cubic stratification and control variates based on numerical derivatives.  We provide numerical evidence that they show good performance even for moderate values of $n$.
\end{abstract}

\section{Introduction\label{sec:intro}}

\subsection{Background}

This paper is concerned with the construction of unbiased estimators of the
integral $\mathcal{I}(f):=\int_{[0,1]^{s}}f(u)\du$ based on a certain number $n$
of evaluations of $f$. The motivation for this problem is well-known. Many
quantities of interest in applied mathematics may be expressed as such an
integral. Providing random, unbiased approximations present several practical
advantages. First, it greatly facilitates the assessment of the numerical error,
through repeated runs. Second, such independent estimates may be generated in
parallel, and then may be averaged to obtain a lower variance approximation of
$\mathcal{I}(f)$. Third, generating unbiased estimates as plug-in replacements
is of interest in various advanced Monte Carlo methodologies, such as
pseudo-marginal sampling \cite{Andrieu2009}, stochastic approximation
\cite{MR42668} and stochastic gradient descent. Finally, random integration
algorithms converge at a faster rate than deterministic ones  \cite{novak2006deterministic}
(but note that these convergence rates correspond to different criteria).

The most basic and well-known stochastic integration rule is the crude Monte
Carlo method, where one simulates uniformly $n$ independent and identically
distributed variates $U_{i}$, and returns $n^{-1}\sum_{i=1}^{n}f(U_{i})$ as an
estimate of $\mathcal{I}(f)$. Assuming that $f\in L_{2}([0,1]^s)$, the root mean
square error (RMSE) of this estimator converges to zero at rate $n^{-1/2}$. In
this paper we consider the problem of estimating $\mathcal{I}(f)$ under the
additional condition that all the partial derivatives of $f$ of order less or
equal to $r$ exist and are continuous, or, in short, that
$f\in\mathcal{C}^r([0,1]^s)$. Under this assumption on $f$ it is well-known that
we can improve upon the crude Monte Carlo error rate. More precisely, for
$f\in\mathcal{C}^r([0,1]^s)$ the optimal convergence rate for
the RMSE of an estimate $\Ih(f)$ of $\mathcal{I}(f)$ based on $n$
evaluations of $f$ is $n^{-1/2-r/s}$, in the sense that if
$g:\mathbb{N}\rightarrow[0,\infty)$ is such that 
\begin{equation*}
    \forall f\in \Cf{r}, n\geq 1,\quad \E\left[|\Ih(f)-\mathcal{I}(f)|^{2}\right]^{1/2}
\leq g(n) \|f\|_r
\end{equation*}
(where $\|f\|_r$ is a bound on the $r$-th order derivatives of $f$, see 
Section~\ref{sub:notTaylor} for a proper definition)
then we must have  $n^{-1/2-r/s}/g(n)=\bigO(1)$ (this result can for instance be obtained from Propositions 1-2 given in Section 2.2.4, page 55, of \cite{novak2006deterministic}).

Stochastic algorithms that achieve this optimal convergence rate for
$f\in\mathcal{C}^r([0,1]^s)$ have been proposed e.g.\ in
\cite {haber1966modified} for $r=1$ and in \cite{haber1967modified} for
$r\in\{1,2\}$. In \cite{haber1969stochastic} it is shown that if $\Ih(\cdot)$ is a
stochastic quadrature (SQ) of degree $r-1$, that is, if
$\E[\Ih(f)]=\mathcal{I}(f)$ for all $f\in L_{1}([0,1]^s)$ and
$\P(\Ih(f)=\mathcal{I}(f))=1$ if $f$ is a polynomial of degree $p<r$, then
$\Ih(\cdot)$ can be used to define an estimator of $\mathcal{I}(f)$ whose RMSE
converges to zero at rate $n^{-1/2-r/s}$ when $f\in\mathcal{C}^r([0,1]^s)$. In
\cite{haber1969stochastic} a formula for a SQ of degree $r-1$ is given for
$r\in\{3,4\}$ while, for $s=1$, \cite{siegel1985unbiased} provides a SQ of
degree $2r+1$ for all $r\geq 1$. For multivariate integration problems, and an
arbitrary value of $r\geq 1$, a SQ of degree $r-1$ can be constructed from the
integration method proposed in \cite{ermakov1960polynomial}. However, the
algorithm proposed in this reference requires to perform a sampling task which
is so computationally expensive that it is considered as almost intractable
\cite{patterson1987construction}.

A related approach is derived by Dick in \cite{dick2011higher}, which achieves rate
$\bigO(n^{-1/2-\alpha+\varepsilon})$ for $\varepsilon>0$ and a certain class of
functions indexed by $\alpha$ (which differs from $\Cf{r}$ even when $r=s\alpha$).
We will go back to this point and compare our approach to Dick's in our
numerical study.

\subsection{Motivation and plan}

The paper is structured as follows.
We introduce in Section~\ref{sec:main} an unbiased estimator
of $\mathcal{I}(f)$ which has the following three appealing
properties when $f\in\mathcal{C}^r([0,1]^s)$. First, its RMSE converges to zero
at the optimal $n^{-1/2-r/s}$ rate. Second, it integrates exactly $f$ if $f$ is
a polynomial of degree $p<r$. Third, for some constant $C<\infty$ and with
probability one, the absolute value of its estimation error is bounded by $C
n^{-r/s}$, where $n^{-r/s}$ is the optimal convergence rate for a deterministic
integration rule (this result can for instance be obtained from Proposition 1.3.5, page 28, of \cite{novak2006deterministic}). In addition, we establish a central limit theorem
(CLT) for a particular version of the proposed estimator. To the best of our
knowledge, a CLT for an estimator of $\mathcal{I}(f)$ having an RMSE that
converges at the optimal rate when $f\in\mathcal{C}^r([0,1]^s)$ exists only for
$r=1$ (see \cite{bardenet2020monte}).

In Section~\ref{sec:vanish}, we focus our attention on the estimation of
$\mathcal{I}(f)$ when $f\in\mathcal{C}^r_0([0,1]^s)$, where we define
$\mathcal{C}^r_0([0,1]^s)$ as the set of functions in $\mathcal{C}^r([0,1]^s)$
whose partial derivatives of order $o\leq r$ are all equal to zero on the
boundary of $[0,1]^s$. As we explain in that section, this set-up is
particularly relevant for solving integration problems on $\R^s$. Restricting
our attention to $\mathcal{C}^r_0([0,1]^s)\subset \mathcal{C}^r([0,1]^s)$
allows us to derive an estimator of $\mathcal{I}(f)$, referred to  as the
vanishing estimator in what follows, which is computationally cheaper  than the
previous estimator, while retaining its convergence properties, namely an RMSE
of size $\bigO(n^{-1/2-r/s})$  and an actual error of size $\bigO(n^{-r/s})$
almost surely. We note that these convergence rates   are optimal for
integrating a function in $\mathcal{C}^r_0([0,1]^s)$ (again, see Sections 1.3.5
and 2.2.4  of \cite{novak2006deterministic}) and that an algorithm considering
a similar class of functions is proposed in \cite{krieg2017universal}. The
algorithm derived in this latter reference has the advantage to achieve the
optimal aforementioned convergence rates for any $r\in\mathbb{N}$ but its
implementation at reasonable computational cost remains an open problem.

Section~\ref{sec:pract} discusses some practical details about the proposed
estimators, regarding on how their variance may be estimated  and how the order
of the vanishing estimator may be selected automatically. 
Section~\ref{sec:numerics} presents numerical experiments which confirm that
the estimators converge at the expected rates, and show  that they are already
practical for moderate values of $n$.  
Section~\ref{sec:conclusion} discusses future work. 
Proofs of certain technical lemmas are deferred to Appendix \ref{app-proofs}.

\subsection{Connection with function approximation}

As noted by e.g. \cite{novak2016some}, there is a strong connection between
(unbiased) integration and function approximation. If one is able to construct
an optimal approximation $\mathcal{A}_n(f)$ of $f\in\mathcal{C}^r{([0, 1]^s)}$,
that is $\|f-\mathcal{A}_n(f)\|_\infty=\bigO(n^{-r/s})$
(see \cite{novak2006deterministic}, page 36) then one may derive the following unbiased estimate of $\mathcal{I}(f)$
\begin{equation}
    \label{eq:estimate_func_approx}
  \Ih(f) :=
  \mathcal{I}\left(\mathcal{A}_n(f)
\right)+\frac{1}{n}\sum_{i=1}^n\left(f-\mathcal{A}_n(f)
\right)(U_i),\quad
  U_i\iid\mathcal{U}([0,1]^s)
\end{equation}
which is also optimal, in the sense that its RMSE is $\bigO(n^{-1/2-r/s})$ for estimating $\mathcal{I}(f)$.

The (non-vanishing) estimator  proposed in this paper for integrating a function
$f\in\mathcal{C}^r{([0, 1]^s)}$ is  to some extent  related to this idea, with
$\mathcal{A}_n(f)$ a piecewise polynomial approximation of $f$ based on local
Taylor expansions in which  the partial derivatives of $f$ are approximated
using numerical differentiation techniques. Note however that we use stratified
random variables, rather than independent and identically distributed ones.
This makes the estimator easier to compute,  and reduces its variance.

\subsection{Notation regarding derivatives and Taylor
expansions\label{sub:notTaylor}}

Let $\mathbb{N}$ be the set of positive integers, and
$\mathbb{N}_0=\mathbb{N}\cup\{0\}$. For $\alpha\in\mathbb{N}_0^s$, let
$|\alpha|_0=s-\sum_{j=1}^s\ind_{\{0\}}(\alpha_j)$,
$|\alpha|=\sum_{i=1}^s\alpha$, $\alpha!=\prod_{i=1}^s\alpha_i!$, $u^\alpha=
\prod_{i=1}^s u_i^{\alpha_i}$ for $u\in \R^s$. For
$g\in\mathcal{C}^{r}([0,1]^s)$ we let $D^\alpha g:[0,1]^s \rightarrow\R$ be
defined by
\begin{equation*}
D^\alpha g(u) = \frac{\partial^{|\alpha|}}{\partial u_1^{\alpha_1}\dots \partial u_s^{\alpha_s}}g(u),\quad u\in [0,1]^s,
\end{equation*}
with the convention $D^\alpha g=g$ if $|\alpha|=0$, and we let
$\|g\|_r\eqdef\max_{\alpha:\, |\alpha|=r}\|D^\alpha g\|_\infty$.

With this notation in place, we recall that if $g\in\mathcal{C}^{r}([0,1]^s)$
then, by Taylor's theorem, there exists a function
$R_{g,r}:[0,1]^s\times[0,1]^s\rightarrow\R$ such that (\cite{calculus}, Section 3.17,
page 191)
\begin{align}\label{eq:Taylor}
g(v)=\sum_{l=0}^{r-1}\sum_{\alpha:|\alpha|=l}(v-u)^\alpha\frac{D^\alpha g(u)}{\alpha!}+R_{g,r}(u,v),\quad\forall u,v\in[0,1]^s
\end{align}
where, for some $\tau_{u,v}\in[0,1]$,
\begin{align}\label{eq:Taylor2}
 R_{g,r}(u,v)=\sum_{\alpha:|\alpha|=r}\frac{D^\alpha g\left(u+\tau_{u,v}(v-u)
\right)}{\alpha !} \,  (v-u)^\alpha.
\end{align}

\subsection{Notation related to stratification}

Throughout the paper, $f:[0, 1]^s\rightarrow \R$ and $s\geq 1$. Our approach
relies on stratifying $[0, 1]^s$ into $k^s$ closed hyper-cubes, $k\geq 2$, and
performing a certain number $l$ of evaluations of $f$ inside each hyper-cube;
see Figure~\ref{fig:square}.
The total number of evaluations is therefore something like $n=lk^s$, but with a
value for $l$ that depends on the considered estimator and other parameters such
as $r$. Thus, we will index the proposed estimators by $k$,
e.g $\Ih_k(f)$ (or $\Ih_{r,k}(f)$ when it also depends on $r$) rather than $n$.
We will provide the exact expression of $n$ alongside the definition of the
considered estimator.

\begin{figure}[ht]
  \centering
  \includegraphics[scale=0.6]{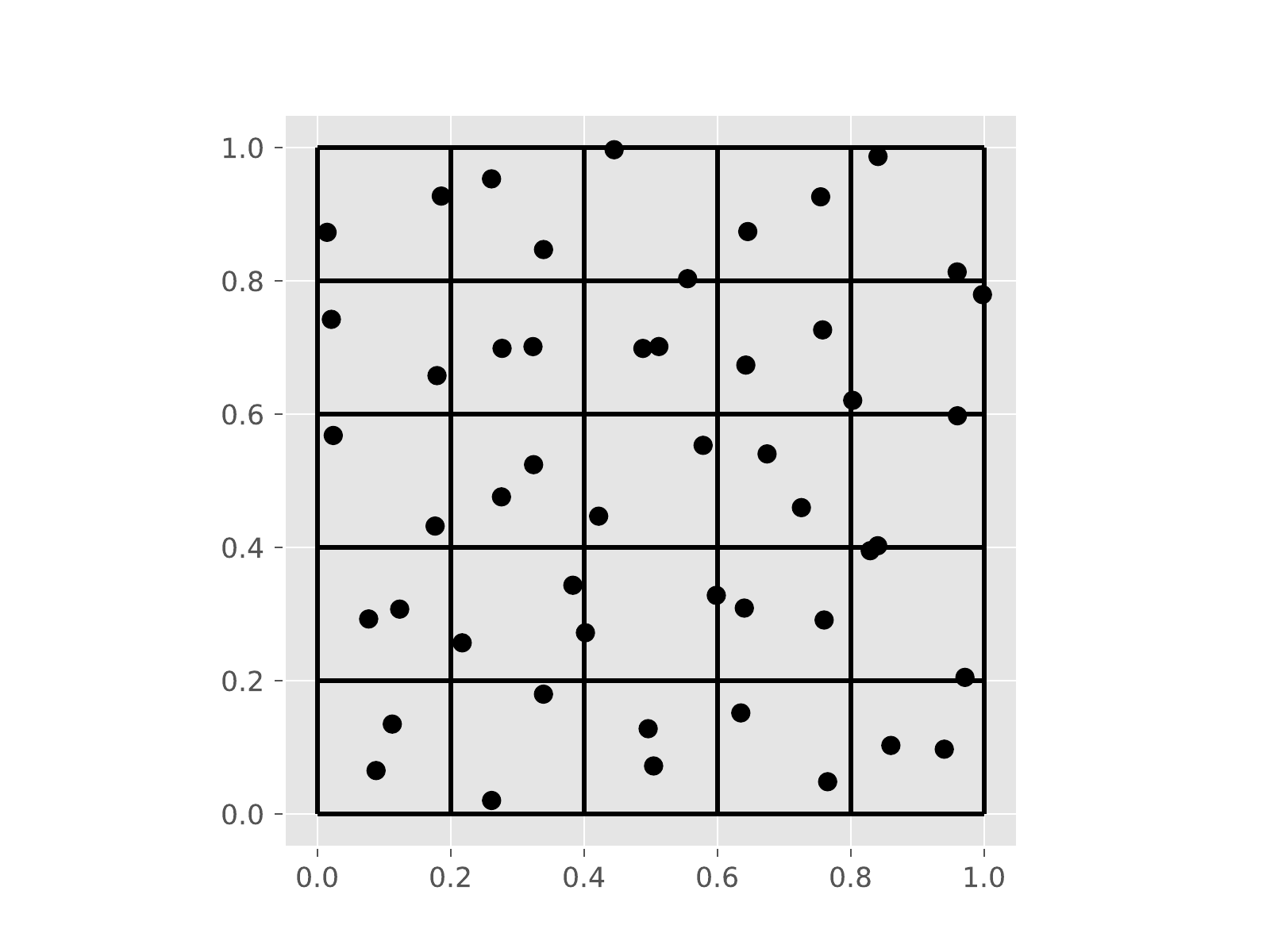}
  \caption{Stratification of $[0, 1]^s$ when $s=2$ and $k=5$, and two
    evaluations are performed in each of the $k^s=25$ squares. The location of
    the points are generated as in Haber's second estimator, which we discuss
    in Section~\ref{sub:haber}.
  \label{fig:square}}
\end{figure}

For $c\in\R^s$ and $k\geq 1$,  we use the short-hand $B_k(c)$ for the
hyper-cube  $\prod_{i=1}^s [c_i-1/2k, c_i+1/2k]$; in other words,
the ball with radius $1/2k$ and centre $r$ with respect to the maximum norm. 

For  $m\in\mathbb{N}_0$ let
\begin{equation}
    \label{eq:def_Cmk}
    \mathfrak{C}_{m,k}= \left\{\left( \frac{2j_1+1}{2k},\dots,
        \frac{2j_s+1}{2k}\right)\text{ s.t.
}\,(j_1,\dots,j_s)\in\{-m,\dots,k+m-1\}^s\right\}
\end{equation}
be the set of the centres of the $(k+2m)^s$ hypercubes $B_k(c)$ 
whose union is equal to the set $\mathcal{S}_{m,k}:=[-m/k,1+m/k]^s$.
In Section~\ref{sec:main}, we will set $m=0$ and recover the aforementioned
stratification; in that case, we will use the short-hand $\Ck\eqdef
\mathfrak{C}_{0,k}$.  However, in order to define the second (vanishing)
estimator in Section~\ref{sec:vanish}, we shall take $m\geq 0$.

To each $c\in\Cmk$ (with $m$, again, fixed and determined by the context), we
associate a random variable $U_c$ such that 
\begin{equation*}
    U_c \sim \mathcal{U}\left( \left[ - \frac{1}{2k}, \frac{1}{2k} \right]^s
    \right).
\end{equation*}
Notice that the support of $c+U_c$ is $B_k(c)$.

\section{Integration of functions in $\Cf{r}$\label{sec:main}}

\subsection{Preliminaries: Haber's estimators\label{sub:haber}}

In \cite{haber1966modified} Haber introduced the following estimator:
\begin{equation}
    \label{eq:Haber}
    \Ihrk{1}(f) \eqdef \frac{1}{k^s} \sum_{c\in\Ck} f(c+U_c),
    \qquad U_c \sim \Uk
\end{equation}
based on $n=k^s$ evaluations of $f$, which is optimal for $r=1$; i.e. its RMSE is
$\bigO(n^{-1/2-1/s})$ provided $f\in\mathcal{C}^1([0, 1]^s)$. To establish this
result, note that each term $f(c+U_c)$ has expectation $k^s\int_{B_k(c)} f(u)\dd u$ 
and variance $\bigO(n^{-2/s})$, since $|f(u)-f(v)| =
\bigO(k^{-1})=\bigO(n^{-1/s})$ for $u, v\in B_k(c)$. 

We note in passing that an alternative, and closely related, estimator may be
obtained by approximating $f$ with the   piecewise constant function $f_n$ defined by
\begin{equation*}
    f_n(u) = \sum_{c\in\Ck} f(c) \ind_{B_k(c)}(u),\quad u\in [0,1]^s
\end{equation*}
and using that particular $f_n$ in \eqref{eq:estimate_func_approx}. Since 
$\|f-f_n\|_\infty = \bigO(n^{-1/s})$ when $f\in\mathcal{C}^1([0,1]^s)$, this 
alternative estimator is indeed optimal for $r=1$. The estimator defined in \eqref{eq:Haber} is however slightly more convenient to compute, and relies on only $n$ evaluations (versus
$2n$ for the alternative estimator).

In \cite{haber1967modified} Haber introduced a second estimator:
\begin{equation}
    \label{eq:HaberII}
    \Ihrk{2}(f) \eqdef \frac{1}{k^s} \sum_{c\in\Ck} g_c(U_c),
    \qquad U_c \sim \Uk
\end{equation}
with $g_c(u) \eqdef \{f(c+u) + f(c-u)\} /2$ and $n=2k^s$,  which is optimal for
$f\in\mathcal{C}^2([0,1]^s)$. Note that $g_c$ is a symmetric function, thus 
its Taylor expansion at $0$ only includes even order terms: 
\begin{equation}
    \label{eq:Taylor_gc}
    g_c(u) = f(c) + \frac 1 2 u^T H_f(c) u + \bigO(k^{-4}),
    \quad \mbox{for } u\in B_k(c)
\end{equation}
where $H_f(c)$ denotes the Hessian matrix of $f$ at $c$. The term 
$g_c(U_c)$ has variance $\bigO(n^{-4/s})$ when $f\in\Cf{2}$, leading to an
$\bigO(n^{-1/2-2/s})$ RMSE for $\Ihrk{2}(f)$.

The estimators introduced in this paper have the same form as Haber's two
estimators; i.e. an average of terms $g_{r,c}(U_c)$, where $g_{r,c}(U_c) =
f(c) + \bigO(k^{-r})$ essentially.  To achieve this, we consider two
approaches: one based on control variates (this section), and another based on
combining more than two terms of the form $f(c+\lambda U_c)$ (Section
\ref{sec:vanish}).

\subsection{Control variates}


One simple way to improve on Haber's second estimator is to add a control
variate based on a Taylor expansion of $g_c$. To fix ideas, suppose that
$f\in\Cf{4}$, and add to each term $g_c(U_c)$ in~\eqref{eq:HaberII} the quantity 
\begin{equation*} 
- \frac 1 2 U_c^T H_f(c) U_c + \E\left[ \frac 1 2 U_c^T H_f(c) U_c \right]. 
\end{equation*}
This does not change
the overall expectation, since this extra term has zero mean, and, 
given~\eqref{eq:Taylor_gc}, it reduces the variance of each term to
$\bigO(n^{-8/s})$. 

More generally, for $r\geq 2$, let  $p_{c,r-1}$ be the polynomial function that
corresponds to the $(r-1)$-order Taylor expansion of $g_c$ at $0$, i.e. 
\eqref{eq:Taylor} with $g=g_c$ and $u=0$. Then, using \eqref{eq:Taylor} and
\eqref{eq:Taylor2}, we have 
\begin{align*}
    |g_c(u)-p_{c,r-1}(u)|\leq C\|f\|_r\|u\|^r,\quad\forall u\in 
    [1/2k,1-1/2k]^s,\quad\forall c\in\Ck
\end{align*} 
for some constant $C<\infty$ (which does not depend on $c$).  Letting
\begin{equation*}
    V_{r,k}(f) \eqdef
    -\frac{1}{k^s}\sum_{c\in\Ck}
    \left\{p_{c,r-1}(U_c)-\E[p_{c,r-1}(U_c)]\right\},
\end{equation*}
the variance of the estimator
$\mathcal{I}^{*}_{r, k}(f):=\Ihrk{2}(f)+V_{r,k}(f)$
is therefore such that
 \begin{align*}
   \var\left[  \mathcal{I}^{*}_k(f)\right]
  & =\frac{1}{k^{2s}}\sum_{c\in\Ck} \var\left[  g_c(U_c)-p_{c,r-1}(U_c)\right]\\
  & \leq  C^2 \|f\|_r^2 \times k^{-s-2r}\\
  & = C'\,\|f\|^2_r\,n^{-1-2r/s}, \qquad C'\eqdef C^2\times 2^{1+2r/s}.
\end{align*}

Since $\mathcal{I}^{*}_k(f)$ is an unbiased estimator of $\mathcal{I}(f)$,
its RMSE is $\bigO(n^{-1/2-r/s})$. Moreover, with probability one
$\mathcal{I}^{*}_k(f)=\mathcal{I}(f)$ if $f$ is a polynomial of degree $p<r$
since, in this case, $\|f\|_r=0$.

The main drawback of  estimator $\mathcal{I}^*_k(f)$ is that it requires to
compute and evaluate derivatives of $f$; that may be feasible in certain cases
(using for instance automatic differentiation, see \cite{MR3800512}). However,
it is generally simpler to have an estimator that relies only on evaluations of
$f$.  Surprisingly, and as shown in the following, higher-order difference
methods make it possible to replace, in the definition of $p_{c,r-1}$, the
partial derivatives of $f$ by numerical derivatives while preserving the
convergence rate of $\mathcal{I}^*_k(f)$  as well as its ability to integrate
exactly polynomials of degree $p<r$.


Higher-order difference methods are widely used in practice for numerical
differentiation. However it is surprisingly hard
to find a reference providing an explicit definition of an estimate
$\hat{D}^\alpha f$ of $D^\alpha f$ along with an explicit error bound
$e_f(s,r,|\alpha|)$ for the approximation error $\|\hat{D}^\alpha f-D^\alpha
f\|_\infty$. For this reason, in the next subsection we provide two results on
numerical differentiation based on higher-order difference methods before coming
back to the estimation of $\mathcal{I}(f)$ in the subsequent subsections.

\subsection{Numerical differentiation\label{sub:num_diff}}

The result in the following lemma can be used, for $s=1$, to compute an estimate
$\hat{D}^\alpha f$ of $D^\alpha f $ as well as to obtain an upper bound for
the approximation error. 

\begin{lemma}\label{lemma:num_deriv}
Let $g\in\mathcal{C}^{l}([0,1])$ for some integer $l\geq 2$,
$a\in\{1,\dots,l-1\}$ and $\kappa\in\R^l$ be a vector containing $l$ distinct
elements. Next, let $e^{(a)}\in\R^l$ be such that $e_{a+1}^{(a)}=a!$, $e_j^{(a)}=0$
for $j\neq (a+1)$, and let   
\begin{equation*}
w = A_\kappa^{-1} e^{(a)},
\quad A_\kappa=
\begin{pmatrix}
1&1&\hdots&1\\
\kappa_1&\kappa_2&\hdots&\kappa_{l}\\
\vdots&\vdots&\vdots&\vdots\\
\kappa_1^{l-1}&\kappa_2^{l-1}&\hdots&\kappa_{l}^{l-1}
\end{pmatrix}.
\end{equation*}
Let $x\in [0,1]$ and $h>0$ be such that $x+\kappa_j h\in(0,1)$ for all $j\in\{1,\dots,l\}$. Then,
\[
\left|\frac{\sum_{j=1}^l w_j g(x+\kappa_j h)}{h^{a}}-g^{(a)}(x)\right|
\leq h^{l-a} \|g\|_l\sum_{j=1}^l |w_j\kappa_j^l|.
\]
\end{lemma}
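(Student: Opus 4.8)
The plan is to read the defining linear system $A_\kappa w = e^{(a)}$ as a collection of moment (or ``order'') conditions on the weights $w$, and then to insert a Taylor expansion of $g$ at $x$ into the finite-difference combination so that these conditions annihilate every term except the one of interest. First I would note that $A_\kappa$ is the transpose of a Vandermonde matrix with nodes $\kappa_1,\dots,\kappa_l$; since these are distinct by hypothesis, $\det A_\kappa=\prod_{i<j}(\kappa_j-\kappa_i)\neq 0$, so $w=A_\kappa^{-1}e^{(a)}$ is well defined. Reading off row $i$ of $A_\kappa w=e^{(a)}$ and setting $m=i-1$ then gives, for every $m\in\{0,\dots,l-1\}$,
\begin{equation*}
\sum_{j=1}^l w_j\kappa_j^m = e^{(a)}_{m+1}=
\begin{cases} a! & \text{if } m=a,\\ 0 & \text{otherwise.}\end{cases}
\end{equation*}

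Next I would Taylor expand each evaluation $g(x+\kappa_j h)$ about $x$ using \eqref{eq:Taylor}--\eqref{eq:Taylor2} with $r=l$ and $s=1$: there is $\tau_j\in[0,1]$, and hence a point $\xi_j=x+\tau_j\kappa_j h$ lying between $x$ and $x+\kappa_j h$, such that
\begin{equation*}
g(x+\kappa_j h)=\sum_{m=0}^{l-1}\frac{(\kappa_j h)^m}{m!}\,g^{(m)}(x)+\frac{(\kappa_j h)^l}{l!}\,g^{(l)}(\xi_j).
\end{equation*}
Multiplying by $w_j$, summing over $j$, exchanging the two finite sums, and applying the moment conditions term by term collapses the double sum to the single surviving contribution $m=a$, namely $\frac{h^a}{a!}\,g^{(a)}(x)\sum_j w_j\kappa_j^a=h^a g^{(a)}(x)$. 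After dividing by $h^a$ this yields the exact identity
\begin{equation*}
\frac{\sum_{j=1}^l w_j g(x+\kappa_j h)}{h^a}-g^{(a)}(x)=\frac{h^{l-a}}{l!}\sum_{j=1}^l w_j\kappa_j^l\,g^{(l)}(\xi_j),
\end{equation*}
which is the heart of the argument.

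Finally I would bound the remainder. The domain condition $x+\kappa_j h\in(0,1)$ together with $x\in[0,1]$ forces each intermediate point $\xi_j$ into $[0,1]$, so $|g^{(l)}(\xi_j)|\leq\|g^{(l)}\|_\infty=\|g\|_l$ uniformly in $j$ (recall $\|g\|_l=\|g^{(l)}\|_\infty$ when $s=1$). The triangle inequality applied to the identity above then gives the stated bound, in fact with the harmless extra factor $1/l!\leq 1$ to spare. The only real bookkeeping point is ensuring that the remainder estimate is uniform over $j$ and that the intermediate points stay in the domain where $\|g\|_l$ controls $g^{(l)}$; everything else is the elementary cancellation produced by the moment conditions, which is precisely why the invertibility of $A_\kappa$ (equivalently, the distinctness of the $\kappa_j$) is the structural hypothesis doing the work.
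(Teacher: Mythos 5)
Your proof is correct and takes essentially the same route as the paper's: read off the moment conditions $\sum_{j=1}^l w_j\kappa_j^m=0$ for $m\neq a$ and $\sum_{j=1}^l w_j\kappa_j^a=a!$ from the Vandermonde system, insert the Taylor expansion of $g$ at $x$ with Lagrange remainder, and bound the surviving remainder term uniformly using $x,\,x+\kappa_j h\in[0,1]$. The only difference is cosmetic: you keep the $1/l!$ factor in the remainder, giving a slightly sharper bound than stated, while the paper simply absorbs it into the inequality.
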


\begin{remark}
The matrix $A_\kappa$ is invertible since $A_\kappa$ is a Vandermonde matrix
and $\kappa_j\neq \kappa_l$ for all $j\neq l$.
\end{remark}

\begin{proof}
By construction, $\{w_j\}_{j=1}^{l}$ is such that $\sum_{j=1}^{l}
w_j\kappa^i_j=0$ for all $i\in\{0,\dots,l-1\}\setminus\{a\}$ and such that
$\sum_{j=1}^l w_j\kappa^a_j=a!$. Therefore, using
\eqref{eq:Taylor}-\eqref{eq:Taylor2}, for some $\{\tau_j\}_{j=1}^l$ in $[-1,1]$
we have
\begin{align*}
\sum_{j=1}^l w_j g(x+\kappa_j h)
& =\sum_{i=0}^{l-1}h^i\frac{g^{(i)}(x)}{i!}\left(\sum_{j=1}^l w_j \kappa_j^i\right)
    +\sum_{j=1}^l w_j (\kappa_j h)^l g^{(l)}(x+\tau_j \kappa_j h)\\
& = h^a g^{(a)}(x)+h^l\left(\sum_{j=1}^l w_j\kappa^l_j\, g^{(l)}(x+\tau_j \kappa_j h)\right)
\end{align*}
and thus
\begin{align*}
  \left|g^{(a)}(x)-\frac{\sum_{j=1}^l w_j g(x+\kappa_j h)}{h^a}\right|
  &\leq h^{l-a}\left|\sum_{j=1}^l w_j\kappa_j^l\, g^{(l)}(x+\tau_j \kappa_j h)\right|\\
  & \leq h^{l-a}\|g^{(l)}\|_\infty \sum_{j=1}^l |w_j\kappa_j^l |.
\end{align*}
The proof is complete.
\end{proof}

\begin{remark}
Usually, one sets the $\kappa_j$'s to small integers; e.g.
$\kappa=(0, 1, 2)$ for $l=3$ and $a=2$ gives the well-known forward formula with
first-order accuracy: 
\[ \frac{g(x) - 2 g(x+h) + g(x + 2h)}{h^2} = g^{(2)}(x) + \bigO(h).\]
If one uses instead so-called central coefficients, e.g. $\kappa=(-1, 0, 1)$,
then one may actually get an extra order of accuracy:
\[ \frac{g(x - h) - 2 g(x) + g(x + h)}{h^2} = g^{(2)}(x) + \bigO(h^2)\] as one
can check from first principles. We stick to the general case to keep our
notations simpler, as it will not have an impact on our general results.
\end{remark}

In our case, we need to compute (multivariate) numerical derivatives of $f$ at
all the points $c\in\Ck$, simultaneously.  The previous lemma indicates that a
numerical derivative of $f$ at $c$ is a linear combination of terms of the form
$f(c+\kappa_j  h)$. If we take $h=1/k$, and $\kappa_j\in\mathbb{N}$, then
$(c+\kappa_j h) \in \Ck$ (unless $c+\kappa_j\not\in [0, 1]^s$, which should
happen if $c$ is too close to the boundary). This suggests the following
strategy: first, compute $f(c)$ for all $c\in \Ck$; then, for a given
$\alpha\in\mathbb{N}_0^s$, approximate $D^\alpha f$ at each $c\in\Ck$ by
computing appropriate linear combinations of these $f(c)$. 

The following lemma formalises this remark. We note in passing that this trick
seems to be already known; it is implemented for instance in the package
findiff of
\cite{findiff} (which we use in our numerical experiments, see
Section~\ref{sec:numerics}), although it seems rarely mentioned in books on
numerical analysis.

 \begin{lemma}\label{lemma:num_deriv2}
Let $r\geq 2$.  Then,  there exist  finite  constants $\{C_{i,s}\}_{i=1}^{r-1}$
and a finite set $\mathcal{W}_{r}$ of real numbers,  which does not depend on
$s$, for which the following holds. For $k\geq r$,   $c\in\Ck$,  $\alpha$ such
that $|\alpha|<r$ and  $l_{r,\alpha}:=\prod_{i=1}^{|\alpha|_0} (r-i+1)$
elements $\{c^{(q)}\}_{q=1}^{l_{r,\alpha}}$ of $\Ck$ such that
\begin{enumerate}
\item $\|c-c^{(q)}\|\leq (r-1)/k$ for all $q\in\{1,\dots,l_{s,\alpha}\}$,
\item for all $j\in\{1,\dots,s\}$, if $\alpha_j=0$ then $c^{(q)}_j= c_j$  for all $q\in\{1,\dots,l_{s,\alpha}\}$,
\item for all $j\in\{1,\dots,s\}$, if $\alpha_j\neq 0$ then $c_j^{(q)}\neq  c_j^{(q')}$ for all $q, q'\in\{1,\dots,l_{s,\alpha}\}$ such that $q\neq q'$,
\end{enumerate}
there exist real numbers $\{w_{j}\}_{j=1}^{l_{r,\alpha}}$ such that
\begin{itemize}
\item each $w_j$  is the product of  $|\alpha|_0$ elements of the set $\mathcal{W}_{r}$,
\item the set $\{w_j\}_{j=1}^{l_{r,\alpha}}$ depends on $c$ only through the set $\{ k(c^{(q)}-c)\}_{q=1}^{l_{r,\alpha}}$, and is therefore independent of $k$,
\item for all $f\in\mathcal{C}^r([0,1]^s)$ we have
$
\big|\widehat{D}^{\alpha}_{k}f(c)-D^{\alpha} f(c)\big|\leq C_{|\alpha|,s}\, \|f\|_r\, k^{-(r-|\alpha|)}
$, where 
\begin{align}\label{eq:deriv}
\widehat{D}^{\alpha}_{k}f(c)=k^{|\alpha|}\sum_{j=1}^{l_{r,\alpha}} w_{j} f(c^{(j)}).
\end{align}
\end{itemize}
\end{lemma}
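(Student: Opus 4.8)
The plan is to construct $\widehat{D}^{\alpha}_{k}f(c)$ as a tensor product of the one-dimensional finite-difference formulas provided by Lemma~\ref{lemma:num_deriv}, one factor per active coordinate (a coordinate $j$ with $\alpha_j\neq 0$), and then to control the error through a single multivariate Taylor expansion \eqref{eq:Taylor}--\eqref{eq:Taylor2} of $f$ at $u=c$. First I would read conditions 1--3 as saying that the $l_{r,\alpha}$ stencil points $c^{(q)}$ form the Cartesian product, over the $m:=|\alpha|_0$ active directions $j_1,\dots,j_m$, of node sets of sizes $L_1,\dots,L_m$ all lying within $(r-1)/k$ of $c$. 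I would order the active directions so that $a_1\geq\dots\geq a_m$, where $a_t:=\alpha_{j_t}$, and assign to direction $j_t$ a stencil with $L_t:=r-t+1$ nodes; since $\sum_t a_t=|\alpha|\leq r-1$ with each $a_t\geq 1$, a short computation gives $a_t\leq r-t=L_t-1$, so Lemma~\ref{lemma:num_deriv} applies in each direction and the node count is $\prod_{t=1}^m(r-t+1)=l_{r,\alpha}$, as required. The weight attached to a stencil point is the product of the $m$ one-dimensional weights; each of these solves a Vandermonde system $A_\kappa w=e^{(a)}$ with integer nodes $\kappa$ in $\{-(r-1),\dots,r-1\}$ and $a\leq r-1$, so it ranges over a finite set $\mathcal{W}_r$ depending only on $r$. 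This yields both the ``product of $|\alpha|_0$ elements of $\mathcal{W}_r$'' structure and, since $A_\kappa$ and $e^{(a)}$ involve only the integer offsets $k(c^{(q)}-c)$, the asserted independence from $k$.

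For the error I would substitute the degree-$(r-1)$ expansion \eqref{eq:Taylor} of $f$ around $c$ into $\widehat{D}^{\alpha}_{k}f(c)=k^{|\alpha|}\sum_q w_q f(c^{(q)})$. Writing $\nu^{(q)}:=k(c^{(q)}-c)\in\Z^s$, so that $(c^{(q)}-c)^\beta=k^{-|\beta|}(\nu^{(q)})^\beta$, each monomial contributes $k^{|\alpha|-|\beta|}(\beta!)^{-1}D^\beta f(c)\sum_q w_q(\nu^{(q)})^\beta$, and the tensor structure factorises the last sum as $\big(\prod_{j\notin\{j_1,\dots,j_m\}}0^{\beta_j}\big)\prod_{t=1}^m\big(\sum_i w^{(t)}_i(\kappa^{(t)}_i)^{\beta_{j_t}}\big)$. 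The one-dimensional moment identities underlying Lemma~\ref{lemma:num_deriv} state that the $t$-th factor equals $a_t!$ when $\beta_{j_t}=a_t$ and vanishes for every other $\beta_{j_t}\in\{0,\dots,L_t-1\}$; hence a term can survive only if $\beta$ is supported on the active directions and, for each $t$, either $\beta_{j_t}=a_t$ or $\beta_{j_t}\geq L_t$.

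The heart of the argument, and the step I expect to be the most delicate, is to show that inside the Taylor sum (where $|\beta|\leq r-1$) the only surviving term is $\beta=\alpha$, which reproduces $D^\alpha f(c)$ exactly because $\prod_t a_t!=\alpha!$. For any other surviving $\beta$ there is a direction $t_0$ with $\beta_{j_{t_0}}\geq L_{t_0}=r-t_0+1$, while the remaining active directions contribute $\sum_{t\neq t_0}\beta_{j_t}\geq\sum_{t\neq t_0}a_t\geq m-1\geq t_0-1$, so that $|\beta|\geq(r-t_0+1)+(t_0-1)=r$; this pushes every spurious term out of the range $|\beta|\leq r-1$. It is precisely the decreasing assignment $L_t=r-t+1$ that makes this count close, and getting this bookkeeping right is the main obstacle. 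We conclude that $\widehat{D}^{\alpha}_{k}f(c)-D^\alpha f(c)=k^{|\alpha|}\sum_q w_q R_{f,r}(c,c^{(q)})$.

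It then remains to bound the remainder. Using \eqref{eq:Taylor2} together with $|\nu^{(q)}_i|\leq r-1$, one gets $|R_{f,r}(c,c^{(q)})|\leq \|f\|_r\,k^{-r}(r-1)^r\sum_{|\beta|=r}(\beta!)^{-1}=\|f\|_r\,k^{-r}(r-1)^r s^r/r!$, whence $|\widehat{D}^{\alpha}_{k}f(c)-D^\alpha f(c)|\leq C_{|\alpha|,s}\|f\|_r\,k^{-(r-|\alpha|)}$ with $C_{|\alpha|,s}=(r-1)^r s^r (r!)^{-1}\sum_q|w_q|$. Here $\sum_q|w_q|=\prod_t\sum_i|w^{(t)}_i|$ is bounded in terms of $r$ and $m\leq s$ alone, since the one-dimensional weights lie in the finite set $\mathcal{W}_r$, so $C_{|\alpha|,s}$ depends only on $|\alpha|$ and $s$. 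Finally, the hypothesis $k\geq r$ ensures that nodes within $(r-1)/k$ of $c$ genuinely fit inside the grid $\Ck$ (using one-sided stencils near the boundary), so the construction is well defined throughout.
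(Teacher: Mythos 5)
Your proof is correct, but it takes a genuinely different route from the paper's. The paper argues by induction on $|\alpha|_0$: it applies the one-dimensional Lemma~\ref{lemma:num_deriv} to the univariate slice $c'\mapsto D^{\alpha'}f(c_1,\dots,c',\dots,c_s)$, where $\alpha'$ is $\alpha$ with one active coordinate zeroed out and the stencil has $r-|\alpha'|$ nodes, then replaces the inner exact derivatives $D^{\alpha'}f(c^{(j)})$ by the numerical versions supplied by the induction hypothesis and propagates the two error sources through the triangle inequality, yielding the recursion $C_{|\alpha|,s}=\tilde{C}_{r-|\alpha'|}(1+C_{|\alpha'|,s})$ and the same multiplicative weight structure you obtain. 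You instead build the full tensor-product stencil in one step and run a single multivariate Taylor expansion at $c$, reducing correctness to moment cancellation plus the combinatorial count that any surviving $\beta\neq\alpha$ has $|\beta|\geq r$. Your crux---ordering the active directions so that $a_1\geq\dots\geq a_m$ and assigning sizes $L_t=r-t+1$, so that both $a_t\leq L_t-1$ (applicability of the 1-D formula) and $L_{t_0}+(m-1)\geq r$ (spurious terms leave the Taylor range) hold---is sound: $a_t\leq r-t$ indeed follows from $t\,a_t\leq|\alpha|-(m-t)\leq r-1-m+t$, and the count $(r-t_0+1)+(t_0-1)=r$ is right. What each approach buys: the paper's induction uses the 1-D error bound as a black box, needs no multivariate moment bookkeeping, and permits nested, point-dependent (non-product) stencils, which is flexible near the boundary of $[0,1]^s$; your direct argument is shorter, gives an explicit constant, and makes transparent that the rule integrates the degree-$(r-1)$ Taylor polynomial exactly (the mechanism behind the paper's later exactness claims), at the price of the delicate decreasing size assignment. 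One caveat you share with the paper: condition 3 read literally (pairwise distinct $j$-th coordinates over all $l_{r,\alpha}$ points) is unsatisfiable once $|\alpha|_0\geq 2$, since only $2r-1$ grid coordinates of $\Ck$ lie within $(r-1)/k$ of $c_j$; your Cartesian-product stencil and the paper's recursive one both repeat coordinates, so, like the paper, you are in effect proving the intended existence statement for a constructed stencil rather than for arbitrary point sets satisfying conditions 1--3.
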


For what follows it is important to stress that, in~\eqref{eq:deriv}, the sets
$\{w_{j}\}_{j=1}^{l_{r,\alpha}}$ and   $\{c^{(j)}\}_{j=1}^{l_{r,\alpha}}$ are
independent of $f$  and
that the computational cost of computing these two sets is independent of $k$. We
also point out that, building on Lemma \ref{lemma:num_deriv}, the proof of
Lemma \ref{lemma:num_deriv2} is constructive and thus can be used in practice
to compute a numerical derivative $\widehat{D}^{\alpha}_{k}f(c)$ as defined in \eqref{eq:deriv}.

\subsection{Proposed estimator\label{sub:hat_I}}

Let $r\geq 3$,    $k\geq r$ and $f:[0, 1]^s \rightarrow \R$. Then, the  proposed estimator
 of $\mathcal{I}(f)$ is  
\begin{multline}
    \label{eq:estimator}
    \Ihrk{r}(f)  \eqdef \frac{1}{k^s} \sum_{c\in\Ck}
    \Bigg\{\frac{f(c+U_c)+f(c-U_c)}{2} \\
- \sum_{l=1}^{\lfloor (r-1)/2\rfloor}\sum_{\alpha:\,
|\alpha|=2l}\frac{\widehat{D}^{\alpha}_{k}f(c)}{\alpha!}
\left( U^\alpha_{c}- \prod_{j=1}^s d_{k}(\alpha_j)\right) \Bigg\}
\end{multline}
where the numerical derivatives $\widehat{D}^{\alpha}_{k}f(c)$'s are  as
in Lemma \ref{lemma:num_deriv2}  and 
\begin{align*}
d_{k}(i)\eqdef\E[V^i] = 
\begin{cases}
    \frac{1}{(i+1)(2k)^i} & \mbox{if $i$ is even,} \\
    0 & \mbox{otherwise,}
\end{cases}
\qquad\mbox{with }V\sim\mathcal{U}\left(\left[-\frac{1}{2k},
\frac{1}{2k}\right]\right).
\end{align*}

This estimator is based on $n=3k^s$ evaluations of $f$: two thirds at random
locations, and one third at deterministic locations (the $f(c)$'s for $c\in\Ck$
which are used to compute the derivatives).  Note that
$\widehat{\mathcal{I}}_{2q,k}(f)=\widehat{\mathcal{I}}_{2q-1,k}(f)$ for all
$q\geq 1$, and that only even-order derivatives appear in~\eqref{eq:estimator}, because
of the symmetry of function $g_c(u) = \{f(c+u) + f(c-u)\}/2$ (as 
explained before). 

The main drawback of the estimator $\widehat{\mathcal{I}}_{r,k}(f)$ is that its
computational cost increases quickly with $r$ and $s$.  We may rewrite the
second term of~\eqref{eq:estimator} as
\begin{equation*}
    \sum_{c\in\Ck} W_{r, c} f(c)
\end{equation*}
where the $W_{r,c}$'s are random weights that do not depend on $f$.  The 
number of partial derivatives of $f$ of order $|\alpha|$ being equal to 
$\binom{s+|\alpha|-1}{s-1}$, the number of operations required  to compute
these weights is:
\begin{equation*}
\bigO\left(s r^2 k^s
 \sum_{l=1}^{\lfloor (r-1)/2\rfloor} \binom{s+2l-1}{s-1}\right)
=\bigO( r^{s+3} k^s)
\end{equation*}
which is exponential in $s$.

On the other hand, since the $W_{r, c}$'s are independent of $f$, they may be
pre-computed, and re-used for several functions $f$. Alternatively, when $f$ is
expensive to compute, the cost of computing these $W_{r, c}$
will remain negligible (relative to the cost of the $n$ evaluations of $f$)
whenever $s$ and $r$ are not too high. See Section \ref{sub:Pima1}  for a
practical example where the function of interest $f$ is expensive to compute.

\subsection{An alternative estimator}

The estimator $\widehat{\mathcal{I}}_{r,k}(f)$ defined in \eqref{eq:estimator}
was obtained by 
adding control variates  to Haber's second estimator~\eqref{eq:HaberII}. 
By adding similar variates to his 
first estimator~\eqref{eq:Haber}, we obtain 
the following alternative estimator: 

\begin{equation}\label{eq:estimator_tilde}
\begin{split}
\widetilde{\mathcal{I}}_{r,k}(f)&:=\frac{1}{k^s}\sum_{c\in\Ck}\left(
f(c+U_c)-\sum_{l=1}^{r-1}\sum_{\alpha:\,
|\alpha|=l}\frac{\widehat{D}^{\alpha}_{k}f(c)}{\alpha!}\left(U_{c}^\alpha-
\prod_{j=1}^s d_{k}(\alpha_j)\right)\right)
\end{split}
\end{equation}
with the  derivatives $\widehat{D}^{\alpha}_{k}f(c)$'s   as in 
Lemma~\ref{lemma:num_deriv2}. 

The estimator $\Itrk{r}(f)$ has  the advantage of requiring only $n=2k^s$
evaluations of $f$, against $n=3k^s$ for  $\Ihrk{r}(f)$. In addition,
$\Itrk{r}(f)$ has a different expression for each value of $r$, while
$\Ihrk{2q}(f)=\Ihrk{2q-1}(f)$.

On the other hand, computing  $\Itrk{r}(f)$ is more expensive than
$\Ihrk{r}(f)$, since the former requires  to approximate all the partial
derivatives of $f$ of  order $|\alpha|<r$, while the latter necessitates  to
compute  only those having  an even order. 

In our numerical experiments, we implement only $\Ihrk{r}(f)$. But, for the
sake of completeness, we shall state the properties of both estimators in the
following section. 


\subsection{Error bounds\label{sub:bound}}

The error bounds presented in this subsection follow directly from the
following key lemma, whose proof is in   Appendix \ref{p-lemma:main}.

\begin{lemma}\label{lemma:main}
Let $f\in\mathcal{C}^r([0,1]^s)$ for some $r\geq 1$. Then 
there exists, for each $c\in\Ck$ and $k\geq r$, a function $h_{k,c}:[-1/2k,
1/2k]^s \rightarrow \R$ (which depends  implicitly on $r$) 
such that
\begin{equation*}
    \Ihrk{r}(f) - \mathcal{I}(f) = \frac{1}{k^s}\sum_{c\in\Ck} h_{k, c}(U_c)
\end{equation*}
and such that, for a constant  $\widehat{C}_{s,r}<\infty$ independent of $k$ and $f$,
\begin{equation*}
    \max_{c\in\Ck} \| h_{k, c}\|_\infty \leq \widehat{C}_{s,r} \|f\|_r k^{-r}.
\end{equation*}
This statement also holds for $\Itrk{r}(f)$.
\end{lemma}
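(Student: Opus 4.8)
The plan is to exhibit $h_{k,c}$ explicitly and then bound it by two triangle-inequality terms, a Taylor remainder and a numerical-differentiation error, each of which turns out to be $\bigO(\|f\|_r k^{-r})$. First I would set up the decomposition. Writing $g_c(u) = \{f(c+u)+f(c-u)\}/2$, the symmetrised part of \eqref{eq:estimator} is $g_c(U_c)$, and since $U_c$ and $-U_c$ share the same law and $U_c$ has independent coordinates uniform on $[-1/2k,1/2k]$, one has $\prod_{j=1}^s d_k(\alpha_j) = \E[U_c^\alpha]$ and $\mathcal{I}(f) = k^{-s}\sum_{c\in\Ck}\int_{B_k(c)} f = k^{-s}\sum_{c\in\Ck}\E[g_c(U_c)]$. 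Setting $\phi_{k,c}(u) := g_c(u) - \sum_{l=1}^{\lfloor(r-1)/2\rfloor}\sum_{|\alpha|=2l}\frac{\widehat{D}^\alpha_k f(c)}{\alpha!}u^\alpha$ and $h_{k,c}(u) := \phi_{k,c}(u) - \E[\phi_{k,c}(U_c)]$, a direct comparison of \eqref{eq:estimator} with this expression for $\mathcal{I}(f)$ gives $\Ihrk{r}(f) - \mathcal{I}(f) = k^{-s}\sum_{c\in\Ck} h_{k,c}(U_c)$, and $\E[h_{k,c}(U_c)]=0$ by construction. Since the constant $\phi_{k,c}(0)=f(c)$ cancels in $h_{k,c}$, we have $|h_{k,c}(u)| \le 2\sup_v |\phi_{k,c}(v) - \phi_{k,c}(0)|$, so the whole task reduces to bounding $\phi_{k,c}-f(c)$ on $[-1/2k,1/2k]^s$.

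Next I would insert the exact even Taylor polynomial $p_{k,c}(u) = f(c) + \sum_{l=1}^{\lfloor(r-1)/2\rfloor}\sum_{|\alpha|=2l}\frac{D^\alpha f(c)}{\alpha!}u^\alpha$ and split $\phi_{k,c}(u)-f(c) = [g_c(u)-p_{k,c}(u)] + \sum_{l\ge 1}\sum_{|\alpha|=2l}\frac{D^\alpha f(c)-\widehat{D}^\alpha_k f(c)}{\alpha!}u^\alpha$. Here I use that $g_c$ is even, so its Taylor expansion to order $r-1$ contains only even-order terms with $D^\alpha g_c(0)=D^\alpha f(c)$ for even $|\alpha|$; thus $p_{k,c}$ is genuinely the $(r-1)$-order Taylor polynomial of $g_c$ at $0$. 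The first bracket is controlled by \eqref{eq:Taylor}--\eqref{eq:Taylor2} applied to $g_c$, together with the uniform bound $\|g_c\|_r \le \|f\|_r$ (which follows from $D^\alpha g_c(u) = \{D^\alpha f(c+u) + (-1)^{|\alpha|}D^\alpha f(c-u)\}/2$), yielding $\|g_c-p_{k,c}\|_\infty \le c_{s,r}\|f\|_r k^{-r}$. For the second bracket I would invoke the bound $|D^\alpha f(c)-\widehat{D}^\alpha_k f(c)| \le C_{|\alpha|,s}\|f\|_r k^{-(r-|\alpha|)}$ from Lemma \ref{lemma:num_deriv2} and combine it with $|u^\alpha| \le (2k)^{-|\alpha|}$ on the cube. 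Summing over the finitely many $\alpha$ with $|\alpha|\le r-1$ then gives a bound of the same order $\bigO(\|f\|_r k^{-r})$; adding the two brackets and multiplying by $2$ produces the claimed $\widehat{C}_{s,r}$, which depends only on $s$ and $r$.

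The point I expect to be most delicate, and the heart of the argument, is the exponent bookkeeping in the second bracket: the numerical derivative of order $|\alpha|$ is only accurate to $\bigO(k^{-(r-|\alpha|)})$, which on its own is far worse than $k^{-r}$, but it is always paired with the monomial $u^\alpha$ of size $\bigO(k^{-|\alpha|})$, and the two exponents conspire to give exactly $k^{-r}$ uniformly in $\alpha$. Verifying this balance, and checking that all the constants ($\|g_c\|_r \le \|f\|_r$, the $C_{|\alpha|,s}$, and the $\alpha!$ and $2^{-|\alpha|}$ factors) are genuinely independent of $k$, $c$ and $f$, is the main thing to get right; once it is in place, the two error sources match the Taylor remainder and everything collapses to $k^{-r}$. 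For $\Itrk{r}(f)$ in \eqref{eq:estimator_tilde} the argument is identical after replacing $g_c$ by $f(c+\cdot)$ (no symmetrisation) and the even Taylor polynomial by the full $(r-1)$-order Taylor polynomial of $f$ at $c$; the same two estimates apply verbatim and give an $h_{k,c}$ with a bound of the same order.
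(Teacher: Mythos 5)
Your proposal is correct and follows essentially the same route as the paper's proof: the same centred per-stratum decomposition $h_{k,c}(u)=\bar h_{k,c}(u)-\E[\bar h_{k,c}(U_c)]$, the same Taylor-remainder bound via \eqref{eq:Taylor}--\eqref{eq:Taylor2}, the same invocation of Lemma~\ref{lemma:num_deriv2}, and the same exponent pairing $k^{-(r-|\alpha|)}\cdot k^{-|\alpha|}=k^{-r}$. The only (immaterial) difference is that you carry out the details for $\Ihrk{r}(f)$ (verifying $D^\alpha g_c(0)=D^\alpha f(c)$ for even $|\alpha|$, a point the paper leaves implicit by proving the $\Itrk{r}(f)$ case and declaring the other identical), and you handle the centring via the factor-2 bound $|h_{k,c}(u)|\leq 2\sup_v|\phi_{k,c}(v)-\phi_{k,c}(0)|$ where the paper bounds $\bigl|u^\alpha-\prod_{j=1}^s d_k(\alpha_j)\bigr|$ and the remainder terms directly.
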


The following theorem provides three types of error bounds for the estimators   $\widehat{\mathcal{I}}_{r,k}(f)$ and $\widetilde{\mathcal{I}}_{r,k}(f)$, namely  an error bound for the RMSE, an error bound that holds with  probability one and an error bound that holds with large probability. We recall that the number $n$ of evaluations  of $f$ is $n=3 k^s$ for the former estimator and $n=2k^s$ for the latter.

\begin{thm}\label{thm:main}
Let $f\in\mathcal{C}^r([0,1]^s)$ for some $r\geq 1$ and let
$\widehat{C}_{s,r}<\infty$ be as in Lemma \ref{lemma:main}. Then, for all   $k\geq r$,
\begin{enumerate}
\item\label{res1}  $\E\left[\Ihrk{r}(f)\right]=\mathcal{I}(f)$,
\item\label{res2}
    $\left[ \E \big|\Ihrk{r}(f)-\mathcal{I}(f)\big|^2\right]^{\frac{1}{2}}\leq 
    \widehat{C}_{s,r}\, \|f\|_r\, n^{-\frac{1}{2}-\frac{r}{s}}$,
\item $\P\left(\big|\widehat{\mathcal{I}}_{r,k}(f)-\mathcal{I}(f)\big|\leq
        \widehat{C}_{s,r}\, \|f\|_r\, n^{-\frac{r}{s}}\right) = 1$,
\item\label{res4} For all $\delta\in (0,1)$,  
\begin{equation*}
\P\left\{ \big|\widehat{\mathcal{I}}_{r,k}(f)-\mathcal{I}(f)\big|\leq
n^{-\frac{1}{2}-\frac{r}{s}}\,\widehat{C}_{s,r}\|f\|_r  
\sqrt{2 \log(2 /\delta)} \right\}\geq 1-\delta.
\end{equation*}
\end{enumerate}
The results given in \ref{res1}-\ref{res4} also hold with
$\widehat{\mathcal{I}}_{r,k}(f)$   replaced by
$\widetilde{\mathcal{I}}_{r,k}(f)$.
\end{thm}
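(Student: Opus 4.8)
The plan is to treat the whole theorem as a corollary of Lemma~\ref{lemma:main}. That lemma already supplies the decomposition $\Ihrk{r}(f)-\mathcal{I}(f)=k^{-s}\sum_{c\in\Ck}h_{k,c}(U_c)$ together with the uniform bound $\max_{c\in\Ck}\|h_{k,c}\|_\infty\leq\widehat{C}_{s,r}\|f\|_r k^{-r}$, so the only structural facts I would add are that the summands $h_{k,c}(U_c)$ are independent (because the $U_c$ are independent and $h_{k,c}$ depends on $U_c$ alone) and uniformly bounded. Writing $M\eqdef\widehat{C}_{s,r}\|f\|_r k^{-r}$ for the common sup bound and $\#\Ck=k^s$, I would first settle unbiasedness, then read off the three deviation bounds from elementary inequalities for sums of independent bounded random variables, and finally rewrite the powers of $k$ as powers of $n$.

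First I would establish part~\ref{res1} directly from \eqref{eq:estimator}. Two observations suffice. Since $U_c\eqdist -U_c$ and $U_c\sim\mathcal{U}([-1/2k,1/2k]^s)$ has constant density $k^s$ on $B_k(c)-c$, we get $\E[\{f(c+U_c)+f(c-U_c)\}/2]=k^s\int_{B_k(c)}f$; summing over the $k^s$ cubes $B_k(c)$, $c\in\Ck$, which tile $[0,1]^s$ up to a null set, returns $\mathcal{I}(f)$. Moreover each control-variate term is centred: by Lemma~\ref{lemma:num_deriv2} the numerical derivative $\widehat{D}^{\alpha}_{k}f(c)$ is a deterministic linear combination of the values $f(c^{(j)})$, hence independent of $U_c$, while $\E[U_c^\alpha]=\prod_{j=1}^s\E[V^{\alpha_j}]=\prod_{j=1}^s d_k(\alpha_j)$ holds by the very definition of $d_k$, so the bracket $U_c^\alpha-\prod_j d_k(\alpha_j)$ has zero mean. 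Combining these two facts gives part~\ref{res1} and, equivalently, $\E[k^{-s}\sum_c h_{k,c}(U_c)]=0$; note that I never need to control $\E[h_{k,c}(U_c)]$ term by term.

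With unbiasedness in hand the three bounds are immediate. For part~3 the triangle inequality gives, pointwise, $|\Ihrk{r}(f)-\mathcal{I}(f)|\leq k^{-s}\sum_{c\in\Ck}\|h_{k,c}\|_\infty\leq M$. For the RMSE, part~\ref{res1} identifies the mean-square error with the variance, and independence of the $U_c$ turns the variance of the sum into the sum of variances, so that $\E|\Ihrk{r}(f)-\mathcal{I}(f)|^2=k^{-2s}\sum_{c}\var[h_{k,c}(U_c)]\leq k^{-s}M^2$, using $\var[h_{k,c}(U_c)]\leq\|h_{k,c}\|_\infty^2\leq M^2$. For the high-probability bound I would apply Hoeffding's inequality to $S\eqdef k^{-s}\sum_c h_{k,c}(U_c)$, a sum of independent variables each confined to an interval of length at most $2M/k^s$ with $\E[S]=0$, obtaining $\P(|S|\geq u)\leq 2\exp\{-u^2 k^s/(2M^2)\}$; equating the right-hand side to $\delta$ and solving yields $u=M k^{-s/2}\sqrt{2\log(2/\delta)}$. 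It then remains to convert powers of $k$ using $n=3k^s$, i.e. $k^{-r}=3^{r/s}n^{-r/s}$ and $k^{-r-s/2}=3^{1/2+r/s}n^{-1/2-r/s}$, and to absorb the numerical factors into the constant; this turns the three displays above into parts~\ref{res2}--\ref{res4} as stated. Because Lemma~\ref{lemma:main} holds verbatim for $\Itrk{r}(f)$, the identical argument with $n=2k^s$ establishes the final sentence of the theorem.

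I do not expect a genuine obstacle inside the theorem itself: conditionally on Lemma~\ref{lemma:main}, each bound is a one-line application of independence together with a classical inequality. The two points that warrant care are checking that unbiasedness is \emph{exact}---which rests precisely on the identity $d_k(i)=\E[V^i]$ and on the numerical derivatives being $U_c$-measurable constants rather than random quantities---and tracking the Hoeffding constants (the range $2M/k^s$ and the factor $2$ under the logarithm) so that the exponent reproduces $\sqrt{2\log(2/\delta)}$ exactly. All the real difficulty is displaced into Lemma~\ref{lemma:main}, whose proof (a Taylor expansion of the symmetrised $g_c$ combined with the numerical-differentiation error bound of Lemma~\ref{lemma:num_deriv2}) is where the $k^{-r}$ rate is actually produced.
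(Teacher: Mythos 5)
Your proposal is correct and follows essentially the same route as the paper's own (very terse) proof: unbiasedness from the stratified symmetrised average plus exactly centred control variates (using $d_k(i)=\E[V^i]$ and the $U_c$-independence of the numerical derivatives), then all three error bounds read off from the decomposition and uniform bound of Lemma~\ref{lemma:main}, with Hoeffding's inequality for part~\ref{res4}. Your explicit tracking of the $3^{r/s}$-type factors in the $k\to n$ conversion (which the paper silently absorbs, even though the theorem nominally reuses the constant $\widehat{C}_{s,r}$ of Lemma~\ref{lemma:main}) is if anything slightly more careful than the published argument.
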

\begin{proof}
We have already mentioned that $\Ihrk{r}(f) = \Ihrk{2}(f) + \widehat{V}_{r,
k}(f)$, where $\Ihrk{2}(f)$ is unbiased \cite{haber1967modified}  and 
$\widehat{V}_{r, k}(f)$ has zero mean. (The same remarks apply to $\Itrk{k}(f)$.)
The second and third parts of the
theorem are direct consequences of Lemma \ref{lemma:main}  and the last part of
the theorem is a direct consequence of Lemma \ref{lemma:main} and of
Hoeffding's inequality.
\end{proof}

The second part of the theorem shows that $\widehat{\mathcal{I}}_{r,k}(f)$
and $\widetilde{\mathcal{I}}_{r,k}(f)$ are optimal  for integrating a function
$f\in\mathcal{C}^r([0,1]^s)$, in the sense that their RMSE converge to zero
at the optimal rate (see Section~\ref{sec:intro}). The third part of the
theorem states that each realization of the estimators achieves the optimal
convergence rate for a  deterministic algorithm (again, see Section
\ref{sec:intro}). The last part of the theorem shows that  the distribution of 
$\widehat{\mathcal{I}}_{r,k}(f)$ and  of $\widetilde{\mathcal{I}}_{r,k}(f)$ are
sub-Gaussian. Finally, and importantly,  Theorem \ref{thm:main} shows that for
any $k\geq r$ the estimators $\widehat{I}_{r,k}(f)$ and
$\widetilde{I}_{r,k}(f)$ are exact if $f$ is a polynomial of degree $p<r$.
Indeed,  if $f\in\mathcal{C}^r([0,1]^s)$ is a polynomial of degree $p<r$ then  
$\|f\|_r=0$ and thus, by Theorem \ref{thm:main},
\begin{equation*}
\P\left(\widehat{\mathcal{I}}_{r,k}(f)=\mathcal{I}(f)
\right)=1,\quad \P\left(\widetilde{\mathcal{I}}_{r,k}(f)=\mathcal{I}(f)
\right)=1,\quad\forall k\geq r.
\end{equation*}

\subsection{Central limit theorem}

The following lemma provides a sufficient condition for a central limit theorem
to hold for $\Ihrk{r}(f)$ and $\Itrk{r}(f)$.

\begin{lemma}\label{lemma:CLT}
Let $f\in\mathcal{C}^r([0,1]^s)$ for some $r\geq 1$ and assume that there
exists a sequence $(v_k)_{k\geq 1}$ such that $v_k\rightarrow\infty$ and such that
\begin{equation*}
\var\left(\widehat{\mathcal{I}}_{r,k}(f)\right)\geq  v_k k^{-2s-2r},\quad\forall k\geq r.
\end{equation*}
Then,  
\begin{equation*}
 \frac{\widehat{\mathcal{I}}_{r,k}(f)-\mathcal{I}(f)}{\sqrt{\var\left(\widehat{\mathcal{I}}
 _{r,k}(f)
\right)}}\Rightarrow \mathcal{N}_1(0,1).
\end{equation*}
This statement also holds with $\Ihrk{r}(f)$ replaced by $\Itrk{r}(f)$.
\end{lemma}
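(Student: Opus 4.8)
The plan is to recognise $\widehat{\mathcal{I}}_{r,k}(f)-\mathcal{I}(f)$ as a normalised sum of independent, uniformly bounded, mean-zero random variables, and then to verify the Lindeberg condition, the variance lower bound in the hypothesis being exactly what is needed to control the size of the largest summand.

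First I would invoke Lemma \ref{lemma:main} to write
\[
\widehat{\mathcal{I}}_{r,k}(f)-\mathcal{I}(f)=\frac{1}{k^s}\sum_{c\in\Ck}h_{k,c}(U_c),
\]
where the variables $U_c$, $c\in\Ck$, are independent and $\max_{c\in\Ck}\|h_{k,c}\|_\infty\le \widehat{C}_{s,r}\|f\|_r k^{-r}$. Setting $Z_{k,c}:=h_{k,c}(U_c)-\E[h_{k,c}(U_c)]$, the family $\{Z_{k,c}\}_{c\in\Ck}$ is, for each fixed $k$, a finite collection of independent mean-zero random variables bounded in absolute value by $M_k:=2\widehat{C}_{s,r}\|f\|_r k^{-r}$. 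Since $\widehat{\mathcal{I}}_{r,k}(f)$ is unbiased (part \ref{res1} of Theorem \ref{thm:main}), the centering constants $\E[h_{k,c}(U_c)]$ sum to zero, so that $\frac{1}{k^s}\sum_c h_{k,c}(U_c)=\frac{1}{k^s}\sum_c Z_{k,c}$. Writing $s_k^2:=\var\big(\sum_{c\in\Ck}Z_{k,c}\big)=k^{2s}\var\big(\widehat{\mathcal{I}}_{r,k}(f)\big)$ (using independence), the normalised statistic of interest is exactly $s_k^{-1}\sum_{c\in\Ck}Z_{k,c}$.

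Next I would appeal to the Lindeberg--Feller central limit theorem for triangular arrays, the $k$-th row being $\{Z_{k,c}\}_{c\in\Ck}$. It suffices to verify the Lindeberg condition: for every $\varepsilon>0$,
\[
\frac{1}{s_k^2}\sum_{c\in\Ck}\E\Big[Z_{k,c}^2\,\ind_{\{|Z_{k,c}|>\varepsilon s_k\}}\Big]\longrightarrow 0\quad\text{as }k\to\infty.
\]
Here the uniform bound $|Z_{k,c}|\le M_k$ does the work: whenever $M_k\le \varepsilon s_k$, every indicator vanishes and the whole sum is zero. Hence it is enough to show that $M_k/s_k\to 0$. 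The hypothesis $\var\big(\widehat{\mathcal{I}}_{r,k}(f)\big)\ge v_k k^{-2s-2r}$ gives $s_k^2\ge v_k k^{-2r}$, so that
\[
\frac{M_k}{s_k}\le \frac{2\widehat{C}_{s,r}\|f\|_r\,k^{-r}}{\sqrt{v_k}\,k^{-r}}=\frac{2\widehat{C}_{s,r}\|f\|_r}{\sqrt{v_k}}\longrightarrow 0\quad\text{as }k\to\infty,
\]
since $v_k\to\infty$. The Lindeberg condition therefore holds, and the Lindeberg--Feller theorem yields $s_k^{-1}\sum_c Z_{k,c}\Rightarrow\mathcal{N}_1(0,1)$, which is the claim. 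The identical argument applies verbatim to $\widetilde{\mathcal{I}}_{r,k}(f)$, since both Lemma \ref{lemma:main} and the unbiasedness in part \ref{res1} of Theorem \ref{thm:main} hold for it as well.

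The argument is essentially routine once the decomposition of Lemma \ref{lemma:main} is available; the only delicate point, which I expect to be the crux, is the role of the variance lower bound. Each summand has variance of order at most $k^{-2r}$ and there are $k^s$ of them, so without a matching lower bound on the total variance the largest stratum could fail to be negligible relative to the standard deviation and the CLT could break down. The condition $v_k\to\infty$ is precisely what guarantees $M_k/s_k\to 0$, i.e.\ that no single stratum dominates asymptotically.
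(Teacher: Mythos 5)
Your proof is correct and takes essentially the same approach as the paper's: both decompose the error via Lemma~\ref{lemma:main} into a sum of independent, uniformly bounded, mean-zero terms and verify the Lindeberg condition for the Lindeberg--Feller CLT, with the hypothesised lower bound $\var(\Ihrk{r}(f))\geq v_k k^{-2s-2r}$ and $v_k\to\infty$ used in exactly the same way to make every indicator in the Lindeberg sum vanish for large $k$. The only (harmless) difference is your extra centering $Z_{k,c}$, which is in fact redundant since the functions $h_{k,c}$ constructed in the proof of Lemma~\ref{lemma:main} already satisfy $\E[h_{k,c}(U_c)]=0$.
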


\begin{proof}[Proof of Lemma \ref{lemma:CLT}]

We prove only the result for $\widehat{\mathcal{I}} _{r,k}(f)$, the proof for
$\widetilde{\mathcal{I}} _{r,k}(f)$ being identical. 

Let $k\geq r$ and, for all $c\in\Ck$, let
\begin{equation*}
    X_{k,c}=\frac{1}{k^s}\, h_{k, c}(U_c)
\end{equation*}
with $h_{k, c}(U_c)$ as in Lemma~\ref{lemma:main}. Note that
$\widehat{\mathcal{I}}_{r,k}(f)-\mathcal{I}(f)=\sum_{c\in\Ck}X_{k,c}$
and that $\{X_{k,c}\}_{c\in \Ck}$ is a set independent random variables for
all $k\geq r$. Then, by Lindeberg-Feller central limit theorem
(see \cite{Bill}, Theorem 27.2, page 359) to prove the lemma  it is enough to
show that, as $k\rightarrow  \infty$, 
\begin{align}\label{eq:CLT}
\frac{1}{B_k^2}\sum_{c\in \Ck}\E\left[X_{k,c}^2\ind(X_{k,c}^2>\epsilon
B_k^2)\right]\rightarrow 0,\quad\forall \epsilon>0 
\end{align}
where $B_k= \var\left(\widehat{\mathcal{I}}_{r,k}(f)
\right)^{1/2}$ for all $k$.

To show \eqref{eq:CLT} remark first that, by Theorem \ref{thm:main} and under
the assumptions of the lemma, we have
\begin{align}\label{eq:in_B}
v_k k^{-2s-2r}\leq B_k^2\leq C_{f,r,s}    k^{-s-2r}
\end{align}
where $C_{f,r,s}=\widehat{C}^2_{s,r} \|f\|^2_r$  with
$\widehat{C}_{s,r}<\infty$ as in Theorem \ref{thm:main}.

Next, let $k\geq r$ and  $c\in\Ck$, and note that, by Lemma \ref{lemma:main},
\begin{align*}
    X_{k,c}^2 = k^{-2s}\,h_{k, c}(U_c)^2 
    \leq C_{f,s,r} k^{-2s-2r},\quad \P-a.s.
\end{align*}
which, together with \eqref{eq:in_B}, implies that for all $\epsilon>0$ and $\P$-a.s.\, we have
\begin{align*}
\ind(X_{k,c}^2>\epsilon B_k^2)&\leq \ind\left(C_{f,s,r}  k^{-2s-2r}>\epsilon B_k^2
\right)\\
&\leq \ind\left(C_{f,s,r} k^{-2s-2r}>\epsilon v_k k^{-2s-2r}
\right)\\
&= \ind\left(C_{f,s,r}>\epsilon v_k
\right).
\end{align*}
Since $v_k\rightarrow\infty$,  \eqref{eq:CLT} follows and the proof is complete.
\end{proof}

By Lemma \ref{lemma:CLT}, a CLT therefore  holds for
$\widehat{\mathcal{I}}_{r,k}(f)$ and   $\widetilde{\mathcal{I}}_{r,k}(f)$ if
the variance of these estimators does not converge to zero too quickly as
$k\rightarrow\infty$. Noting that the lower bound on the variances assumed in 
Lemma \ref{lemma:CLT} converges to zero much faster than the upper bound given
in Theorem \ref{thm:main} (part \ref{res2}), we conjecture that a CLT holds in
general for $\widehat{\mathcal{I}}_{r,k}(f)$ (and
$\widetilde{\mathcal{I}}_{r,k}(f)$). 

We are able to establish this conjecture provided that the numerical
derivatives are computed in the following way. We  introduce $p_{r, k}\eqdef
\lceil k/r\rceil^s$ hyper-cubes
$\tilde{B}_q$ of volume $(r/k)^s$, $q=1,\ldots,p_{r, k}$, such that $\cup_{q=1}^{p_{r, k}}
\tilde{B}_q = [0, 1]^s$, and let $\tilde{B}_q = \cup_{j=1}^{r^s} B_k(c_j^q)$ with
$\{c_j^q\}_{q=1}^{p_{r, k}}\subset \Ck$  such that the $B_k(c_j^q)$'s are contiguous. 
Then, to each  $c\in\Ck$  we assign a $q(c)$ such that $c\in B_{q(c)}$ and impose that the numerical derivatives at $c$ are computed using
only  points $c'\in B_{q(c)}$.  The following lemma establishes that this way of
computing numerical derivatives ensures that the condition in
Lemma~\ref{lemma:CLT} is fulfilled.

\begin{lemma}\label{lemma:var}
Let $f\in\mathcal{C}^r([0,1]^s)$ for some $r\geq 2$ and, for all $k\geq r$,
$\alpha$ such that $|\alpha|<r$ and $c\in\Ck$,  let
$\widehat{D}^{\alpha}_{k}f(c)$ be as defined in Lemma
\ref{lemma:num_deriv2} with $c_j\in B_{q(c)}$ for all $j\in\{1,\dots, l_{r,\alpha}\}$. In addition, for  all $\alpha$
such that $|\alpha|=r$ let $g_\alpha:[0,1]^s\rightarrow\R$ be defined by
$g_\alpha(u)=(-1/2+u)^\alpha$, $u\in[0,1]^s$, and let
\begin{align}\label{eq:clt_condition}
\widehat{\sigma}^2_{f,r}=r^{2r+s}\sum_{|\alpha|=r}\sum_{|\alpha'|=r}\frac{\cov\left(\,\widehat{\mathcal{I}}_{r,r}(g_\alpha), \widehat{\mathcal{I}}_{r,r}(g_{\alpha'}\,)
\right)}{\alpha!\alpha'!} \int_{[0,1]^s} D^\alpha f(u)\,D^{\alpha'} f(u)\dd u.
\end{align}
Then 
\begin{equation}\label{eq:lim_var}
\lim_{k\rightarrow\infty} \left\{k^{s+2r} \var\left(\widehat{\mathcal{I}} _{r,k}(f)
\right) \right\}= \widehat{\sigma}^2_{f,r}. 
\end{equation}
The same result holds if $\Ihrk{r}(f)$ is
replaced by $\Itrk{r}(f)$.
\end{lemma}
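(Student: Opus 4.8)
The plan is to exploit an exact self-similarity between $\widehat{\mathcal{I}}_{r,k}$ restricted to one of the blocks $\tilde B_q$ and the estimator $\widehat{\mathcal{I}}_{r,r}$ on the whole unit cube, and then to recognise the right-hand side of \eqref{eq:clt_condition} as the limit of a Riemann sum. First I would use Lemma~\ref{lemma:main} to write $\widehat{\mathcal{I}}_{r,k}(f)-\mathcal{I}(f)=k^{-s}\sum_{c\in\Ck}h_{k,c}(U_c)$ with independent summands, so that $\var(\widehat{\mathcal{I}}_{r,k}(f))=k^{-2s}\sum_{c\in\Ck}\var(h_{k,c}(U_c))$. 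I then group the cubes into the blocks $\tilde B_q$, $q=1,\dots,p_{r,k}$. Because the numerical derivatives at any $c$ are computed using only points of $\tilde B_{q(c)}$, and because $\{U_c\}$ is independent across $c$, the block contribution $k^{-s}\sum_{c\in\tilde B_q}h_{k,c}(U_c)$ depends only on $\{U_c\}_{c\in\tilde B_q}$, and distinct blocks are independent.

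The key step is to identify one block. For each block introduce the affine rescaling $z\mapsto a_q+(r/k)z$ sending $[0,1]^s$ onto $\tilde B_q$ (with $a_q$ the lower corner), and set $\tilde f_q(z)=f(a_q+(r/k)z)$. Under this change of variables one checks that $U_c$ scaled by $k/r$ is uniform on $[-1/2r,1/2r]^s$, that $d_{k}(i)=(r/k)^i d_{r}(i)$, and --- crucially --- that by the second bullet of Lemma~\ref{lemma:num_deriv2} the derivative weights depend on $c$ only through the integer offsets $k(c^{(q)}-c)$, which are invariant under the rescaling, so that $\widehat{D}^{\alpha}_{k}f(c)=(k/r)^{|\alpha|}\widehat{D}^{\alpha}_{r}\tilde f_q(\tilde c)$ with identical weights. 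Each per-cube bracket in \eqref{eq:estimator} at scale $k$ for $f$ then coincides \emph{exactly} with the corresponding bracket at scale $r$ for $\tilde f_q$, all scaling factors cancelling, and therefore
\[
k^{-s}\sum_{c\in\tilde B_q}h_{k,c}(U_c)=(r/k)^s\big\{\widehat{\mathcal{I}}_{r,r}(\tilde f_q)-\mathcal{I}(\tilde f_q)\big\},
\]
where $\widehat{\mathcal{I}}_{r,r}$ uses the randomness $\{(k/r)U_c\}_{c\in\tilde B_q}$.

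Next I Taylor-expand $\tilde f_q$ at the block centre $\tfrac12\mathbf 1$ to order $r$. Since $\widehat{\mathcal{I}}_{r,r}$ integrates every polynomial of degree $<r$ exactly (Theorem~\ref{thm:main}, as $\|\cdot\|_r$ vanishes there) and is linear, only the degree-$r$ remainder survives; freezing the $r$-th derivatives at $\bar c_q$, recalling $D^\beta\tilde f_q=(r/k)^rD^\beta f(a_q+(r/k)\cdot)$ and $g_\beta(z)=(z-\tfrac12\mathbf 1)^\beta$, the block contribution has leading term
\[
(r/k)^{s+r}\sum_{|\beta|=r}\frac{D^\beta f(\bar c_q)}{\beta!}\big(\widehat{\mathcal{I}}_{r,r}(g_\beta)-\mathcal{I}(g_\beta)\big),
\]
the remainder being bounded in $\|\cdot\|_r$, and hence almost surely via Lemma~\ref{lemma:main} at $k=r$, by $(r/k)^{s+r}$ times the modulus of continuity $\omega_f(r\sqrt s/k)$ of the $r$-th derivatives, uniformly in $q$. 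Writing $E_\beta^{(q)}:=\widehat{\mathcal{I}}_{r,r}(g_\beta)-\mathcal{I}(g_\beta)$ and using that the block construction is translation invariant, so $\cov(E_\beta^{(q)},E_{\beta'}^{(q)})=\cov(\widehat{\mathcal{I}}_{r,r}(g_\beta),\widehat{\mathcal{I}}_{r,r}(g_{\beta'}))$ for every $q$, I take variances, sum over the independent blocks (Cauchy--Schwarz handling the main$\times$remainder cross term), and obtain
\[
k^{s+2r}\var\big(\widehat{\mathcal{I}}_{r,k}(f)\big)=r^{2s+2r}k^{-s}\sum_{q}\sum_{|\beta|=r}\sum_{|\beta'|=r}\frac{\cov\big(\widehat{\mathcal{I}}_{r,r}(g_\beta),\widehat{\mathcal{I}}_{r,r}(g_{\beta'})\big)}{\beta!\,\beta'!}\,D^\beta f(\bar c_q)\,D^{\beta'}f(\bar c_q)+o(1).
\]
Since each block has volume $(r/k)^s$, $k^{-s}\sum_{q}D^\beta f(\bar c_q)D^{\beta'}f(\bar c_q)=r^{-s}(r/k)^s\sum_{q}D^\beta f(\bar c_q)D^{\beta'}f(\bar c_q)\to r^{-s}\int_{[0,1]^s}D^\beta f\,D^{\beta'}f$ by convergence of Riemann sums of the continuous integrand, producing the factor $r^{2s+2r}\cdot r^{-s}=r^{2r+s}$ and exactly \eqref{eq:lim_var}--\eqref{eq:clt_condition}; the same argument applies verbatim to $\Itrk{r}$.

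The hard part will be the two verifications inside Steps~2 and~3: that the block-confined differentiation scheme is genuinely self-similar, so that every interior block uses, up to rescaling, precisely the stencil of $\widehat{\mathcal{I}}_{r,r}$ and the covariance constants $\cov(\widehat{\mathcal{I}}_{r,r}(g_\beta),\widehat{\mathcal{I}}_{r,r}(g_{\beta'}))$ are common to all such blocks; and that the frozen-derivative remainder is controlled uniformly in $q$ so that, after multiplication by $k^{s+2r}$, its total contribution is $o(1)$ (this is where uniform continuity of the $r$-th derivatives of $f$ on the compact $[0,1]^s$ enters). Non-divisibility of $k$ by $r$ and the resulting $O(k^{s-1})$ boundary blocks affect only a vanishing fraction of the $k^s$ cubes and are thus negligible in the limit.
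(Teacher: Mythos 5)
Your proposal is correct and follows essentially the same route as the paper's proof: block decomposition into independent copies rescaled to $\widehat{\mathcal{I}}_{r,r}(f_{\tilde{B}_q})$ (the self-similarity you verify in detail is exactly the paper's distributional identity, enabled by the block-confined stencils), Taylor expansion at the block centres with exact integration of the degree-$(r-1)$ part, translation-invariant covariances $\cov\bigl(\widehat{\mathcal{I}}_{r,r}(g_\alpha),\widehat{\mathcal{I}}_{r,r}(g_{\alpha'})\bigr)$ (the paper's $\E[M_\alpha M_{\alpha'}]$), a Riemann-sum limit for the main term, and uniform-continuity control of the frozen-derivative remainder and cross terms, with the $O(k^{s-1})$ leftover cubes handled via the Lemma~\ref{lemma:main} variance bound just as the paper does for $E_{r,k}$. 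No gaps.
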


To understand why the numerical derivatives  assumed in Lemma~\ref{lemma:var}
are convenient to show that \eqref{eq:lim_var} holds, let 
$[a,b]\subset[0,1]^s$ and  $f_{[a,b]}:[0,1]^s\rightarrow\R$ be defined by 
\begin{equation*}
    f_{[a,b]}(u)=f(a+u(b-a)),\quad u\in [0,1]^s
\end{equation*}
where, for all $u$, the product $u(b-a)$ must be understood as being
component-wise. In addition, assume that $k=mr$ for some integer $m\geq 1$, so
that the set $[0,1]^s$ can be covered by $m^s$ hypercubes
$\{\tilde{B}_q\}_{q=1}^{m^s}$ of volume $m^{-s}$. Then,   under the assumptions
on the $\widehat{D}^{\alpha}_{k}f(c)$'s made in Lemma \ref{lemma:var},
\begin{equation*}
    \widehat{\mathcal{I}} _{r,mr}(f)
    \dist \sum_{q=1}^{m^s} \widehat{\mathcal{I}} _{r,mr}(f\ind_{\tilde{B}_q})
    \dist \frac{1}{m^s}\sum_{q=1}^{m^s} \widehat{\mathcal{I}}_{r,r}(f_{\tilde{B}_q})
\end{equation*}
where the terms of the sum are independent random variables.  Since
$\widehat{\mathcal{I}}_{r,r}$ is a stochastic quadrature of degree $r-1$, it
follows from \cite[Theorem 2]{haber1969stochastic} that
\begin{equation*}
    \lim_{m\rightarrow\infty}\var\left\{ (mr)^{s/2+r} \widehat{\mathcal{I}} _{r,mr}(f)\right\}
    = \widehat{\sigma}^2_{f,r}.
\end{equation*}
Lemma \ref{lemma:var} extends this result to the case where $k$ is
not a multiple of $r$.

Combining Lemma \ref{lemma:CLT} and Lemma \ref{lemma:var} we readily obtain 
the following result.

\begin{thm}\label{thm:clt}
Let $f\in\mathcal{C}^r([0,1]^s)$ for some $r\geq 2$ and assume that, for all
$k\geq r$, $c\in\Ck$ and $\alpha$ such that $|\alpha|<r$, the numerical
derivative $\widehat{D}^{\alpha}_{k}f(c)$ and the constant
$\widehat{\sigma}^2_{f,r}$ are as defined in Lemma \ref{lemma:var}. Then, if
$\widehat{\sigma}^2_{f,r}>0$ we have
\begin{equation*}
\frac{\widehat{\mathcal{I}}_{r,k}(f)-\mathcal{I}(f)}{\sqrt{\var\left(\widehat{\mathcal{I}}
_{r,k}(f)
\right)}}\Rightarrow \mathcal{N}_1(0,1).
\end{equation*}
This statement also holds if   $\Ihrk{r}(f)$ is replaced by $\Itrk{r}(f)$.
\end{thm}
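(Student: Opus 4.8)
The plan is to obtain the result by feeding the variance asymptotics of Lemma~\ref{lemma:var} into the Lindeberg-type sufficient condition of Lemma~\ref{lemma:CLT}; no new estimate is required, since the analytic content already resides in those two lemmas.

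First I would invoke Lemma~\ref{lemma:var}, which under the prescribed block-restricted numerical derivatives gives
\begin{equation*}
\lim_{k\rightarrow\infty} k^{s+2r}\var\left(\Ihrk{r}(f)\right)=\widehat{\sigma}^2_{f,r}.
\end{equation*}
Writing $a_k\eqdef k^{s+2r}\var(\Ihrk{r}(f))$, the hypothesis $\widehat{\sigma}^2_{f,r}>0$ together with $a_k\to\widehat{\sigma}^2_{f,r}$ shows that $c_0\eqdef\inf_{k\geq r}a_k>0$: only finitely many indices $k$ precede the tail on which $a_k\geq\widehat{\sigma}^2_{f,r}/2$, and on that finite initial segment each $a_k$ is strictly positive (otherwise the normalised quantity in the statement would be undefined), so the infimum is a minimum of finitely many positive numbers and $\widehat{\sigma}^2_{f,r}/2$.

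Next I would translate this uniform positive lower bound into the form required by Lemma~\ref{lemma:CLT}. From $a_k\geq c_0$ we get, for every $k\geq r$,
\begin{equation*}
\var\left(\Ihrk{r}(f)\right)\geq c_0\,k^{-s-2r}=\left(c_0\,k^{s}\right)k^{-2s-2r}.
\end{equation*}
Setting $v_k\eqdef c_0\,k^{s}$, we have $v_k\to\infty$ since $s\geq 1$, so the hypothesis of Lemma~\ref{lemma:CLT} is met and the central limit theorem for $\Ihrk{r}(f)$ follows at once. The argument for $\Itrk{r}(f)$ is verbatim the same, since both Lemma~\ref{lemma:var} and Lemma~\ref{lemma:CLT} are stated for it as well.

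I do not anticipate any genuine obstacle, as the heavy lifting has been done upstream. The only point deserving a little care is the passage from the \emph{asymptotic} variance limit supplied by Lemma~\ref{lemma:var} to the \emph{for all $k\geq r$} lower bound demanded by Lemma~\ref{lemma:CLT}; this is resolved purely by the positivity of $\widehat{\sigma}^2_{f,r}$, which forces $c_0$ to be strictly positive and thereby makes $v_k=c_0k^{s}$ a valid diverging sequence.
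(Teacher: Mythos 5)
Your proposal is correct and is precisely the paper's own argument: the paper proves Theorem~\ref{thm:clt} by combining Lemma~\ref{lemma:var} (the variance limit $k^{s+2r}\var(\Ihrk{r}(f))\rightarrow\widehat{\sigma}^2_{f,r}$) with the sufficient condition of Lemma~\ref{lemma:CLT}, exactly as you do by taking $v_k=c_0 k^s\rightarrow\infty$. Your extra care about upgrading the asymptotic limit to a lower bound valid for every $k\geq r$ is the right detail to fill in (and is unproblematic in any case, since the CLT is an asymptotic statement and the argument in Lemma~\ref{lemma:CLT} only uses $v_k\rightarrow\infty$ along the tail).
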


\section{Integration of vanishing functions\label{sec:vanish}}

\subsection{Principle}

We now focus on functions whose derivatives are null at the boundary of the set
$[0,1]^s$. Formally, for $r\geq 1$ we let
\begin{equation*}
    \Cfo{r}\eqdef 
    \left\{f\in \mathcal{C}^r([0,1]^s)\text{ s.t. } \max_{\alpha:|\alpha|\leq
    r} \left|D^{\alpha}f(u)\right|=0 \text{ or all } u\in\partial [0,1]^s\right\}
\end{equation*}
and consider the problem of approximating $\mathcal{I}(f)$ for
$f\in\mathcal{C}^r_0([0,1]^s)$.  Our objective is to derive an estimator that
has the same optimality properties as the estimator introduced in the previous section,
while being cheaper to compute when $f\in\Cfo{r}$. Vanishing functions may
arise for instance when performing importance sampling with a heavy-tail
proposal distribution; see the second set of numerical experiments
(Section~\ref{sub:Pima1}) for an illustration of this idea, and see Appendix \ref{sub:relevance_vanish}  for a longer discussion of the practical relevance of vanishing functions. 

We return to Haber's second estimator: 
\begin{equation*}
    \Ihrk{2}(f) = \frac{1}{k^s} \sum_{c\in\Ck} g_c(U_c), \quad U_c\sim\Uk
\end{equation*}
where, assuming $f\in \Cf{4}$, 
\begin{align*}
    g_c(u) & = \frac{f(c+u) + f(c-u)}{ 2} = f(c) + \frac 1 2 u^T H_f(c) u +
            \bigO(\|u\|^4),
\end{align*}
and $H_f(c)$ is the Hessian of $f$ at $c$. To get a smaller error, one may
combine more than two terms; e.g. with four terms: 
\begin{align*}
    \frac{ g_c(\lambda u) - \lambda^2 g_c(u)}{1 - \lambda^2} 
    & = \frac{f(c+\lambda u) + f(c-\lambda u) - \lambda^2 f(c+u) -\lambda^2
    f(c-u)}{2(1-\lambda^2)} \\
    & = f(c) + \bigO(\|u\|^4).
\end{align*}

The resulting estimator will then be a linear combination of averages of the
form
$k^{-s}\sum_cf(c+\lambda U_c)$, for a given $\lambda$. But, if $|\lambda|\neq 1$,
such an average will typically not have the desired expectation 
$\mathcal{I}(f)$, since the support of $c+\lambda U_c$ is a hyper-cube inflated 
by a factor $\lambda$. 

To address this issue, we note first that, since $f\in\Cfo{r}$, we may extend
$f$ to
$\bar{f}\in\mathcal{C}^r(\R^s)$, with $\bar{f}(u)=f(u)$ if $u\in[0, 1]^s$, and 
$\bar{f}(u)=0$ otherwise. This implies that: 
\begin{equation*}
    \mathcal{I}(f) = \int_{[0,1]^s} f(u)\dd u = \int_{\R^s} \bar{f}(u) \dd u 
    = \sum_{c\in \Cinfk} \int_{B_k(c)} \bar{f}(u) \dd u
\end{equation*}
where $\Cinfk$ is simply~\eqref{eq:def_Cmk} with $m=+\infty$; i.e. the
(infinite) set of centres of hypercubes of volume $k^{-s}$, the union of which
is $\R^s$. 

Second, if we restrict $\lambda$ to values such that $|\lambda|=1, 3, 5,
\ldots$, we observe that the support of $(c+ \lambda U_c)$ is the union of
$|\lambda|^s$ contiguous hyper-cubes in  $\Cinfk$. If we sum over $c\in
\Cinfk$, we make sure that each hyper-cube is `visited' the same number of
times. In practice, we need to consider only $c$ such that support of
$(c+\lambda U_c)$ intersects with $[0, 1]^s$, since the corresponding integral
is zero otherwise.  The following lemma 
formalises these ideas. 

\begin{lemma}\label{lemma:expectation}
Let $g\in L_1([0,1]^s)$, $\lambda\in \{\pm (2i+1),\,i\in\mathbb{N}_0\}$, $k\geq
2$, and $\bar{g}:\R^s\rightarrow\R$ be such
that $\bar{g}(u)=g(u)$ if $u\in [0,1]^s$ and $\bar{g}(u)=0$ otherwise.  Then,
\begin{equation*}
\E\left[\frac{1}{k^s}\sum_{c\in \Cmk}\bar{g}(c+\lambda U_c)\right]
=\int_{[0,1]^s}g(u)\dd u,\quad\forall  m\geq (|\lambda|-1)/2.
\end{equation*}
\end{lemma}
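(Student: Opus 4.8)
The plan is to move the expectation inside the finite sum, evaluate each term by a change of variables, and then reorganise the resulting sum of integrals over a common grid of small hypercubes. Writing $L := (|\lambda|-1)/2$, which is a non-negative integer because $\lambda$ is odd, the hypothesis $m\geq L$ will be exactly what guarantees that every relevant small hypercube is counted the correct number of times.

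First I would use linearity of expectation to write the left-hand side as $k^{-s}\sum_{c\in\Cmk}\E[\bar g(c+\lambda U_c)]$. Since $U_c\sim\Uk$ has density $k^s$ on $[-1/2k,1/2k]^s$, the substitution $v=c+\lambda u$, whose Jacobian has absolute value $|\lambda|^{-s}$, gives
\begin{equation*}
\E[\bar g(c+\lambda U_c)] = \frac{k^s}{|\lambda|^s}\int_{B_{k/|\lambda|}(c)}\bar g(v)\,\dd v,
\end{equation*}
where $B_{k/|\lambda|}(c)$ is the hypercube centred at $c$ with side length $|\lambda|/k$ (the enlargement by the factor $|\lambda|$ being the same whether $\lambda>0$ or $\lambda<0$). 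Substituting back, the left-hand side becomes $|\lambda|^{-s}\sum_{c\in\Cmk}\int_{B_{k/|\lambda|}(c)}\bar g(v)\,\dd v$.

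The key geometric step is that, because $|\lambda|=2L+1$ is odd, the enlarged hypercube decomposes exactly into a grid of $|\lambda|^s$ small hypercubes, $B_{k/|\lambda|}(c)=\bigcup_{j\in\{-L,\dots,L\}^s} B_k(c+j/k)$, all belonging to $\Cinfk$. Inserting this and exchanging the two finite sums, the left-hand side equals $|\lambda|^{-s}\sum_{c'\in\Cinfk} N(c')\int_{B_k(c')}\bar g(v)\,\dd v$, where $N(c')=\#\{\,j\in\{-L,\dots,L\}^s : c'-j/k\in\Cmk\,\}$ counts how often each small hypercube is visited. Since $\bar g$ vanishes outside $[0,1]^s$ and the small hypercubes have disjoint interiors, only the $c'\in\Ck$ contribute. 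The main work is the counting argument: for $c'\in\Ck$ the coordinates of $c'$ correspond to indices in $\{0,\dots,k-1\}$, so the indices of $c'-j/k$ lie in $\{-L,\dots,k-1+L\}$; this range is contained in the admissible index set $\{-m,\dots,k+m-1\}$ of $\Cmk$ precisely when $m\geq L$. Hence, under the hypothesis $m\geq(|\lambda|-1)/2$, every $j$ yields a valid centre and $N(c')=|\lambda|^s$ for all $c'\in\Ck$.

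The $|\lambda|^s$ factors then cancel, leaving $\sum_{c'\in\Ck}\int_{B_k(c')}\bar g(v)\,\dd v = \int_{[0,1]^s} g(v)\,\dd v$, since the $B_k(c')$ for $c'\in\Ck$ tile $[0,1]^s$ and $\bar g=g$ there. The step I expect to require the most care is verifying that $N(c')$ is uniformly equal to $|\lambda|^s$; this is where both the oddness of $\lambda$ (ensuring the enlarged cube is grid-aligned) and the boundary condition $m\geq(|\lambda|-1)/2$ (ensuring no centre is lost near the edge of $[0,1]^s$) are simultaneously used.
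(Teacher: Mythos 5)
Your proof is correct and follows essentially the same route as the paper's: a change of variables turning each term into $\frac{k^s}{|\lambda|^s}\int_{B_{k/|\lambda|}(c)}\bar g$, decomposition of the enlarged cube into $|\lambda|^s$ grid-aligned cubes (using oddness of $\lambda$), and the observation that for $m\geq(|\lambda|-1)/2$ each cube $B_k(c')$ with $c'\in\Ck$ is visited exactly $|\lambda|^s$ times. Your explicit index-range verification of the count $N(c')=|\lambda|^s$ is merely a more detailed rendering of the multiset-covering identity the paper asserts directly.
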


\subsection{Proposed estimator}

We are now able to define our vanishing estimator. Assume $r\geq 1$ is fixed,
and $f\in\Cfo{r}$.
Let $(\lambda_j)_{j=1}^\infty$ be the sequence $1, -1, 3, -3, 5, -5, \ldots$, 
and 
\begin{equation*}
m_r:=\max \{|\lambda_j|\}_{j=1}^{r}=
\begin{cases}
r,&\text{ if $r$ is odd}\\
r-1,&\text{ otherwise} 
\end{cases}
\end{equation*}
\begin{equation*}
\gamma^{(r)}\eqdef  \Gamma^{-1}_r\begin{pmatrix}
1\\
0\\
\vdots\\
0
\end{pmatrix},\quad \Gamma_r \eqdef   \begin{pmatrix}
1&1&\hdots&1\\
\lambda_1&\lambda_2&\hdots&\lambda_{r}\\
\vdots&\vdots&\vdots&\vdots\\
\lambda_1^{r-1}&\lambda_2^{r-1}&\hdots&\lambda_{r}^{r-1}
\end{pmatrix}.
\end{equation*}

The matrix $\Gamma_r$ is a Vandermonde matrix and thus, since
$\lambda_j\neq\lambda_l$ for all $j\neq l$, this matrix is invertible.
In addition, using Taylor's theorem it is easy to check that $\gamma^{(r)}$ is the vector of coefficients such that 
\begin{equation}
    \label{eq:Taylor_gcr}
    g_{r,c} (u) \eqdef \sum_{j=1}^r \gamma_j^{(r)} \bar{f}(c+\lambda_j u)
    = f(c) + \bigO(\|u\|^r). 
\end{equation}

We now define our vanishing estimator as follows: 
\begin{equation}
    \label{eq:est_vanish}
    \Ihork{r}(f) \eqdef \frac{1}{k^s} \sum_{c\in\Cmrk} g_{r,c}(U_c),\qquad
    U_c\sim\Uk.
\end{equation}

When $r=1$ or $r=2$, we recover Haber's estimators: $\Ihork{r}(f)=\Ihrk{r}(f)$
for $r=1,2$.  $\Ihork{r}(f)$ is clearly
cheaper (and simpler) to compute than the general estimator $\Ihrk{r}(f)$ of the
previous section, as the latter required computing a $\bigO(e^s)$ number of
numerical 
derivatives.  The unbiasedness of $\Ihork{r}(f)$ is a  direct consequence of
Lemma~\ref{lemma:expectation}. From~\eqref{eq:Taylor_gcr}, we see that  the
variance of $\Ihork{r}(f)$ is $\bigO(n^{-1-2r/s})$. It has therefore the  same
RMSE rate as the estimator considered in Section~\ref{sec:main}.  These
and other properties are stated in Theorem~\ref{thm:main_vanish} below. 

Before that, we must clarify what we mean by $n$ in this context. We may define
$n$ to be the number of evaluations of $\bar{f}$; in this case,
$n=r(k+2m_r)^s$, since $|\Cmrk|=(k+2m_r)^s$. Or we may define it to be the
number  of evaluations of $f(u)$ for $u\in[0, 1]^s$. In that case, $n$ is
random, with expectation $r k^s$. (To see this, apply
Lemma~\ref{lemma:expectation} to function $g(u)=1$.)  It is also bounded, i.e.
$(k-2m_r)^s \leq n \leq (k+2m_r)^s$ with probability one. Hence, whatever the
chosen definition of $n$, the statement $k=\bigO(n^{-1/s})$ remains correct.

\begin{thm}\label{thm:main_vanish}
Let $f\in\mathcal{C}_0^r([0,1]^s)$ for some $r\geq 1$. Then, for all $k\geq 2$ we have 
$\E[\widehat{\mathcal{I}}^{\,0}_{r,k}(f)]=\mathcal{I}(f)$   and there
exists a constant $\widehat{C}^{\,0}_{f,s,r}<\infty$ such that
\begin{gather*}
\E\big[ | \widehat{\mathcal{I}}^{\,0}_{r,k}(f)-\mathcal{I}(f)|^2\big]^{1/2}\leq
\widehat{C}^{\,0}_{f,s,r} n^{-\frac{1}{2}-\frac{r}{s}}, \quad \P\left(|\widehat{\mathcal{I}}^{\,0}_{r,k}(f)-\mathcal{I}(f)|\leq
    \widehat{C}^{\,0}_{f,s,r} n^{-\frac{r}{s}}
\right)=1
\end{gather*}
and such that, for all $ \delta\in(0,1)$, 
\begin{equation*}
\P\left\{ |\widehat{\mathcal{I}}^{\,0}_{r,k}(f)-\mathcal{I}(f)|\leq n^{-\frac{1}{2}-\frac{r}{s}}\, \widehat{C}^{\,0}_{f,s,r}   \sqrt{2\log\left(2 /\delta
\right)} \right\}\geq 1-\delta.
\end{equation*}
\end{thm}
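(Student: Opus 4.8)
The plan is to reduce all four assertions to a single uniform bound that plays, for $\Ihork{r}(f)$, the role that Lemma~\ref{lemma:main} plays for $\Ihrk{r}(f)$ in the proof of Theorem~\ref{thm:main}. Concretely, I would first establish unbiasedness, then produce a centred decomposition $\Ihork{r}(f)-\mathcal{I}(f)=k^{-s}\sum_{c\in\Cmrk} h^0_{k,c}(U_c)$ with $\max_{c}\|h^0_{k,c}\|_\infty\leq \widehat{C}^{\,0}_{s,r}\|f\|_r k^{-r}$, and finally read off the RMSE bound, the almost-sure bound and the Hoeffding bound exactly as in Theorem~\ref{thm:main}.

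Unbiasedness is immediate: writing $g_{r,c}(u)=\sum_{j=1}^r\gamma^{(r)}_j\bar f(c+\lambda_j u)$ and applying Lemma~\ref{lemma:expectation} to each term (its hypothesis $m_r\geq(|\lambda_j|-1)/2$ holds since $m_r=\max_{j\le r}|\lambda_j|\ge|\lambda_j|$), each average $k^{-s}\sum_{c}\bar f(c+\lambda_j U_c)$ has expectation $\mathcal{I}(f)$, so $\E[\Ihork{r}(f)]=\mathcal{I}(f)\sum_j\gamma^{(r)}_j=\mathcal{I}(f)$, the last equality because the first row of $\Gamma_r$ is $(1,\dots,1)$ and $\Gamma_r\gamma^{(r)}=(1,0,\dots,0)^\top$.

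For the uniform bound I would set $h^0_{k,c}(u)\eqdef g_{r,c}(u)-\E[g_{r,c}(U_c)]$, which is centred and, by unbiasedness, yields the claimed decomposition. The key point is that $f\in\Cfo{r}$ forces its zero-extension $\bar f$ to lie in $\mathcal{C}^r(\R^s)$ with $\|\bar f\|_r=\|f\|_r$, so Taylor's theorem \eqref{eq:Taylor}--\eqref{eq:Taylor2} applies to $\bar f$ at every centre $c\in\Cmrk$, including those lying outside $[0,1]^s$. Expanding each $\bar f(c+\lambda_j u)$ to order $r$ and summing against $\gamma^{(r)}$, the moment identities $\sum_j\gamma^{(r)}_j\lambda_j^l=(\Gamma_r\gamma^{(r)})_{l+1}=\ind\{l=0\}$ kill every polynomial term of degree $1\le l\le r-1$ and leave $g_{r,c}(u)=\bar f(c)+\sum_j\gamma^{(r)}_jR_{\bar f,r}(c,c+\lambda_j u)$. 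Since $\|u\|_\infty\le 1/2k$ and $|\lambda_j|\le m_r$, the remainder is $\bigO(\|f\|_r k^{-r})$ uniformly in $c$, with a constant depending only on $s,r$ through $\sum_j|\gamma^{(r)}_j|$, $m_r$ and $s^r/r!$; centring costs only another factor of two, giving $\|h^0_{k,c}\|_\infty\le\widehat{C}^{\,0}_{s,r}\|f\|_r k^{-r}$.

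The three error bounds then follow mechanically. The $h^0_{k,c}(U_c)$ are independent and centred, so $\var(\Ihork{r}(f))=k^{-2s}\sum_{c}\var(h^0_{k,c}(U_c))\le (k+2m_r)^s k^{-2s}(\widehat{C}^{\,0}_{s,r}\|f\|_r k^{-r})^2$, while $|\Ihork{r}(f)-\mathcal{I}(f)|\le (k+2m_r)^sk^{-s}\widehat{C}^{\,0}_{s,r}\|f\|_r k^{-r}$ almost surely; since $n\le(k+2m_r)^s\le(1+m_r)^s k^s$ for $k\ge2$, these are $\bigO(n^{-1-2r/s})$ and $\bigO(n^{-r/s})$ respectively, giving parts two and three after absorbing $\|f\|_r$ and the dimensional constants into $\widehat{C}^{\,0}_{f,s,r}$. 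The sub-Gaussian bound is Hoeffding's inequality applied to the bounded independent centred summands $k^{-s}h^0_{k,c}(U_c)$, with the same constant. The only genuinely delicate step is the uniform remainder bound of the previous paragraph: everything hinges on $\bar f$ being globally $\mathcal{C}^r$ with $\|\bar f\|_r=\|f\|_r$, which is precisely where the vanishing hypothesis $f\in\Cfo{r}$ enters and which lets the estimate hold uniformly over the enlarged grid $\Cmrk$.
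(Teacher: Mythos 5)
Your proposal is correct and follows essentially the same route as the paper's proof: the same centred decomposition $h_{k,c}(u)=g_{r,c}(u)-\E[g_{r,c}(U_c)]$, the same uniform bound $\|h_{k,c}\|_\infty \leq C\,\|f\|_r\,k^{-r}$ obtained from the Vandermonde cancellation behind~\eqref{eq:Taylor_gcr} (the paper's inequality~\eqref{eq:van_int}, where your ``factor of two'' for centring plays the role of the $d_k$ products), unbiasedness via Lemma~\ref{lemma:expectation}, and the same reduction of the RMSE, almost-sure and Hoeffding bounds to the arguments of Theorem~\ref{thm:main}. Your explicit treatment of the enlarged grid $\Cmrk$ and of the zero-extension $\bar f\in\mathcal{C}^r(\R^s)$ with $\|\bar f\|_r=\|f\|_r$ simply makes precise what the paper leaves implicit.
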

\begin{proof}
As in Lemma~\ref{lemma:main}: 
for $k\geq 2$  and $c\in\mathfrak{C}_{m_r,k}$, 
let $h_{k,c}: [0,1]^s\rightarrow\R$ be defined by
\begin{equation*}
    h_{k, c}(u) \eqdef g_{r,c}(u)-\E[g_{r,c}(U_c)],\quad u\in [0,1]^s 
\end{equation*}
so that $\Ihrk{r}(f)-\mathcal{I}(f) = k^{-s}\sum_{c\in\Cmrk}h_{r,c}(U_c)$. 
(Function $h_{k, c}$ also depends on $r$ implicitly.)

Let  $u\in [-1/2k, 1/2k]^s$ and note that, from~\eqref{eq:Taylor_gcr} and the
definition of $g_{r, c}$:
\begin{equation}\label{eq:van_int}
\begin{split}
|h_{k, c}(u)|
& \leq \|f\|_r \sum_{j=1}^{r}|\gamma^{(r)}_j
\lambda_j^r|\sum_{\alpha:|\alpha|=r}
\frac{\big|u^\alpha+ \prod_{j:\alpha_j\neq 0} d_{k}(j)\big|}{\alpha !}\\
&\leq \Big(\|f\|_r \sum_{j=1}^{r}|\gamma^{(r)}_j \lambda_j^r|\Big)
\left(2^{-r}k^{-r} +k^{-r}
\right)
\end{split}
\end{equation}
where the  second inequality uses the fact that $d_{k}(j)\leq k^{-j}$ for all
$j\in\mathbb{N}$.

By \eqref{eq:van_int} there exists a constant $C<\infty$ such that,
\begin{equation*}
|h_{k,c}(u)|\leq C k^{-r},\quad\forall u\in [-1/2k, 1/2k]^s,
\quad \forall c\in\mathfrak{C}_{m_r,k},\quad\forall k\geq 2 
\end{equation*}
and thus, since by Lemma \ref{lemma:expectation} the estimator 
$\widehat{\mathcal{I}}^{\,0}_{r,k}(f)$ is unbiased, the proof of theorem 
follows from the same remarks as in the proof of Theorem~\ref{thm:main}.
\end{proof}

\section{Practical details\label{sec:pract}}

\subsection{Variance estimation\label{sub:variance}}

One advantage of the standard Monte Carlo estimator is that it is possible to
estimate its variance from a single run. It  does not seem possible to do so
with the estimators proposed in this paper. However, we highlight briefly a
method to approximate the variance from a potentially small number $l\geq 2$ of
independent runs.  This method is actually a generalisation of an approach
outlined in Section 5 of~\cite{haber1966modified} for the estimator \eqref{eq:Haber}.

Consider a generic estimator of the form:  
\begin{equation*}
 \Ih = \frac 1 n \sum_{i=1}^n Y_i
\end{equation*}
where the $Y_i$'s are independent  but not (necessarily) identically distributed.
Both estimators presented in this paper are of this form (up to some notation
adjustment); e.g. for the vanishing estimator, $Y_i$ may be identified with
$g_{r,c}(U_c)$, see~\eqref{eq:est_vanish}.

Assume we obtain $l\geq 2$ realisations of the estimator $\Ih$, based 
on independent copies $Y_n^{(j)}$ of the $Y_n$.
Since  
\begin{equation*}
\var(\Ih) = \frac{1}{n^2} \sum_{i=1}^n \var(Y_i)    
\end{equation*} 
take, as an estimator of $\var(\Ih)$, 
\begin{equation*}
    \hat V \eqdef \frac{1}{n^2 } \sum_{i=1}^n \frac{1}{l-1} \sum_{j=1}^l
(Y_i^{(j)}-\bar{Y}_i)^2,\qquad \bar{Y}_i \eqdef \frac 1 l \sum_{j=1}^l Y_i^{(j)}.
\end{equation*}

It is easy to establish that, for  the two types of estimators introduced 
in this paper, $\Ihrk{r}(f)$ and $\Ihork{r}(f)$ (for a given $r\geq 1$), one has, for a
fixed $l\geq 1$:
\begin{equation*}
    \var(\hat V) = \bigO(n^{-3-4r/s})
\end{equation*}
which is $n^{-1}$ smaller than the square of $\bigO(n^{-1-2r/s})$, the rate at
which the true variance goes to zero.

In other words, estimator $\hat V$ will have a small relative
error as soon as $n$ is large (even for a small $l$). Of course, if we generate
$l$ independent realisations of a given estimator (preferably in parallel),
then we should return as a final estimate the average of these $l$
realisations, together with an estimate of its variance, that is, $\hat{V} /
l$. 

\subsection{Automatic order selection for the vanishing
estimator\label{sub:compute_all_orders}}

Given~\eqref{eq:Taylor_gcr} and \eqref{eq:est_vanish}, we may rewrite the
vanishing estimator as follows:
\begin{align*}
    \Ihork{r}(f) 
      = \sum_{j=1}^r \gamma_j^{(r)} 
    \left\{ \frac{1}{k^s} \sum_{c\in\Cmrk} \bar{f}(c+\lambda_j U_c) \right\}  
     = \sum_{j=1}^r \gamma_j^{(r)} 
    \left\{ \frac{1}{k^s} \sum_{c\in\Cx{m_j, k}} \bar{f}(c+\lambda_j U_c) \right\}
\end{align*} 
where in the second line we use the fact that $\bar{f}(c+\lambda_j U_c)=0$
whenever $c\notin \Cx{m_j, k}$. 

We may pre-compute the $r$ averages above, and use them to compute
simultaneously $\Ihork{r'}(f)$ for $r'=1,\ldots,r$, at (essentially)  the same
cost as computing only $\Ihork{r}(f)$. If we generate several copies of
these estimators, we may then choose the value $r'$ with the smallest estimated
variance (using the variance estimator proposed in the previous section). 
We may use a similar approach for the non-vanishing estimator $\Ihrk{r}(f)$, but
in that case there does not seem to be any short-cut for computing
simultaneously $\Ihrk{r}(f)$ for different values of $r$.

\section{Numerical experiments\label{sec:numerics}}

In this section, we assess and compare estimators of expectations
$\mathcal{I}(f)$ as follows. For a fixed function $f:[0,1]^s \rightarrow \R$
and a range of values for $k$, we generate 50 independent copies of the
considered estimators, and produce plots where:
\begin{itemize}
    \item the $x-$axis is the number of evaluations of  $f$. When this
        quantity is random (vanishing estimator), we report the average over
        the independent runs. 

  \item the $y-$axis is a measure of the relative error; that is, either the 
      mean squared error (MSE) divided by the true value of $\mathcal{I}(f)$, when
      this quantity is known, or the empirical variance divided by the square 
      of the average, when it is not.  In the former (resp. latter) case, the
      label of the $y-$axis is \textsf{rel-mse} (resp. \textsf{rel-var}). In
      both cases, we discard results where the relative error is too close to
      machine epsilon (i.e. when MSE or variance is not $\gg 10^{-32}$ ). In
      such cases, the corresponding estimates may be considered as exact (up to
      machine epsilon). 
\end{itemize}

It is customary in this type of plot to overlay a straight line that
corresponds to the expected rate, i.e. $\bigO(n^{-1+2r/d})$ for our estimators.
(The log-scale is used on both axes.) However, in our case, the performance of
our estimators (often) matches closely these rates, making these lines
hard to distinguish. For this reason we do not plot them in what follows.

An open-source python package implementing the two proposed estimators and the
following  numerical experiments may be found at
\url{https://github.com/nchopin/cubic_strat}. 
The numerical derivatives that appear in the control variates of the
non-vanishing estimator were computed by the 
the findiff package of \cite{findiff}. We note that the numerical derivatives computed with this package are not implemented   in a way which ensures that we have a CLT for $\Ihrk{r}(f)$ (that is, they do not verify the assumptions of Theorem~\ref{thm:clt}).

\subsection{Comparison between the non-vanishing estimator and Dick's
estimator}

As mentioned in the introduction,  in \cite{dick2011higher} Dick introduced 
higher-order estimators of $\mathcal{I}(f)$ (henceforth, Dick's estimators),
based on scrambled digital nets,  which achieve
$\bigO(n^{-1/2-\alpha+\epsilon})$ RMSE for functions
$f\in\mathcal{D}^\alpha([0,1]^s)$, $\alpha\geq 2$, the set of functions such
that all partial derivatives obtained by differentiating with respect to each
variable up to $\alpha$-times is square integrable. When $s\geq 2$, this 
estimator does not require the existence of the same number of partial
derivatives as our stratified estimators (even if we set $r=s \times \alpha $).
For instance, for $s=2$, denoting $u=(x, y)$, Dick's estimator requires the
existence of $\partial f/\partial x$, $\partial f/\partial y$ and $\partial^2
f/\partial x\partial y$ at order $\alpha=1$, while our stratified estimator
requires only the first two when $r=1$; or, alternatively,  these three
derivatives plus  $\partial^2 f /\partial x^2$, $\partial^2 f/ \partial y^2$ at
order $r=2$. This technical point should be kept in mind in the following
comparison, where Dick's estimator is implemented using the  Sobol' sequence as underlying digital sequence. 

We consider the following functions: for $s=1$, $f_1(u)=ue^u$, and for $s\geq
2$, 
\[ f(u) = \left(\prod_{j=1}^s u_j^{j-1} \right) \exp\left( \prod_{j=1}^s u_j \right).
\]
Note that $\mathcal{I}(f_1)=1$, and $\mathcal{I}(f_s)=e-\sum_{j=0}^{s-1}
(1/j!)$ for $s\geq 2$. The aforementioned paper used the first two functions of
this sequence to illustrate the numerical performance of Dick's estimators.   We
compare the performance of Dick's higher-order estimators (for $\alpha=1, 2, 3,
4$) with our non-vanishing estimator (for $r=1, 2, 4, 6, 8$, and, in addition,
$r=10$ for $s=1$ and $s=2$); see Figures \ref{fig:dick12} and \ref{fig:dick46}. 

\begin{figure}
    \centering
    \includegraphics[width=0.40\textwidth]{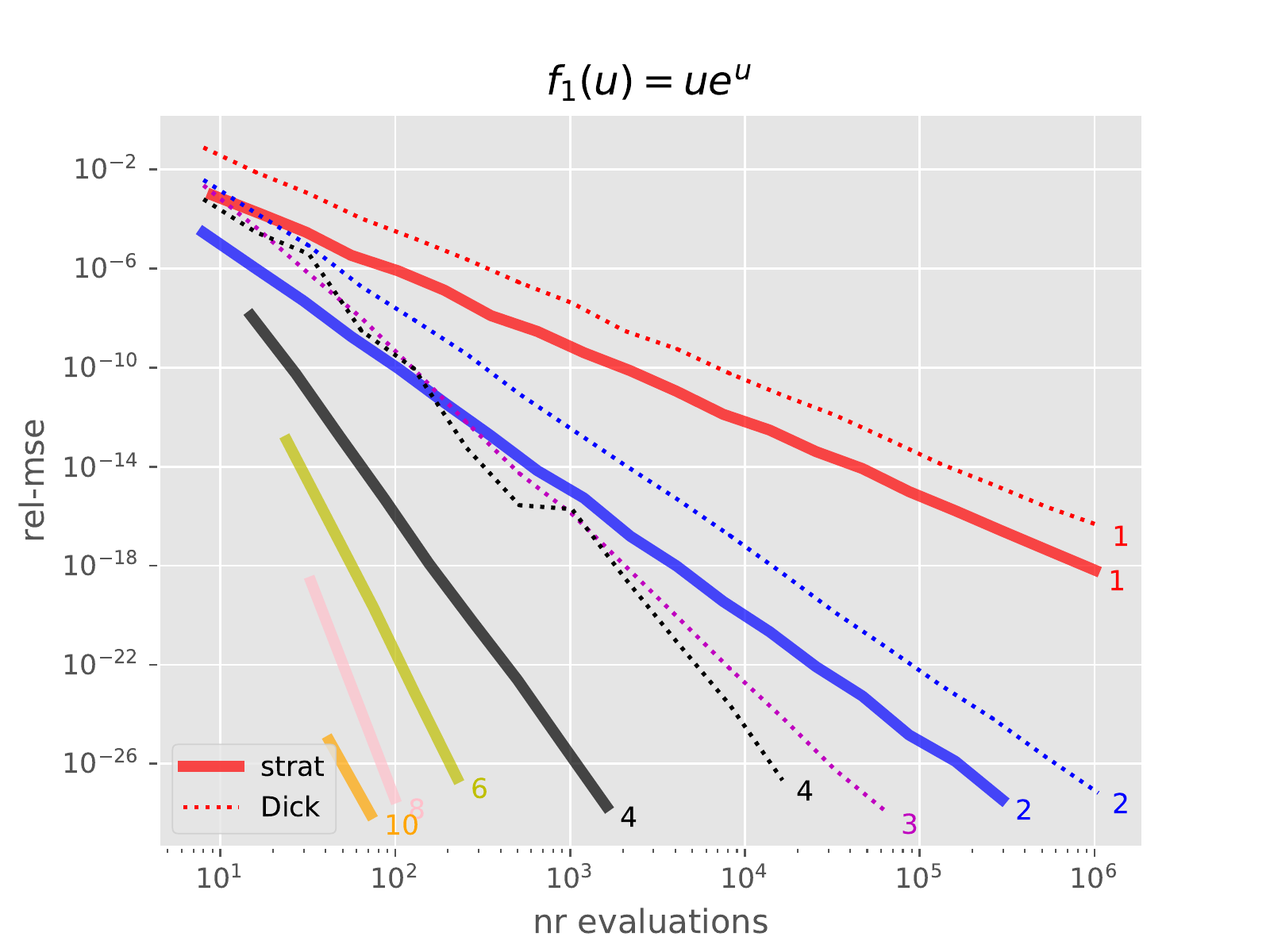}
    \includegraphics[width=0.40\textwidth]{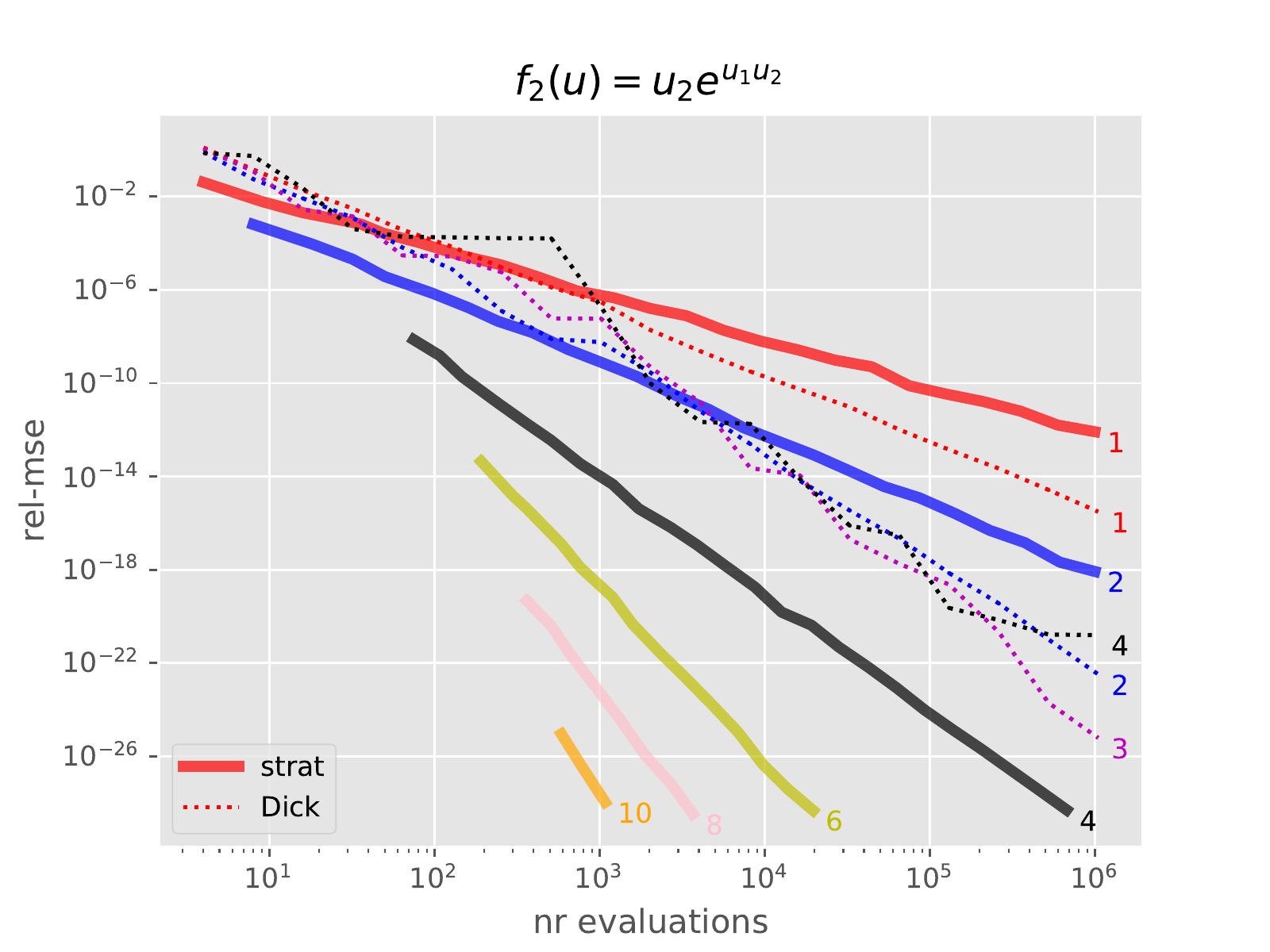}
    \caption{Relative MSE (mean squared error) vs number of evaluations for the
    vanishing estimator (thick lines) and Dick's estimator (dotted line). The
    value of $r$ (stratified) or $\alpha$ (Dick's) are printed next to each
    curve.  Left: $f_1$; Right: $f_2$. } 
    \label{fig:dick12}
\end{figure}

For $s=1$ (left panel of Figure \ref{fig:dick12}), both estimators require
exactly the same number of derivatives, hence the comparison is
straightforward. Both estimators show the expected MSE rate,
$\bigO(n^{-1+2r})$, (taking $\alpha=r$); on the other hand, the stratified
estimator seems to consistently have lower MSE.

\begin{figure}
    \centering
    \includegraphics[width=0.40\textwidth]{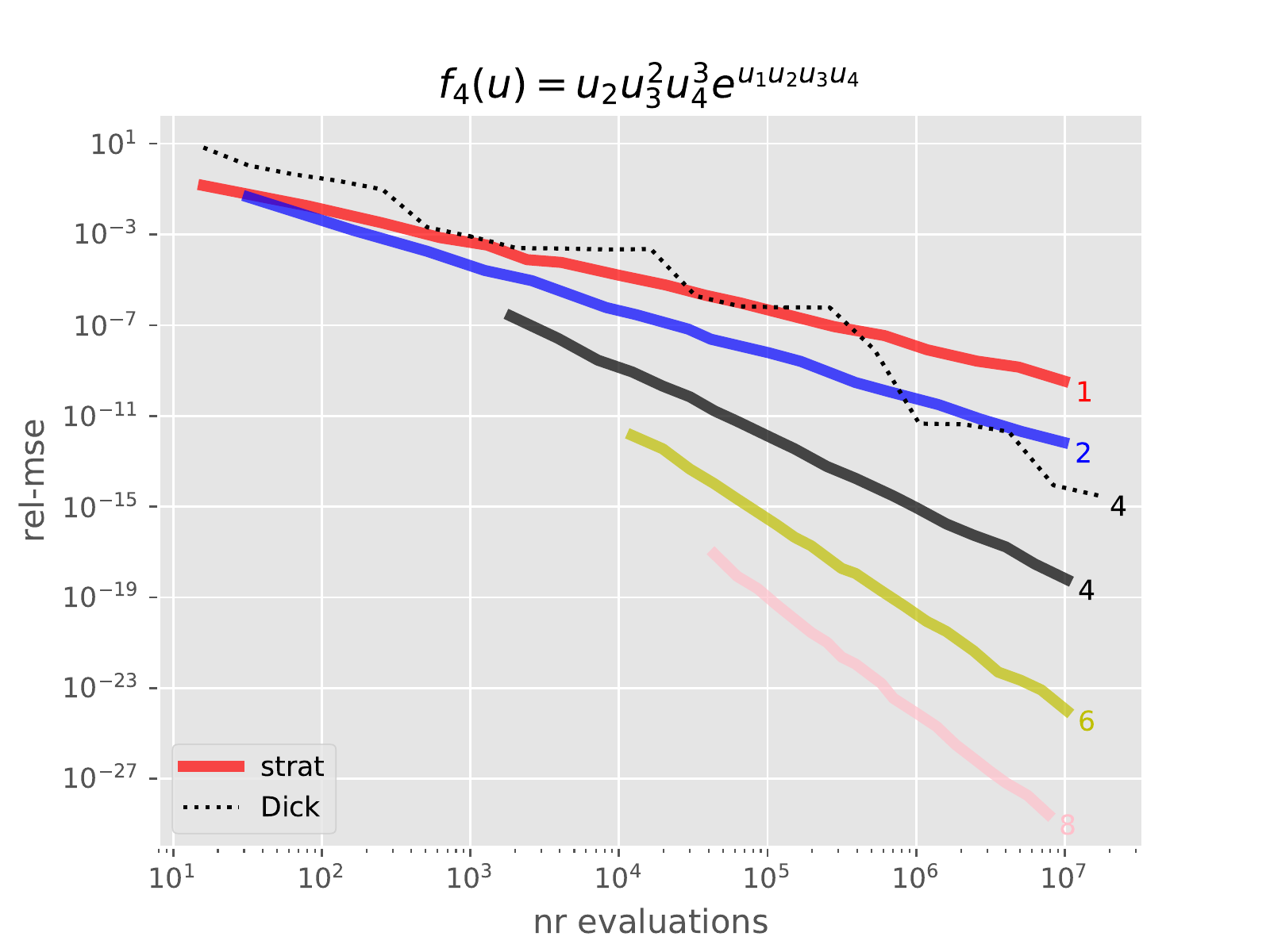}
    \includegraphics[width=0.40\textwidth]{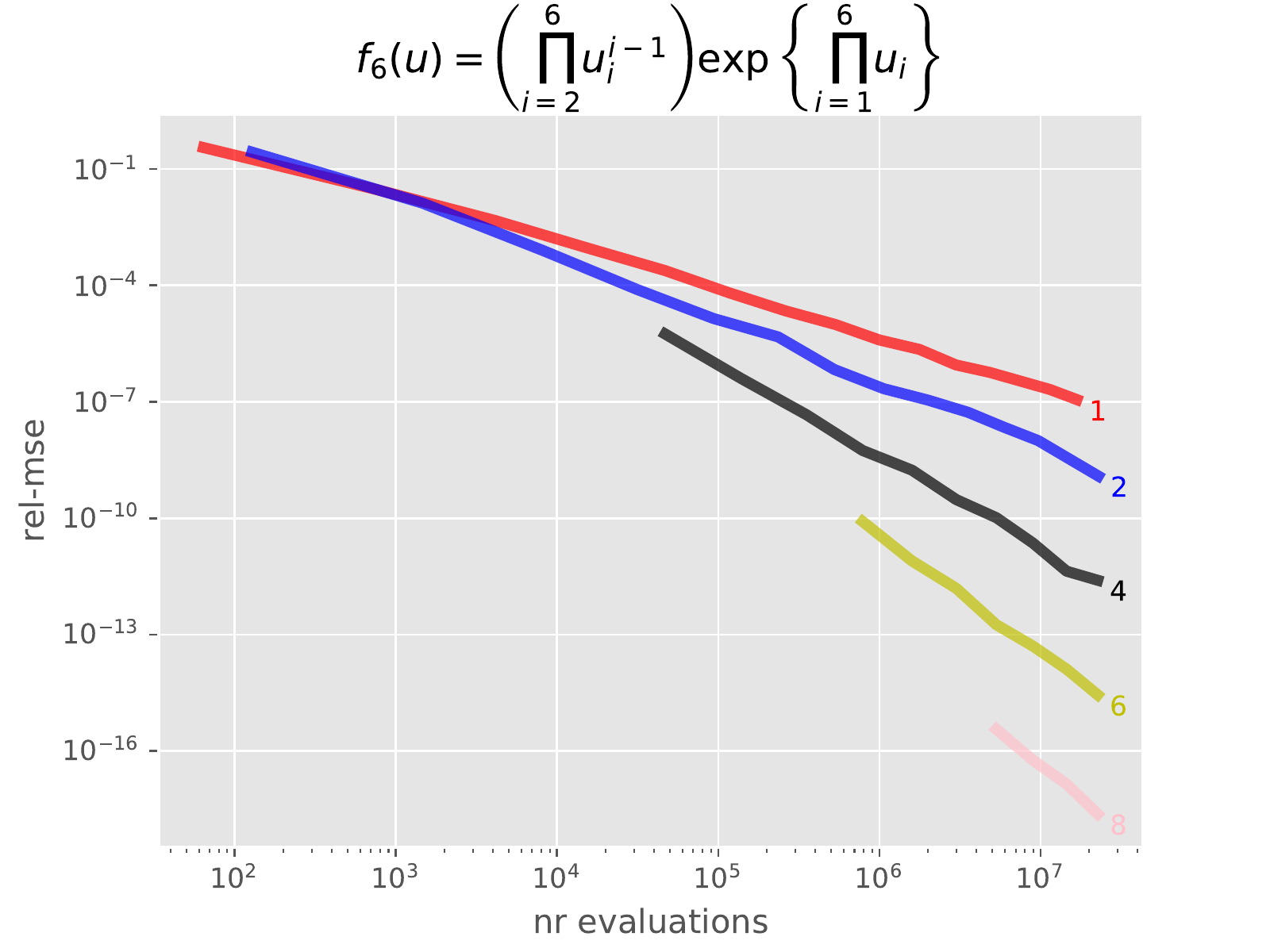}
    \caption{Same plot as in Figure \ref{fig:dick12} for functions $f_4$ (left)
    and $f_6$ (right).}
    \label{fig:dick46}
\end{figure}

For $s=2$ (right panel of Figure \ref{fig:dick12}), the comparison becomes less
straightforward, as we explained above. The fact that Dick's estimator shows
intermediate performance between the stratified estimators for $r=1$ and $r=2$
is reasonable, since it requires strictly more partial derivatives than for
$r=1$, and strictly less than for $r=2$; as discussed in the example above. On
the other hand, Dick's estimator at order $\alpha=4$ seems outperformed by 
both the same estimator at orders $\alpha=2$ and 3, and the stratified
estimator at order $r=4$. This is despite the fact that Dick's estimator  with
$\alpha=4$ requires strictly more partial derivatives than the stratified
estimator with $r=4$. This suggests that, when $\alpha$ increases, Dick's
estimator requires a larger and larger number of evaluations before exhibiting
the expected rate of convergence. 

For $s=4$ (left panel of Figure \ref{fig:dick46}), we plot only the relative
MSE of Dick's estimator for $\alpha=4$. Again, we observe the same phenomenon:
i.e. even with $10^7$ evaluations it is not yet competitive with the stratified
estimator (with $r=4$) despite requiring more partial derivatives. 

In all these plots, the MSE of the proposed estimator matches very closely the
expected rate. On the other hand, recall that, for $r\geq 4$, the estimator
requires $3 k^s$ evaluations of $f$, and is properly defined only for $k\geq
r$. (In addition, the way the numerical derivatives are computed in package
findiff imposes that $k\geq 3r/2 - 1$.) This implies that this estimator is
only defined for a large number of evaluations when $r$ and $s$ are large, as
shown in Figures \ref{fig:dick12} and \ref{fig:dick46}. This is of course a
limitation of the non-vanishing estimator. We shall see that the vanishing
estimator is less affected by this issue; i.e. it may be computed for smaller
values of $n$.

\subsection{Vanishing estimator: Bayesian model choice\label{sub:Pima1}}

We now consider a class of vanishing functions in order to assess our vanishing
estimator. We construct these functions so that their integral equals the
marginal  likelihood $\int p(\beta) L(y|\beta)\dd\beta$ of a Bayesian statistical
model, where $\beta\in \R^s$, $p(\beta)$ is a Gaussian prior density (with mean
0, and covariance $5^2 I_s$), $L(y|\beta)$ is the likelihood of a logistic
regression model: $L(y|\beta) = \prod_{i=1}^n F(y_i \beta^T x_i)$,
$F(z)=1/(1+e^{-z})$, and the data $(x_i, y_i)_{i=1}^{n}$ consist of predictors
$x_i\in\R^s$ and labels $y_i\in\{-1, 1\}$. 

We adapt the importance sampling approach described in \cite{mr3634307} to 
approximate such quantities as follows: we obtain numerically the mode
$\hat\beta$, and the Hessian at $\beta=\hat\beta$, of the function $h(\beta)=\log\{
p(\beta)L(y|\beta)\}$; hence 
$h(\beta)\approx h(\hat\beta)- (1/2) (\beta-\hat\beta)^T H(\beta
-\hat\beta)$. Then we set $f(u) = \exp\{h(\hat\beta+L\psi_s(u))\}$, with $L$
the Cholesky lower triangle of $H$, $LL^\top = H$, and $\psi_s$  the function defined in Appendix \ref{sub:relevance_vanish} (for $\tau=1.5$), which maps   $(0,1)^s$ into $\R^s$.

As in \cite{mr3634307}, we consider the Pima dataset (which has 10 predictors,
if we include an intercept). More precisely, for $s=2$, 4, 6, and 8, we take
the first $s$ predictors, and compute the corresponding marginal likelihoods. 
Note that computing these quantities for all possible subsets of the predictors
is a standard way to perform variable selection in Bayesian inference. 

Figure \ref{fig:pima} showcases the performance of the vanishing estimators for
$s=2$ to 8 and at orders 1 to 10 (for $s=2$ and $s=4$), 8 (for $s=6$), and 4
(for $s=8$). Results for higher orders are not displayed for $s=6$ and $s=8$
because they did not lead to lower variance even for the highest values of
number of evaluations. 

Note the slightly different behaviour relative to the previous example. The
vanishing estimator is defined for lower  numbers of evaluations. On the other
hand, it exhibits the expected rate only for a large enough number of
evaluations. As expected, the relative gain obtained by increasing $r$
decreases with the dimension (and requires a larger and larger number of
evaluations to appear clearly). 

Notice that, in Figure \ref{fig:pima}, the number of evaluations
has a different range for different values of $r$. This is because the number
of evaluations at order $r$ is $rk^s$, and we considered the same range of
values for $k$. It was convenient to do so, because, as explained in
Section~\ref{sub:compute_all_orders}, it is possible to compute simultaneously
the vanishing estimators at orders 1 to, say $r_{\max}$ (using the same random
numbers), at the cost of obtaining the estimator at highest order, $r_{\max}$. 

\begin{figure}[htpb]
    \centering
    \includegraphics[width=0.49\textwidth]{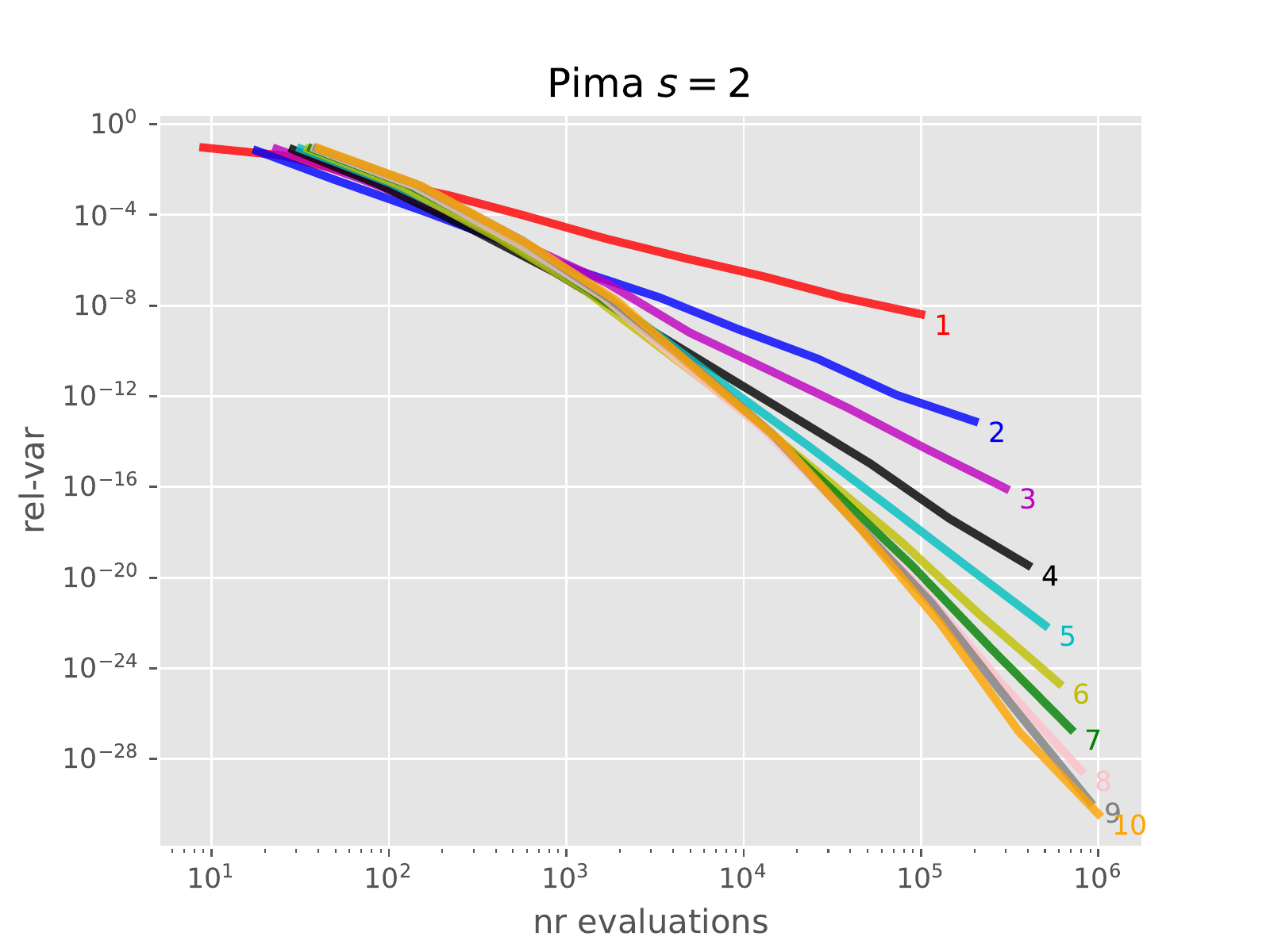}
    \includegraphics[width=0.49\textwidth]{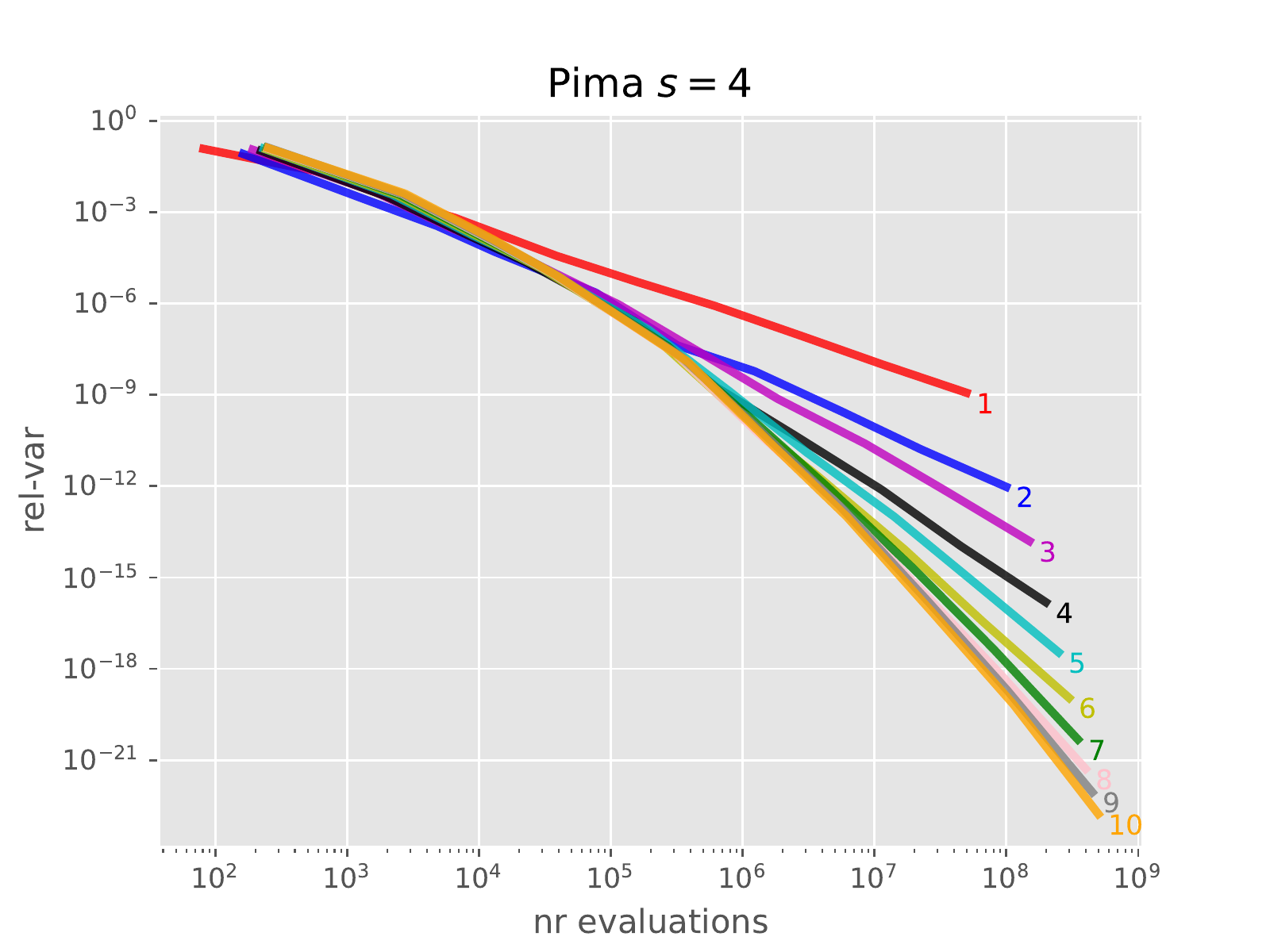}
    \includegraphics[width=0.49\textwidth]{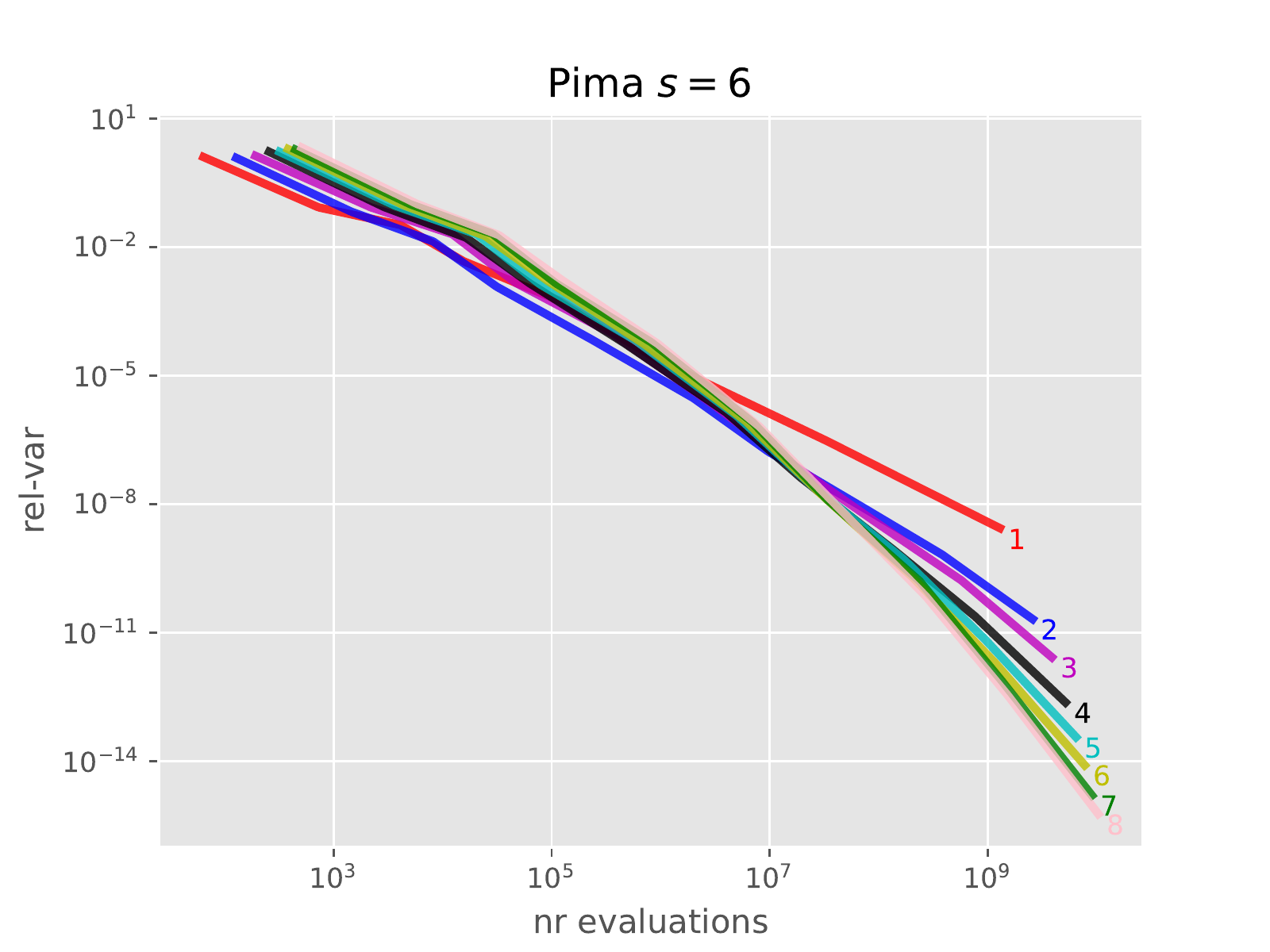}
    \includegraphics[width=0.49\textwidth]{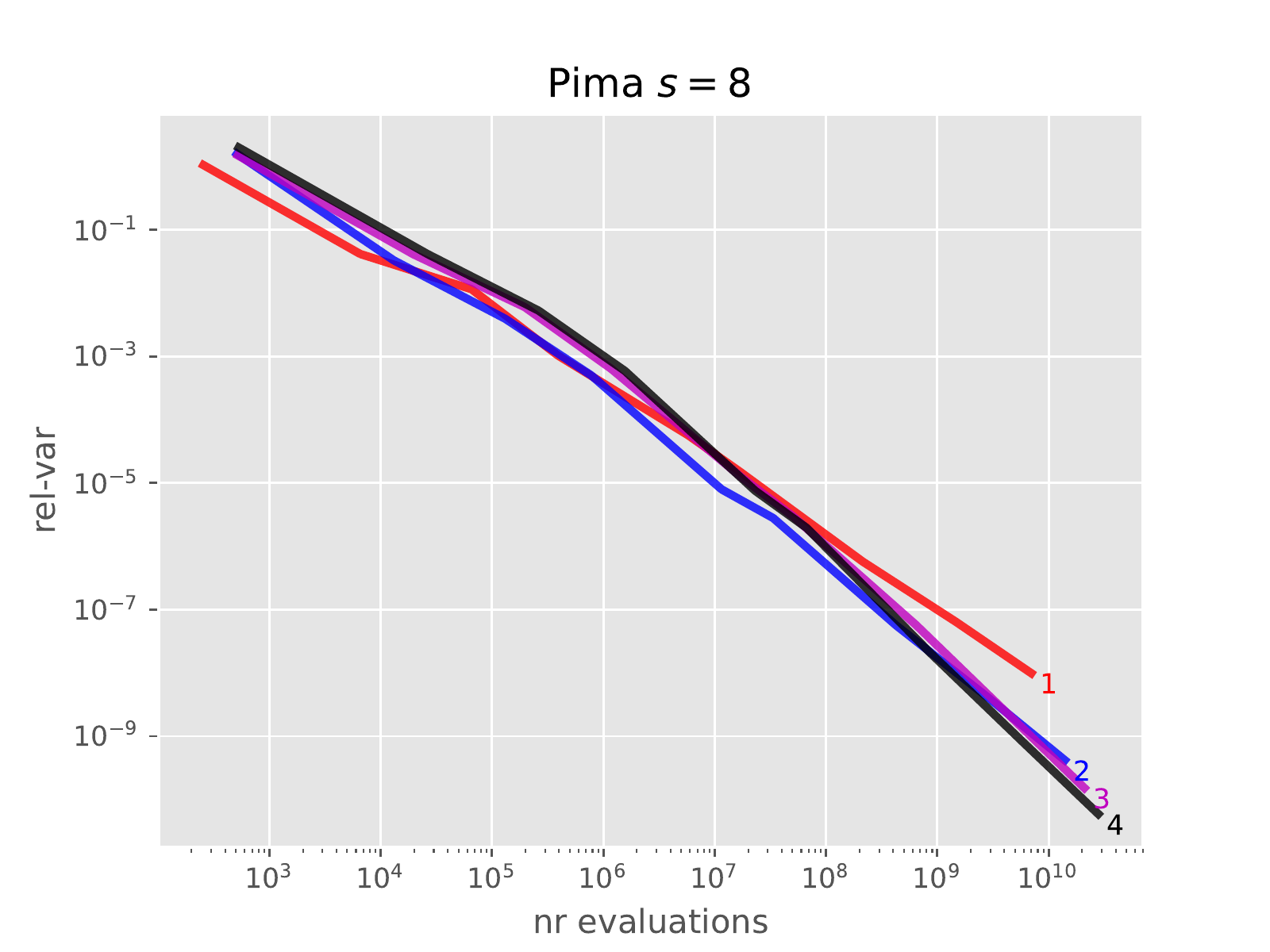}
    \caption{Relative variance of the vanishing estimator versus number of
    evaluations for Pima example, with $s=2$, 4, 6, 8.}
    \label{fig:pima}
\end{figure}

See Appendix \ref{sub:Pima2} for a comparison of the non-vanishing and vanishing
estimators on this example.

\section{Future work}\label{sec:conclusion}

The main limitation of cubic stratification is that it cannot realistically work for
$s  \gg 10$, since the number of cubes required to partition $[0,1]^s$ is
$k^s$. We could use rectangles instead, and take
$n=\prod_{i=1}^s k_i$, with $k_i$ smaller (or even $=1$) when $f$ is nearly
constant in component $i$, a bit in the spirit of \cite{sloan1998quasi}. 
Determining how we could choose the $k_i$ in a meaningful way is left for
future work.

\section*{Acknowledgments}
The authors wish to thank 
Adrien Corenflos,
Erich Novak, and 
Art Owen
for helpful remarks on a preliminary version on this manuscript. 

\bibliographystyle{siamplain}
\bibliography{complete}

\appendix

\section{Relevance of vanishing functions\label{sub:relevance_vanish}}

Consider the problem of approximating the integral of a function $g$ over
$\R^s$. A common strategy is to rewrite this integral as an expectation
with respect to a chosen, $[0, 1]^s$-supported distribution; and 
then use Monte Carlo to approximate it. 
Since $\lim_{\|x\|\rightarrow\infty} g(x)=0$, this expectation will often be an 
integral of a vanishing function. Thus, one may use instead our vanishing
estimator to approximate the integral of interest.  

The following lemma outlines a particular recipe to rewrite an integral over
$\R^s$ into the integral of a vanishing function. We designed this recipe to
make sure that the conditions on $g$ (to ensure that the transformed integrand
is indeed vanishing) are weak; essentially $g$ and its derivatives must decay
at polynomial rates at infinity. The rewritten integral is an expectation with
respect to a `Student-like' distribution, with heavy tails, whose Rosenblatt
transformation is given by $\psi$ below.

\begin{prop}\label{prop:Rs}

Let $r\geq 1$, $g\in \mathcal{C}^r(\R^s)\cap L_1(\R^s)$ be such that
\begin{align}\label{eq:cond_g}
    \lim_{\|x\|\rightarrow \infty}\left( \max_{\alpha: |\alpha|\leq r}
        D^\alpha g(x) \prod_{i=1}^s  |x_i|^{c}\right)=0, \quad\forall c>0
\end{align}
and, for some $\tau>0$,  let $\psi_s:\R^s\rightarrow(0,1)^s$ be the
$\mathcal{C}^r$-diffeomorphism defined by
\begin{equation*}
    \psi_s(u)=\left(\frac{2u_1-1}{u_1^\tau(1-u_1)^\tau},\dots,\frac{2u_s-1}{u_s^\tau(1-u_s)^\tau}\right),\quad   u\in (0,1)^s,
\end{equation*}
and let $f_{g,\psi}:[0,1]^s\rightarrow \R$  be defined by
\begin{equation}\label{eq:f_trans}
    f_{g,\psi}(u)=
        g\left(\psi_s(u)
\right)\prod_{i=1}^s\left(\frac{2}{u_i^\tau(1-u_i)^\tau}+\frac{\tau
    (2u_i-1)^2}{u_i^{\tau+1}(1-u)^{\tau+1}}\right).
\end{equation}
Then, $f_{g,\psi}\in\mathcal{C}_0^r([0,1]^s)$ and
$\mathcal{I}(f_{g,\psi})=\int_{\R^s}g(x)\dd x$.
\end{prop}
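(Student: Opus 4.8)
The plan is to prove the two conclusions separately, dealing first with the integral identity (which is essentially a change of variables) and then with the membership in $\Cfo{r}$, where essentially all the work lies. For the identity, I would first record that $\psi_s$ acts coordinatewise, $\psi_s(u)=(\phi(u_1),\dots,\phi(u_s))$ with $\phi(t)=(2t-1)\,t^{-\tau}(1-t)^{-\tau}$, so its Jacobian matrix is diagonal. A direct differentiation gives $\phi'(t)=t^{-\tau-1}(1-t)^{-\tau-1}\{2t(1-t)+\tau(2t-1)^2\}$, which, after clearing denominators, is exactly the $i$-th factor appearing in \eqref{eq:f_trans}. Since $\phi'>0$ on $(0,1)$ and $\phi(t)\to-\infty$ as $t\to0^+$ and $\phi(t)\to+\infty$ as $t\to1^-$, the map $\phi$ is an increasing $\mathcal{C}^\infty$-bijection of $(0,1)$ onto $\R$, so $\psi_s$ is a $\mathcal{C}^r$-diffeomorphism of $(0,1)^s$ onto $\R^s$ with $\det J_{\psi_s}(u)=\prod_{i=1}^s\phi'(u_i)>0$. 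Hence $f_{g,\psi}(u)=g(\psi_s(u))\,\det J_{\psi_s}(u)$ on $(0,1)^s$, and, because $g\in L_1(\R^s)$, the change-of-variables theorem gives
\[
\mathcal{I}(f_{g,\psi})=\int_{(0,1)^s}g(\psi_s(u))\,\det J_{\psi_s}(u)\,\du=\int_{\R^s}g(x)\,\dd x ,
\]
the boundary of the cube being Lebesgue-null.

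For the regularity claim I would next establish a structural differentiation formula. Writing $f_{g,\psi}=(g\circ\psi_s)\cdot J$ with $J(u)=\prod_i\phi'(u_i)$ and using that $\psi_s$ is separable, Fa\`a di Bruno's formula for $g\circ\psi_s$ combined with the Leibniz rule for the product with $J$ shows that, for every $\alpha$ with $|\alpha|\le r$,
\[
D^\alpha f_{g,\psi}(u)=\sum_{\beta:\,|\beta|\le|\alpha|}(D^\beta g)(\psi_s(u))\prod_{i=1}^s R_{i,\beta,\alpha}(u_i),
\]
a finite sum in which each $R_{i,\beta,\alpha}$ is a rational function of the single variable $u_i$ whose only poles lie at $u_i=0$ and $u_i=1$ (they are assembled from the derivatives $\phi^{(m)}(u_i)$, $m\le r$, each of which is rational with poles only at $0$ and $1$). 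Since every $D^\beta g$ with $|\beta|\le r$ is continuous and bounded on $\R^s$, each term is continuous on $(0,1)^s$, so $f_{g,\psi}$ is $\mathcal{C}^r$ on the open cube; to conclude $f_{g,\psi}\in\Cfo{r}$ it then suffices to show that every such term tends to $0$ as $u$ approaches any boundary point $u_0\in\partial[0,1]^s$, which simultaneously supplies the continuous $\mathcal{C}^r$-extension to the closed cube and the vanishing of all derivatives up to order $r$ on the boundary.

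Fix such a $u_0$ and set $S=\{i:u_{0,i}\in\{0,1\}\}\neq\emptyset$. For $i\in S$ the factor $R_{i,\beta,\alpha}$ blows up at a fixed rate, and using $\phi(t)\sim-t^{-\tau}$ as $t\to0^+$ and $\phi(t)\sim(1-t)^{-\tau}$ as $t\to1^-$, a pole of order $\nu$ translates into $|R_{i,\beta,\alpha}(u_i)|\le C\,(1+|x_i|^{\nu/\tau})$ with $x_i=\phi(u_i)$, whereas for $i\notin S$ the factor stays bounded. Thus each term is dominated by $|D^\beta g(x)|\prod_{i\in S}C\,(1+|x_i|^{\nu_i/\tau})$ with $x=\psi_s(u)$ and $x_i\to\pm\infty$ for $i\in S$, and the decay hypothesis \eqref{eq:cond_g}, used with a constant $c$ exceeding $\max_i\nu_i/\tau$, is precisely what dominates these polynomial factors and forces the product to $0$. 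The delicate, and in my view principal, difficulty is to carry this domination out \emph{uniformly up to the boundary and across all faces at once}: when $u_0$ lies on a low-dimensional face, some coordinates with $i\notin S$ may approach $1/2$, where $x_i=\phi(u_i)\to0$, so one must verify that \eqref{eq:cond_g} still controls the growth coming only from the coordinates in $S$ rather than from all coordinates simultaneously. I would handle this by isolating the bounded coordinates (bounding their $R$-factors by constants) and invoking \eqref{eq:cond_g} only along the directions of $S$, while checking that the resulting constants stay uniform as $u_0$ ranges over $\partial[0,1]^s$; confirming that the hypothesis is strong enough to furnish this uniform, face-by-face control is the crux of the argument.
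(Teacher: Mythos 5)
Your proposal follows essentially the same route as the paper's proof, which (i) differentiates the product $(g\circ\psi_s)\cdot\prod_{i}\psi_1'(u_i)$ by the Leibniz rule, (ii) expands $D^{\nu}(g\circ\psi_s)$ with the multivariate Fa\`a di Bruno formula of Constantine and Savits, and (iii) records that the derivatives of $\psi_1$ blow up at worst like $\left(u(1-u)\right)^{-(a+\tau)}$ at the boundary (the paper writes this for $\psi_1^{-1}$, apparently a slip for $\psi_1$), before invoking \eqref{eq:cond_g}. Your Leibniz-plus-Fa\`a-di-Bruno decomposition, together with the translation of pole orders into polynomial growth in $|x_i|$ via $|x_i|\sim\left(u_i(1-u_i)\right)^{-\tau}$, is the same argument in equivalent coordinates, and you additionally make explicit the change-of-variables step ($\phi'>0$, increasing bijection onto $\R$, $g\in L_1(\R^s)$), which the paper's proof leaves implicit. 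Two remarks. First, a cosmetic point: after Fa\`a di Bruno, the coefficient multiplying $(D^\beta g)(\psi_s(u))$ is a finite \emph{sum} of separable products of univariate rational functions, not a single product as in your display, so an extra finite summation index is needed; nothing downstream changes. Second, the step you single out as the crux --- that \eqref{eq:cond_g} involves the full product $\prod_{i=1}^s |x_i|^c$, which degenerates when some coordinate $x_i=\phi(u_i)$ stays bounded or tends to $0$ near a low-dimensional face, so that invoking the hypothesis ``only along the directions of $S$'' is not literally licensed --- is precisely the point the paper's proof passes over silently (``which, together with \eqref{eq:comp_deriv0}--\eqref{eq:comp_deriv}, shows the result''); indeed the Remark following the proposition concedes that \eqref{eq:cond_g} is stronger than needed and that only a fixed power $c_{r,s,\tau}$ is actually used, but neither the paper nor your sketch carries out the uniform, face-by-face domination in detail. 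You have therefore not introduced a gap relative to the paper's own level of rigour; if anything, you have located more honestly where the remaining work lies.
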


\begin{proof}
We have
\begin{align}\label{eq:comp_deriv0}
D^\alpha f_{g,\psi}(u) =\sum_{\nu\in \mathcal{N}_\alpha} D^{|\nu|} (g\circ
\psi_s)(u)\prod_{i=1}^s \frac{\dd^{\alpha_i-\nu_i}}{\dd u_i^{\alpha_i-\nu_i}}
\psi_1(u_i),\quad\forall u\in (0,1)^s
\end{align}
where 
\begin{equation*}
\mathcal{N}_\alpha=\{\nu\in\mathbb{N}_0^s:\,\,\nu_i\in\{0,\alpha_i\},\,i=1,\dots,s\}.
\end{equation*}

By  \cite[Theorem 1]{constantine1996multivariate}  for all $\nu\in\mathbb{N}^s$ we have
\begin{equation}\label{eq:comp_deriv}
\begin{split}
&\frac{D^{\nu} (g\circ\psi_s)(u)}{\nu!}\\
&=\sum_{\lambda\in\mathbb{N}_0^s:\,  |\lambda|\leq |\nu|} \left(D^\lambda
g
\right)(\psi_s(u))  \sum_{l=1}^{|\lambda|}\sum_{(\gamma,\beta) \in
p_l(\nu,\lambda)}\prod_{j=1}^l\frac{1}{(\beta!)(\gamma!)^{|\beta|}}\prod_{i=1}^s
\left(\frac{\dd^{\gamma_{ij}}}{\dd u_i^{\gamma_{ij}}}\psi_1(u_i)\right)^{\beta_{ij}}
\end{split}
\end{equation}
where, for all $\lambda\in\mathbb{N}_0$ with $|\lambda|\leq |\nu|$, the set $p_l(\nu,\lambda)\subset\mathbb{N}_0^s\times\mathbb{N}_0^s$ is as defined in \cite[Theorem 1]{constantine1996multivariate}.

On the other hand, it is easily checked that, as $u\rightarrow u'\in\{0,1\}$,  
\begin{align*}
\frac{\dd^a \psi_1^{-1}(u)}{\dd u^a }=\mathcal{O}\left(\left(u(1-u)
\right)^{-(a+\tau)}\right),\quad\forall a\in\mathbb{N}_0 
\end{align*}
which, together with \eqref{eq:comp_deriv0}-\eqref{eq:comp_deriv}, shows the result.
\end{proof}
\begin{remark}
Condition \eqref{eq:comp_deriv}  on $g$ is stronger than needed. Indeed, given a value of $\tau>0$, for the conclusion of Proposition \ref{prop:Rs} to hold it is enough that 
\begin{equation*}
\lim_{\|x\|\rightarrow \infty}  \max_{\alpha: |\alpha|\leq r}    D^\alpha g(x) \prod_{i=1}^s  |x_i|^{c_{r,s,\tau}}=0
\end{equation*}
for  some constant $c_{r,s,\tau}<\infty$. From the proof of the proposition we note that $c_{r,s,\tau}$ decreases with $\tau$.
\end{remark}


See also our second set of numerical experiments (Section~\ref{sub:Pima1})
for an application of this recipe to the computation of the marginal likelihood
in Bayesian inference.

\section{Comparing the non-vanishing and the vanishing  estimators\label{sub:Pima2}}

\begin{figure}[htpb]
    \centering
    \includegraphics[width=0.49\textwidth]{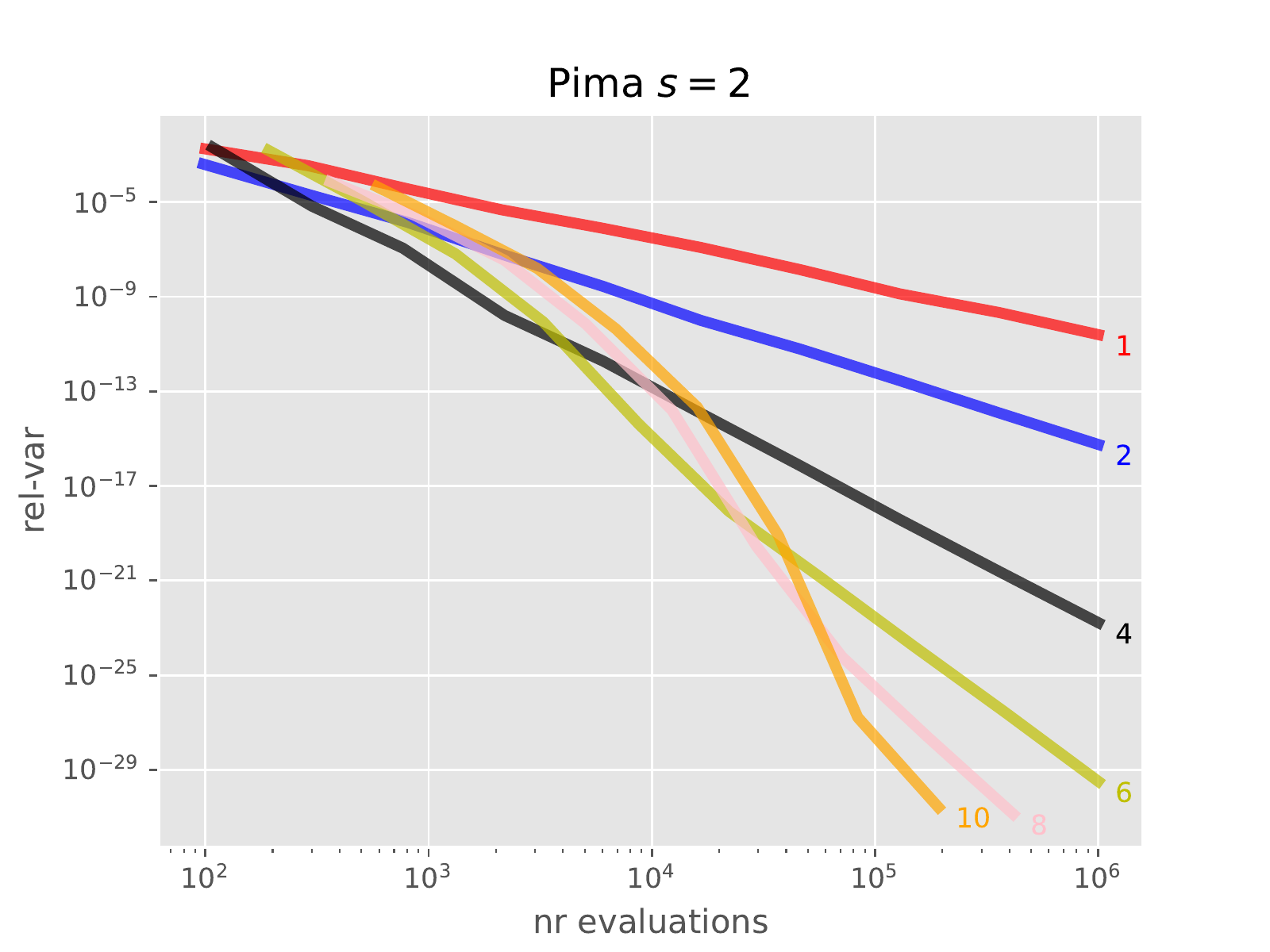}
    \includegraphics[width=0.49\textwidth]{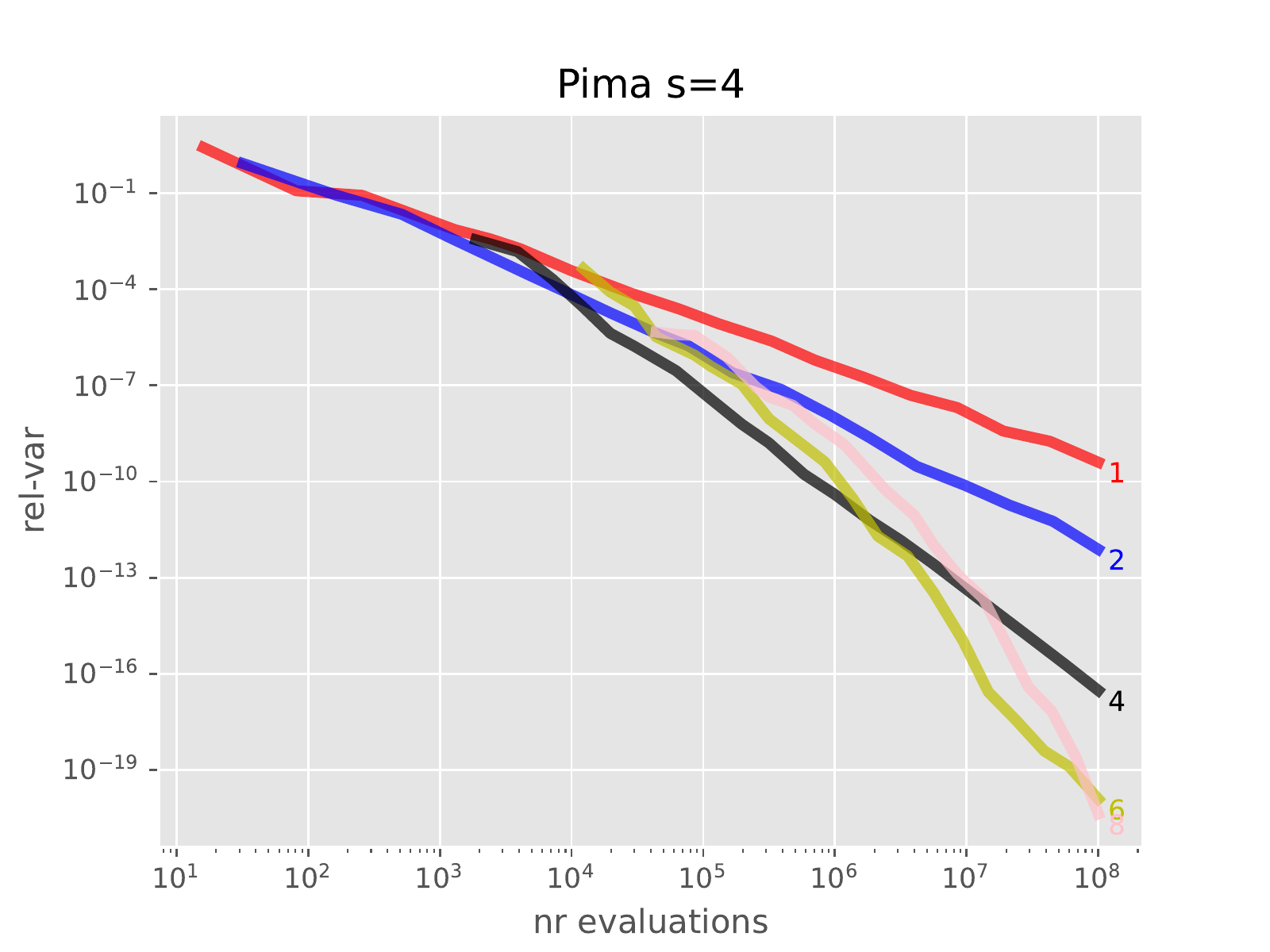}
    \caption{Relative variance of the non-vanishing estimator versus number of
    evaluations for Pima example when $s=2$ (left) and $s=4$ (right).} 
    \label{fig:nvpima}
\end{figure}

When a function $f$ is vanishing, one may use either a vanishing estimator
$\Ihork{r}(f)$ or a non-vanishing estimator $\Ihrk{r}(f)$ to compute its integral.
One may wonder which type of estimators may lead to better performance.
Figure~\ref{fig:nvpima} showcases the performance of the non-vanishing
estimator when applied to the functions of the previous example for $s=2$ and
$s=4$, and should be compared to the top panels of  Figure~\ref{fig:pima}.

One sees that, in this particular case, we do obtain better performance with 
the non-vanishing estimator  for $s=2$. (The picture is less clear for $s=4$.)
On the other hand, note that the non-vanishing estimator is less convenient to
use. As we explained in the previous example and in
Section~\ref{sub:compute_all_orders}, one can compute simultaneously the
vanishing estimators at orders 1 to some $r_{\max}$. It is then possible to
select the order that leads to best performance (using the variance estimator
described in Section~\ref{sub:variance}). On the other hand, the left panel of
Figure~\ref{fig:nvpima} shows clearly that one does not know in advance which
value of $r$ may lead to best performance when using the non-vanishing
estimator.

\section{Proofs}\label{app-proofs}

\subsection{Proof of Lemma~\ref{lemma:num_deriv2}\label{app:p-lemma:num_deriv2}}

We consider first the univariate case: $s=1$, $g\in\mathcal{C}^r([0, 1])$. 
Let
\begin{equation*}
    \Ck^{(1)} \eqdef \left\{  \frac{2j+1}{2k} \text{ s.t. }\,j\in\{0,\dots,k-1\}\right\}
\end{equation*}
which is $\Ck$ when $s=1$, and 
let $l\geq 2$ be an integer, $k\geq l$, 
\begin{equation*}
S_l \eqdef \left\{\kappa\in\{-l+1,\dots,l-1\}^l:\,\kappa_i\neq \kappa_j,\,\,\forall
i\neq j\right\}
\end{equation*}
and
$
V_l \eqdef \left\{A^{-1}_\kappa e^{(a)}:\,\ \kappa\in
S_l,\,a\in\{1,\dots,l-1\}\right\}
$
with $A_\kappa$ and $e^{(a)}$  as defined in Lemma \ref{lemma:num_deriv} (with
$r=l$). 

Let 
$$
 \tilde{C}_l=\max_{\{w_j\}_{j=1}^l\in V_l,\,\{\kappa_j\}_{j=1}^k\in S_l}\sum_{j=1}^l |w_j\kappa_j^l|.
 $$
 Then, by Lemma
\ref{lemma:num_deriv},    for all  $c'\in\Ck^{(1)}$,  all $\kappa\in
S_l$ such that $c'+\kappa_j/k\in [0,1]$ for all $j\in\{1,\dots,l\}$, and  all
$a\in\{1,\dots,l-1\}$,  there exists a set $\{w^{(a,c')}_j\}_{j=1}^l\in V_l$
such that
\begin{equation} 
    \label{eq:toShowCor}
\left|g^{(a)}(c')-\frac{\sum_{j=1}^l w^{(a,c')}_j
g(c'+\kappa_j/k)}{k^{-a}}\right|
\leq k^{-(l-a)}\|g\|_l \tilde{C}_l.
\end{equation}

We let $\mathcal{W}_r=\cup_{j=2}^{r} V_j$. We now consider the
multivariate case, $s\geq 2$, and prove the lemma  by induction on $|\alpha|_0$. 

To this aim, let $\alpha$ be such that $|\alpha|_0=1$, $c=(c_1,\dots,c_s)\in
\Ck$, $p\in\{1,\dots,s\}$ such that $\alpha_{p}=1$ and $g_c\in
\mathcal{C}^r([0,1])$ defined as (with obvious convention when
$p\in\{1,s\}$)
\begin{equation*}
g_c(c') \eqdef f\left(c_1,\dots,c_{p-1}, c', c_{p+1},\dots, c_s\right),
\quad \forall c'\in[0,1].
\end{equation*}

Next, let $\{c'_j\}_{j=1}^r$ be $r$ distinct elements of $\Ck^{(1)}$ such that
$|c_{p}-c'_j|\leq (r-1)/k$ for all $j\in\{1,\dots,r\}$, and let
$\kappa^{\alpha}_j=k(c_j-c_{p})$ for all $j$. Note that the resulting
vector $\kappa^{\alpha}$ is such that $\kappa^{\alpha}\in S_r$. Then, applying 
\eqref{eq:toShowCor} with $l=r$, $a=|\alpha|$, $c'=c_{p}$,
$\kappa=\kappa^{\alpha}$  and  $g=g_c$, it follows that there exists a set
$\{w^{\alpha}_j\}_{j=1}^r\in V_r$ such that
\begin{align}
  \left|D^{\alpha}f(c)-\frac{\sum_{j=1}^r w^{\alpha}_j
        g_c(c'_j)}{k^{-|\alpha|}}\right|
  &=\left|g_c^{(|\alpha|)}(c_{p})-\frac{\sum_{j=1}^r w^{\alpha}_j
    g_c(c_{j_\alpha}+\kappa_j/k)}{k^{-a}}\right| \nonumber\\
  &\leq k^{-(r-|\alpha|)}\|g_c\|_r \tilde{C}_r \nonumber\\
  &\leq k^{-(r-|\alpha|)}\|f\|_r \tilde{C}_r.  \label{eq:toShowCor2}
\end{align}

Then, since $c'_j\in\Ck^{(1)}$  for all $j\in\{1,\dots,r\}$ if follows that
there exist a set $\{c^{(j)}\}_{j=1}^r\in\Ck$ such that $ g_c(c'_j)=f(c^{(j)})$
for all $j\in\{1,\dots,r\}$. Noting that $r=\prod_{i=1}^{|\alpha|_0} (r-i+1)$
if $|\alpha|_0=1$, the conclusion of the lemma    holds  with
$C_{|\alpha|,s}=\tilde{C}_r$  for an $\alpha$ such that  $|\alpha|_0=1$.

We now let $\alpha$ be such that $|\alpha|_0\geq 2$ and
$\alpha'\in\mathbb{N}_0^s$ be such that $|\alpha'|_0=|\alpha|_0-1$ and such
that there exists a unique $p\in\{1,\dots,s\}$ for which
$\alpha'_j=\alpha_j$ for all $j\neq p$. Let  $c=(c_1,\dots,c_{s})\in\Ck$
and  $g_c\in \mathcal{C}^{r-|\alpha|'}([0,1])$ be defined by (with obvious
convention when $p\in\{1,s\}$)
\begin{equation*}
g_c(c')=D^{\alpha'}f\left(c_1,\dots,c_{p-1}, c', c_{p+1},\dots, c_s
\right),\quad c'\in[0,1].
\end{equation*}
Note that  $|\alpha|=|\alpha'|+\alpha_{p}$, and thus $D^\alpha
f(c)=g^{(\alpha_{p})}(c_{p})$. 

We now let $\{c'_j\}_{j=1}^{r-|\alpha'|}$ be $r-|\alpha'|$ distinct elements of
the set $\Ck^{(1)}$ such that $|c_{p}-c'_j|\leq (r-|\alpha'|-1)/k$
for all $j\in\{1,\dots,r-|\alpha'|\}$, and 
$\kappa^{\alpha}_j=k(c_j-c_{p})$ for all $j$. Note that the resulting
vector $\kappa^{\alpha}$ is such that $\kappa^{\alpha}\in S_{r-|\alpha'|}$ and
let $\{c^{(j)}\}_{j=1}^{r-|\alpha'|}\subset\Ck$ be such that  (with obvious
convention when $p\in\{1,s\}$)
\begin{equation*}
c^{(j)}=(c_1,\dots,c_{p-1}, c'_j, c_{p+1},\dots, c_s),\quad\forall j\in\{1,\dots,r-|\alpha'|\}.
\end{equation*}

Then, applying  \eqref{eq:toShowCor} with $l=r-|\alpha'|$,
$a=\alpha_{p}$, $c'=c_{p}$, $\kappa=\kappa^{(\alpha)}$  and 
$g=g_c$, it follows that there exists a set
$\{w^{(p)}_j\}_{j=1}^{r-|\alpha'|}\in V_{r-|\alpha'|}$ such that
\begin{equation}\label{eq:ind_1}
\begin{split}
\left|D^{\alpha}f(c)-\frac{\sum_{j=1}^{r-|\alpha'|} w^{(p)}_j
D^{\alpha'}f(c^{(j)})}{k^{-\alpha_{p}}}
\right|&=\left|D^{\alpha}f(c)-\frac{\sum_{j=1}^{r-|\alpha'|} w^{(p)}_j
g_c(c'_j)}{k^{-\alpha_{p}}}  \right|\\
&\leq k^{-(r-|\alpha'|-\alpha_{p})} \|g_c\|_{r-|\alpha'|} \tilde{C}_{r-|\alpha'|}\\
&\leq k^{-(r-|\alpha'|-\alpha_{p})} \|f\|_{r} \tilde{C}_{r-|\alpha'|}\\
&= k^{-(r-|\alpha|)} \|f\|_{r} \tilde{C}_{r-|\alpha'|}.
\end{split}
\end{equation}

To proceed further for $j\in\{1,\dots,r-|\alpha'|\}$ let
\begin{equation*}
    \widehat{D}^{\alpha'}f(c^{(j)})
=k^{|\alpha'|}\sum_{q=1}^{l_{r, \alpha' }} w^{(j)}_{q} f(c^{(j,q)})
\end{equation*}
where $\{ w^{(j)}_{q}\}_{q=1}^{l_{r,\alpha'}}$ and
$\{c^{(j,q)}\}_{q=1}^{l_{r,\alpha'}}$ verify the conditions of the lemma for
$c=c^{(j)}$ and are such that
\begin{equation}\label{eq:ind_2}
    \left|\widehat{D}^{\alpha'}f(c^{(j)})-D^{\alpha'}f(c^{(j)})\right|
\leq k^{-(r-|\alpha'|)} \|f\|_r C_{|\alpha'|,s}
\end{equation}
for some constant $C_{|\alpha'|,s}<\infty$. By the induction hypothesis, there exist sets 
$\{w^{(j)}_{q}\}_{q=1}^{l_{r,\alpha'}}$ and
$\{c^{(j,q)}\}_{q=1}^{l_{r,\alpha'}}$ that verify these conditions. 

We now let
\begin{equation*}
\widehat{D}^{\alpha}_{f}(c)=
k^{\alpha_p}\sum_{j=1}^{r-|\alpha'|} w_{j}^{(p)} \widehat{D}^{(\alpha')}_{f}(c^{(j)})
\end{equation*}
and remark that
\begin{align*}
\widehat{D}^{\alpha}_{f}(c)
& =k^{\alpha_p+|\alpha'|}\sum_{j=1}^{r-|\alpha'|} w_{j}^{(p)}
    \sum_{q=1}^{l_{r, \alpha' }} w^{(j)}_{q} f(c^{(j,q)}) \\ 
&=k^{|\alpha|}\sum_{j=1}^{(r-|\alpha'|)l_{r,\alpha'}}\tilde{w}_j f(\tilde{c}_j)=k^{|\alpha|}\sum_{j=1}^{l_{r,\alpha}}\tilde{w}_j f(\tilde{c}_j)
\end{align*}
where the last equality uses the fact that
\begin{equation*}
(r-|\alpha'|)l_{r,\alpha'}=(r-|\alpha'|)\prod_{i=1}^{|\alpha'|_0} (r-i+1)=\prod_{i=1}^{|\alpha|_0} (r-i+1) 
\end{equation*}
while the penultimate equality holds for a suitable definition of  $\{\tilde{w}_j\}_{j=1}^{l_{r,\alpha}}$ and of $\{\tilde{c}_j\}_{j=1}^{l_{r,\alpha}}$. 

Under the induction hypothesis, each $w^{(j)}_{q}$ is the product of
$|\alpha'|_0$ elements of $\mathcal{W}_r$, and   since each $w_{j}^{p}$
belongs to this set it follows that  each $\tilde{w}_j$ is the product of
$|\alpha'|_0+1=|\alpha|_0$ elements of $\mathcal{K}_r$, as required. It is also
clear that, under the induction hypothesis and the conditions on
$\{c_j\}_{j=1}^{r-|\alpha'|}$ imposed above, the set
$\{\tilde{c}_j\}_{j=1}^{l_{r,\alpha}}$ verifies the assumption of the lemma.

Finally, using \eqref{eq:ind_1} and \eqref{eq:ind_2}, we have
\begin{align*}
\left|D^{\alpha}f(c)-\frac{\sum_{j=1}^{r-|\alpha'|} w_{j}^{p}
\widehat{D}^{(\alpha')}_{f(c_j)}}{k^{-\alpha_{p}}}\right| \leq 
&\left|D^{\alpha}f(c)-\frac{\sum_{j=1}^{r-|\alpha'|} w^{(p)}_j
D^{\alpha'}f(c^{(j)})}{k^{-\alpha_{p}}}  \right|\\
&+\sum_{j=1}^{r-|\alpha'|}
\frac{w^{p}_{j}}{k^{-\alpha_{p}}}\left|\widehat{D}^{(\alpha')}_{f(c_j)}-D^{\alpha'}(c_j)\right|\\
\leq & k^{-(r-|\alpha|)} \|f\|_r \tilde{C}_{r-|\alpha'|}\\
& +k^{\alpha_{p}}\left(\sum_{j=1}^{r-|\alpha'|}|w^{p}_j|\right)  k^{-(r-|\alpha'|)}  \|f\|_r C_{|\alpha'|,s}\\
\leq & k^{-(r-|\alpha|)}\, \|f\|_r
\left(\tilde{C}_{r-|\alpha'|}+\tilde{C}_{r-|\alpha'|}C_{|\alpha'|,s}\right)\\
\leq & k^{-(r-|\alpha|)}\, \|f\|_r C_{|\alpha|,s}
\end{align*}
with $C_{|\alpha|,s}=\tilde{C}_{r-|\alpha'|}(1+C_{|\alpha'|,s})$. The proof is complete.

\subsection{Proof of Lemma~\ref{lemma:main}}\label{p-lemma:main}

Below we only prove the lemma for $\widetilde{\mathcal{I}}_{r,k}(f)$, the proof
for $\widehat{\mathcal{I}}_{r,k}(f)$ being identical.

Let $k\geq r$, $c\in\Ck$, and $h_{k,c}:[-1/2k,1/2k]^s\rightarrow\R$ 
be defined as
$h_{k,c}(u) \eqdef\bar{h}_{k,c}(u)-\E[\bar{h}_{k,c}(U_c)]$ where 
\begin{equation*}
    \bar{h}_{k,c}(u) \eqdef  f(c+u)-\sum_{l=1}^{r-1} \sum_{|\alpha|=l}
    \frac{\widehat{D}^{\alpha}_{k}f(c)}{\alpha!}\left( u^\alpha-
    \prod_{j=1}^s d_{k}(\alpha_j)\right).
\end{equation*}
Then,   $\widetilde{\mathcal{I}}_{r,k}(f)=k^{-s}\sum_{c\in\Ck}  \bar{h}_{k,c}(U_c)$ 
and, since $\E[\widetilde{\mathcal{I}}_{r,k}(f)]=\mathcal{I}(f)$,  we have 
\begin{equation*}
\Itrk{r}(f)-\mathcal{I}(f) = \frac{1}{k^s}\sum_{c\in\Ck}h_{k,c}(U_c).
\end{equation*}

To prove the second part of the lemma let $k\geq r$, $c\in  \Ck$ and
$u\in(-1/2k,1/2k)^s$. Then, using \eqref{eq:Taylor} and with $R_{f,r}$ as in
\eqref{eq:Taylor2},
\begin{align*}
    f(c+u)&=f(c)+ \sum_{l=1}^{r-1} \sum_{|\alpha|=l}\frac{D^\alpha f(c)}{\alpha !}  
 u^\alpha+ R_{f,r}(c,u) 
\end{align*}
so that
\begin{equation}\label{eq:bar_xi}
\begin{split}
h_{k,c}(u) &=
  \sum_{l=1}^{r-1}\sum_{|\alpha|= l} \frac{D^\alpha
f(c)-\widehat{D}^{\alpha}_{k}f(c)}{\alpha!} \left(U^\alpha_c-\prod_{j=1}^s
d_{k}(\alpha_j)\right) \\ 
& +R_{f,r}(c,u) -\E[R_{f,r}(c,U_c)].
\end{split}
\end{equation}
To proceed further, remark that for all $\alpha$ we have
\begin{equation}\label{eq:U1}
u^\alpha\leq (2k)^{-|\alpha|}
\end{equation}
and thus
\begin{equation}\label{eq:R1}
|R_{f,r}(c,u)|\leq k^{-r}  2^{-r}\|f\|_r \sum_{\alpha:|\alpha|=r}\frac{1}{\alpha !}.
\end{equation}
In addition, using \eqref{eq:U1} and noting that $d_{k}(j)\leq k^{-j}$ for all
$j\in\mathbb{N}$, we have
\begin{align}\label{eq:P1}
\left|u^\alpha-\prod_{j=1}^s d_{k}(\alpha_j)\right|
  \leq  (2k)^{-|\alpha|}+k^{-|\alpha|}=k^{-|\alpha|}(2^{-|\alpha|}+1)
\end{align}
while, letting $\bar{C}_{r,s}=\max_{j\in\{1,\dots,r-1\}}C_{j,s}$ with $\{C_{j,s}\}_{j=1}^{r-1}$ as in Lemma \ref{lemma:num_deriv2},
\begin{align}\label{eq:D1}
|D^\alpha f(c)-\widehat{D}^{\alpha}_{k}f(c)|\leq C_{r,s}\|f\|_r
k^{-(r-|\alpha|)}.
\end{align}
Therefore, using \eqref{eq:bar_xi} and \eqref{eq:R1}-\eqref{eq:D1}, it follows that
\begin{align}\label{eq:bound_xi}
|h_{k,c}(u)|\leq \widehat{C}_{s,r}\|f\|_r k^{-r},\quad\forall c\in\Ck,\quad\forall u\in(-1/2k,1/2k)^s
\end{align}
where
\begin{align*}
\widehat{C}_{s,r}= 2 C_{r,s} \sum_{l=1}^{r-1} \sum_{|\alpha|=l}\frac{1}{\alpha !}+2^{-r+1}\sum_{\alpha:|\alpha|=r}\frac{1}{\alpha !}.
\end{align*}
The proof is complete.

\subsection{Proof of Lemma \ref{lemma:var}\label{app:p-thm:clt}}

We prove the result for the estimator  $\widehat{\mathcal{I}}_{r,k}(f)$,
the proof for $\widetilde{\mathcal{I}}_{r,k}(f)$ being identical.

Recall that, for $[a,b]\subset[0,1]^s$ and $f:[0,1]^s\rightarrow\R$, function 
$f_{[a,b]}:[0,1]^s\rightarrow\R$ is defined as 
\begin{equation*}
f_{[a,b]}(u) \eqdef f(a+u(b-a)),\quad u\in [0,1]^s
\end{equation*}
where  the product $u(b-a)$ must be understood as being component-wise. 

We assume without loss of generality that the elements of the set 
$\{\tilde{B}_{q}\}_{q=1}^{p_{r,k}}$ are labelled so that
$\int_{\tilde{B}_q\cap \tilde{ B}_{q'}}\dd u=0$ whenever 
$q,q' \leq \lfloor k/r\rfloor^s$. 
(Recall that the number of $\tilde{B}_q$ is $p_{r,k}=\lceil k/r\rceil^s >
\lfloor k/r \rfloor^s$ when $k/r\notin \mathbb{N}$.)

Then letting
\begin{equation*}
    E_{r,k}=[0,1]^s\setminus \cup_{q=1}^{\lfloor k/ r \rfloor^s}\tilde{B}_q,
\end{equation*}
it follows that
\begin{align*}
  \widehat{\mathcal{I}}_{r,k}(f)\dist\frac{r^s}{k^s}\sum_{q=1}^{\lfloor k/r\rfloor^s}  
  \widehat{\mathcal{I}}_{r,r}(f_{\tilde{B}_q})
+ \widehat{\mathcal{I}}_{r,k}(f\ind_{E_{r,k}}).
\end{align*}
Since these $\lfloor k / r \rfloor^s + 1$ terms are independent, we have 
\begin{equation}\label{eq:var_lower}
\begin{split}
\var(\widehat{\mathcal{I}}_{r,k}(f))&=\frac{r^{2s}}{k^{2s}}\sum_{q=1}^{\lfloor
k/r\rfloor^s}\var\left(\widehat{\mathcal{I}}_{r,r}(f_{\tilde{B}_q})\right)+
\var\left(\widehat{\mathcal{I}}_{r,k}(f\ind_{E_{r,k}})\right).
\end{split}
\end{equation}

We now let $q\in\{1,\dots,\lfloor k/r\rfloor^s\}$ and follow the same lines 
as in \cite[Theorem 2]{haber1969stochastic_bis} in order to compute
$\lim_{p\rightarrow\infty}
\var\left(\widehat{\mathcal{I}}_{r,r}(f_{\tilde{B}_q})\right)$.

To this aim let $\tilde{c}_q$ denote the centre of $\tilde{B}_q$ so that, using
Taylor's theorem, \eqref{eq:Taylor}, we have for all $u\in [0,1]^s$
\begin{align}\label{eq:f_tay}
 f(u)= \sum_{l=0}^{r} \sum_{\alpha:|\alpha|=l}\frac{D^\alpha f(\tilde{c}_q)}{\alpha !}
 ( u-\tilde{c}_q)^{\alpha} + R_{f,r}(\tilde{c}_q,u)
\end{align}
where the function  $R_{f,r}$ is such that \cite[Theorem 5.11, p. 187]{amann2005analysis}
\begin{align}\label{eq:Taylor3}
\lim_{\delta\searrow 0}\,\delta^{-r} \sup_{u,v\in[0,1]^s\,:\|u-v\|\leq\delta}|R_{f,r}(u,v-u)|=0.
\end{align}

Next, let $g:[0,1]^s\rightarrow\R$ be defined by
\begin{equation*}
 g(u) \eqdef \sum_{\alpha: |\alpha|=r}\frac{D^\alpha f(\tilde{c}_q)}{\alpha !}
(u- \tilde{c}_q)^{\alpha}+  R_{f,r}\left(\tilde{c}_q,u\right),\quad   u\in [0,1]^s 
\end{equation*}
and $h \eqdef f-g$.  By Theorem~\ref{thm:main},  
$\widehat{\mathcal{I}}_{r,r}(h_{\tilde{B}_q})=v_q$ a.s. with 
$v_q \eqdef \int h_{\tilde{B}_q}(u)\dd u$ 
since $h$ is a polynomial of degree at most $r-1$.
Hence, using~\eqref{eq:f_tay}, we have
\begin{align*}
    \widehat{\mathcal{I}}_{r,r}(f_{\tilde{B}_q}) - v_q
& = \widehat{\mathcal{I}}_{r,r}(g_{\tilde{B}_q})\\
& =\sum_{\alpha: |\alpha|=r}\frac{D^\alpha f(\tilde{c}_q)}{\alpha !}\,\widehat{\mathcal{I}}_{r,r}
\left(\{(\cdot-\tilde{c}_q)^{\alpha}\}_{\tilde{B}_q}\right)
+ \widehat{\mathcal{I}}_{r, r}\left(R_{f,r}(\tilde{c}_q,\cdot)|_{\tilde{B}_q} \right)\\
& = \frac{r^r}{k^{r}}\sum_{\alpha: |\alpha|=r}\frac{D^\alpha f(\tilde{c}_q)}{\alpha !}
\widehat{\mathcal{I}}_{r,r}\left( (\cdot-1/2)^{\alpha} \right)
+ \widehat{\mathcal{I}}_{r,r}\left(r_{k,q}
\right)
 \end{align*} 
where the function $r_{k,q}$ is defined as 
$r_{k,q}(u) \eqdef R_{f,r}(\tilde{c}_q,\tilde{c}_q-r/(2k)+ur/k)$ for $u\in[0,1]^s$.

This implies that
\begin{align*}
 \E\big[\widehat{\mathcal{I}}_{r,r}(g_{\tilde{B}_q})\big]
& =  \frac{r^r}{k^{r}}\sum_{\alpha: |\alpha|=r}\frac{D^\alpha f(\tilde{c}_q)}{\alpha !} 
  \int_{[0,1]^s} (u-1/2)^{\alpha}\dd u+\int_{[0,1]^s}r_{k,q}(u)\dd u
\end{align*}
and letting
\begin{align*}
    M_\alpha & \eqdef \widehat{\mathcal{I}}_{r,r}\left( (\cdot-1/2)^{\alpha}\right)
-\int_{[0,1]^s} (u-1/2)^{\alpha}\dd u \\
    R_{k, q} & \eqdef \Ih_{r,r}(r_{k, q}) - \int_{[0, 1]^s} r_{k, q}(u)du
\end{align*}
 for all $\alpha$ such that $|\alpha|=r$, we have
 \begin{align*}
 \var\left(\widehat{\mathcal{I}}_{r,r}(f_{\tilde{B}_q})\right) = 
 &\E\left[\left\{\frac{r^{r}}{k^{r}}\sum_{\alpha: |\alpha|=r}\frac{D^\alpha
 f(c)}{\alpha !}M_\alpha+ R_{k, q}] \right\}^2\right]\\
 = &\frac{r^{2r}}{k^{2r}}  \sum_{\alpha,\,\alpha': |\alpha|=|\alpha'|=r}
 \frac{D^\alpha f(c)}{\alpha !}\frac{D^\alpha f(c)}{\alpha' !}
 \E[M_\alpha M_{\alpha'}]+ \var(R_{k, q})\\
& + \frac{2 r^{r}}{k^{r}}\sum_{\alpha: |\alpha|=r}\frac{D^\alpha f(c)}{\alpha
!}\,\E\left[M_\alpha R_{k, q} \right].
 \end{align*}

The above computations show that
\begin{equation}\label{eq:var_terms}
\begin{split}
k^{s+2r}\,&\left(\frac{r^{2s}}{k^{2s}}\sum_{q=1}^{\lfloor k/r\rfloor^s} \var\left( \widehat{\mathcal{I}}_{r,r}(f_{\tilde{B}_q})
\right)\right)\\
= & r^{2r+s}  \sum_{\alpha,\,\alpha': |\alpha|=|\alpha'|=r}\E[M_\alpha
M_{\alpha'}]\left(\frac{r^s}{k^s}\sum_{q=1}^{\lfloor
    k/r\rfloor^s}\frac{D^\alpha f(\tilde{c}_q)}{\alpha !}\frac{D^\alpha
    f(\tilde{c}_q)}{\alpha' !}\right)\\
  &+   r^{2s} k^{2r-s}\sum_{q=1}^{\lfloor k/r\rfloor^s} \var(R_{k,q})\\
&+ 2  r^{r+2s}  k^{r-s}\sum_{q=1}^{\lfloor k/r\rfloor^s}\sum_{\alpha:
|\alpha|=r}\frac{D^\alpha f(c)}{\alpha !}\,\E\left[M_\alpha R_{k, q} \right]
\end{split}
\end{equation}
and we now study in turn each of these three terms.

To study the first term  recall that $B_k(c)$ denotes the hypercube of volume
$k^{-s}$ and centre $c\in\Ck$, and let  $\{c_j\}_{j=1}^{k^s-\lfloor
k/r\rfloor ^s r^s}$ be   the $k^s-\lfloor k/r\rfloor ^s r^s$ elements of $\Ck$
such that     
\begin{equation*}
\int_{B_k(c_j)\cap \tilde{B}_q}\dd u=0,
\quad\forall j\in\{1,\dots,k^s-\lfloor k/r\rfloor ^s r^s\},
\quad\forall q\in\{1,\dots,\lfloor k/r\rfloor ^s\}
\end{equation*}
and such that
\begin{equation*}
\left(\bigcup_{j=1}^{k^s-\lfloor k/r\rfloor ^s r^s}
    B_k(c_j)\right)\bigcup\left(\bigcup_{q=1}^{\lfloor k/r\rfloor ^s
        }\tilde{B}_q\right)=[0,1]^s,
\end{equation*}
and let $\alpha$ and $\alpha'$ be such that $|\alpha|=|\alpha'|=r$. Then, since 
\begin{align*}
\limsup_{k\rightarrow\infty}\left|\frac{1}{k^s}\sum_{j=1}^{k^s-\lfloor
k/r\rfloor ^s r^s} D^\alpha f(c_j)\, D^{\alpha'} f(c_j)\right|&\leq  \limsup_{k\rightarrow\infty} \frac{k^s-\lfloor k/r\rfloor ^s r^s}{k^s}\|f\|_r\\
&\leq \|f\|_r \limsup_{k\rightarrow\infty}  \left(1-(1-r/k)^s
\right)=0
\end{align*}
and because the Riemann sum
\begin{equation*}
    \frac{r^s}{k^s}\sum_{q=1}^{\lfloor k/r\rfloor^s} D^\alpha
    f(\tilde{c}_q)\,D^{\alpha'} f(\tilde{c}_q)
 +\frac{1}{k^s}\sum_{j=1}^{k^s-\lfloor k/r\rfloor ^s r^s} D^\alpha f(c_j)\,D^{\alpha'} f(c_j)
\end{equation*}
converges to  $\int_{[0,1]^s}D^\alpha f(u)\,D^{\alpha'} f(u)\dd u$ as $k\rightarrow\infty$, it follows that
\begin{align}\label{eq:CLT_1}
\lim_{k\rightarrow\infty} \left\{\frac{r^s}{k^s}
\sum_{q=1}^{\lfloor k/r\rfloor^s} D^\alpha f(\tilde{c}_q)\,D^{\alpha'} f(\tilde{c}_q) \right\}
=\int_{[0,1]^s}D^\alpha f(u)\,D^{\alpha'} f(u)\dd u.
\end{align}
 
Next, using \eqref{eq:Taylor3} we have
\begin{align}\label{eq:var_terms2}
\limsup_{k\rightarrow\infty}\left\{ k^{2r-s} \sum_{q=1}^{\lfloor
k/r\rfloor^s}\var(R_{k, q})\right\}
& \leq r^{-s} \times \limsup_{k\rightarrow\infty} \left\{ k^{2r} 
    \max_{1\leq q \leq \lfloor  k/r\rfloor^s} \E[R_{k,q}^2]\right\} \nonumber \\
& =0.
\end{align}

Finally, noting that for some constant $C<\infty$ we have, $\P$-a.s., $|M_\alpha|\leq C$ for all $\alpha$ such that $|\alpha|=r$, it follows that
\begin{multline}\label{eq:var_terms3}
\limsup_{k\rightarrow\infty} \left|k^{r-s} \sum_{q=1}^{\lfloor
k/r\rfloor^s}\sum_{\alpha: |\alpha|=r}  \frac{D^\alpha f(\tilde{c}_q)}{\alpha
!}\, \E\left[M_\alpha R_{k,q}\right]\right|\\
\leq 2 C  r^{-s}\|f\|_r   \left(\sum_{\alpha: |\alpha|=r} \frac{1}{\alpha !}\right)
\limsup_{k\rightarrow\infty} \left\{ k^r
    \max_{1 \leq q\leq \lfloor k/r\rfloor^s} \E[|R_{k,q}|]\right\} = 0
\end{multline}
where the equality holds by \eqref{eq:Taylor3}.

Therefore, combining \eqref{eq:var_terms}-\eqref{eq:var_terms3}, we obtain
\begin{multline}\label{eq:part1}
\lim_{k\rightarrow\infty} k^{s+2r}\,\left\{\frac{r^{2s}}{k^{2s}}\sum_{q=1}^{\lfloor k/r\rfloor^s}
\var\left( \widehat{\mathcal{I}}_{r,r}(f_{\tilde{B}_q})
\right)\right\}\\
= r^{2r+s}\sum_{\alpha,\,\alpha': |\alpha|=|\alpha'|=r}\frac{\E[M_\alpha M_{\alpha'}]}{\alpha!\alpha'!}\int_{[0,1]^s}D^\alpha f(u)\,D^{\alpha'} f(u)\dd u 
\end{multline}
and thus, by \eqref{eq:var_lower}, to conclude the proof of the lemma it remains to show that
\begin{equation}\label{eq:part2}
\lim_{k\rightarrow\infty} \left\{
k^{s+2r}\var\left(\widehat{\mathcal{I}}_{r,k}(f\ind_{E_{r,k}})\right)\right\}=0.
\end{equation}

To this aim let $m_k \eqdef p_{r,k}-\lfloor k/r\rfloor^s$,
$\{c_j\}_{j=1}^{m_k}$ be such that 
$\cup_{j=1}^{m_k}B_k(c_j)=E_{r,k}$
and note that
\begin{align*}
\frac{k^s}{m_k} \widehat{\mathcal{I}}_{r,k}(f\ind_{E_{r,k}})
= & \frac{1}{m_k}\sum_{j=1}^{m_k}\frac{f(c_j+U_{c_j})+f(c_j-U_{c_j})}{2}\\
&- \frac{1}{m_k}\sum_{j=1}^{m_k} \sum_{l=1}^{\lfloor
(r-1)/2\rfloor}\sum_{\alpha:\, |\alpha|=2l}\frac{\widehat{D}^{\alpha}_{k,
f(c_j)}}{\alpha!}\left( U^\alpha_{c_j}- \prod_{j=1}^s d_{k}(\alpha_j)\right).
\end{align*}

Then, using Lemma~\ref{lemma:main}  
\begin{align*}
\var\left((k^s/m_k)\widehat{\mathcal{I}}_{r,k}(f\ind_{E_{r,k}})\right)
& \leq \widehat{C}^2_{s,r}\|f\|^2_r m_k^{-1} k^{-2r}\\
& \Leftrightarrow \var\left( \widehat{\mathcal{I}}_{r,r}(f_{E_{r, k}}\right)
\leq m_k k^{-2s-2r}\widehat{C}^2_{s,r}\|f\|^2_r
\end{align*}
where $\widehat{C}_{s,r}<\infty$ is as in  Lemma \ref{lemma:main}.

Therefore, noting that
\begin{equation*}
m_k=\lceil k/r\rceil^s-\lfloor k/r\rfloor^s\leq k^{s}
\left\{ (r^{-1}+k^{-1})^s-(r^{-1}-k^{-1})^s\right\},
\end{equation*}
we have
\begin{multline*}
\limsup_{k\rightarrow\infty}  
\left\{k^{2r+s} \var\left( \widehat{\mathcal{I}}_{r,r}(f_{E_{r,k}})
\right)\right\} \\
\leq \limsup_{k\rightarrow\infty}
\left\{ (r^{-1}+k^{-1})^s-(r^{-1}-k^{-1})^s\right\} \widehat{C}^2_{s,r}\|f\|^2_r
=0.
\end{multline*}

This shows \eqref{eq:part2} and the proof of the lemma is complete.

\subsection{Proof of Lemma~\ref{lemma:expectation}}


Recall that $B_k(c)$ denotes the hyper-cube $[c-1/2k, c+1/2k] = \prod_{i=1}^s
[c_i-1/2k, c_i+1/2k]$, with centre $c$ and volume $k^{-s}$. 
Treating $k$ as fixed from now on, we define for $j\in\mathbb{N}_0$,
$\mathcal{B}_{j,1} = \{ B_k(c) \}_{c\in\Cx{j, k}}$, and, for $l=3, 5, \ldots$, 
we define $\mathcal{B}_{j,l}$ to be the set of hyper-cubes $B_{k/l}(c)$, which
are then unions of $l^s$ elements in $\mathcal{B}_{j, 1}$.
We also treat as fixed $\lambda \in \{\pm (2i+1),\,i\in\mathbb{N}_0\}$,
$p=(|\lambda|-1)$ and $m\geq p/2$. 


Consider a given $c\in\Cmk$.  We have
$[c-\lambda/2k,c+\lambda/2k]\in\mathcal{B}_{m,|\lambda|}$ and thus  there exist
distinct hypercubes
$\{B_{c,l}\}_{l=1}^{|\lambda|^s}$ in $\mathcal{B}_{m, 1}$ such that
\begin{equation*}
[c-\lambda/2k,c+\lambda/2k]=\bigcup_{l=1}^{|\lambda|^s} B_{c,l}.
\end{equation*}

For $U_c\sim\mathcal{U}([-1/2k,1/2k]^s)$, $\bar{g}$ defined as in the statement of the
lemma, we have
\begin{align}
\E[\bar{g}(c+\lambda U_c)]
&=k^s\int_{[-1/2k,1/2k]^s}\bar{g}(c+\lambda u)\dd u \nonumber \\
&=\frac{k^s}{|\lambda|^s}\int_{[0,1]^s\cap  [c-\lambda/2k,c+\lambda/2k]}g(u)\dd u
\nonumber \\
&=\frac{k^s}{|\lambda|^s}\sum_{l=1}^{|\lambda|^s}
\int_{[0,1]^s\cap B_{c,l}} g(u)\dd u.  \label{eq:exp}
\end{align}

To proceed further we remark that (again, recall $m\geq p/2$, otherwise this
would not be true):
\begin{equation*}
\bigcup_{c\in\Cmk }\left\{[0,1]^s\cap \cup_{l=1}^{|\lambda|^s}B_{c,l}\right\}
= \underbrace{\mathcal{B}_{0,1}\cup\dots\cup \mathcal{B}_{0,1}}_{\text{$|\lambda|^s$ times}} 
\end{equation*}
which, together with~\eqref{eq:exp}, yields
\begin{align*}
\frac{1}{k^{s}}\sum_{c\in\mathfrak{C}_{m,k}}\E[\bar{g}(c+\lambda U_c)] 
& =\frac{1}{|\lambda|^s}\sum_{c\in\Cmk}\sum_{l=1}^{|\lambda|^s}\int_{[0,1]^s\cap
B_{c,l}}g(u)\dd u \\
&=\sum_{B\in\mathcal{B}_{0,1}}\int_{B}g(u)\dd u\\
&= \int_{[0,1]^s}g(u)\dd u.
\end{align*}
The proof is complete.
\end{document}